  \let\@copyrightspace\relax
\newcommand{\lin}{H}
\newcommand{\query}{h}
\newcommand{\newquery}{b}
\newcommand{\newlin}{B}
\newcommand{\gk}{\mathbf{\newquery_k}}
\newcommand{\XOR}{{XOR}}  
\newcommand{\EQUIV}{{EQUIV}}
\newcommand{\decDNNF}{{decision-DNNF}}  
\newcommand{\DecDNNF}{{Decision-DNNF}} 
\newcommand{\decDNNFn}{{DLDD}}  
\newcommand{\FBDD}{{FBDD}}  
\newcommand{\OBDD}{{OBDD}}  
\newcommand{\ANDFBDD}{{AND-FBDD}}
\newcommand{\F}{\mathcal{F}}
\newcommand{\D}{\mathcal{D}}
\newcommand{\OO}{\mathcal{O}}
\newcommand{\PP}{\mathcal{P}}
\newcommand{\noop}{\texttt{no-op}}
\newcommand{\opi}{unit}
\newcommand{\opis}{unit}
\newcommand{\units}{U}
\newcommand{\hkunits}{U_k}
\newcommand{\hk}{\mathbf{\query_k}}
\newcommand{\Hk}{\mathbf{\lin_k}}
\newcommand{\vecX}{\mathbf{X}}
\newcommand{\cut}[1]{}
\newcommand{\commentresolved}[1]{}
\newcommand{\ie}{{\em i.e.}\xspace}
\newcommand{\eg}{{\em e.g.}\xspace}
\newcommand{\set}[1]{\{#1\}}                    % Set (as in \set{1,2,3}).
\newcommand{\setof}[2]{\{{#1}\mid{#2}\}}        % Set (as in \setof{x}{x>0}).
\newtheorem{theorem}{Theorem}[section]          	% Theorem environment.
\newaliascnt{lemma}{theorem}				% 1 alias counter
\newtheorem{lemma}[lemma]{Lemma}              	% Lemma environment.
\newaliascnt{conjecture}{theorem}			% 1 alias counter
\newaliascnt{remark}{theorem}				% 1 alias counter
\newaliascnt{corollary}{theorem}			% 1 alias counter
\newtheorem{corollary}[corollary]{Corollary}      % Corollary environment.
\newaliascnt{definition}{theorem}			% 1 alias counter
\newtheorem{definition}[definition]{Definition}    % Definition environment.
\newaliascnt{proposition}{theorem}			% 1 alias counter
\newtheorem{proposition}[proposition]{Proposition}  % proposition environment.
\newaliascnt{example}{theorem}			% 1 alias counter
\newtheorem{example}[example]{Example}  	% 2 environment.
\newcommand\ignore[1]{\relax}
\newcommand\myeq{\mathrel{\overset{\makebox[0pt]{\mbox{\normalfont\tiny\sffamily def}}}{=}}}
\begin{document}
\begin{sloppypar}

% ****************** TITLE ****************************************

%\title{Model Counting on Query Expressions is Strictly More Powerful
% than on Propositional Formulas}
\title{Model Counting of Query Expressions: \\Limitations of Propositional Methods}

\author{
Paul Beame~~~~~~~Jerry Li~~~~~~~Sudeepa Roy~~~~~~~Dan Suciu
\vspace{3mm}\\
\small{University of Washington, Seattle, WA}\\
\small{\emph{\{beame, jerryzli, sudeepa, suciu\}@cs.washington.edu}}
}

%\author{
%\alignauthor
%Jerry Li \and Sudeepa Roy \and Dan Suciu \\
%       \affaddr{University of Washington, Seattle, WA}\\
%       \email{\{sudeepa,jerryzli,suciu\}@cs.washington.edu}
%}

%\author{
% You can go ahead and credit any number of authors here,
% e.g. one 'row of three' or two rows (consisting of one row of three
% and a second row of one, two or three).
%
% The command \alignauthor (no curly braces needed) should
% precede each author name, affiliation/snail-mail address and
% e-mail address. Additionally, tag each line of
% affiliation/address with \affaddr, and tag the
% e-mail address with \email.
%
% 1st. author
%\and  % use '\and' if you need 'another row' of author names
%}
% There's nothing stopping you putting the seventh, eighth, etc.
% author on the opening page (as the 'third row') but we ask,
% for aesthetic reasons that you place these 'additional authors'
% in the \additional authors block, viz.
%\additionalauthors{Additional authors: John Smith (The Th{\o}rv\"{a}ld Group,
%email: {\texttt{jsmith@affiliation.org}}) and Julius P.~Kumquat
%(The Kumquat Consortium, email: {\small \texttt{jpkumquat@consortium.net}})}
\date{\today}
% Just remember to make sure that the TOTAL number of authors
% is the number that will appear on the first page PLUS the
% number that will appear in the \additionalauthors section.

\date{}
\maketitle

\begin{abstract}
Query evaluation in tuple-independent probabilistic databases 
is the problem of computing the probability of an answer to a query
given independent probabilities of the individual tuples in a database instance.
There are two main approaches to this problem: 
(1) in \emph{grounded inference} one first obtains the
lineage for the query and database instance as a Boolean formula,
then performs
weighted model counting on the lineage (i.e., computes the probability of the
lineage given probabilities of its independent Boolean variables);
(2) in methods known as \emph{lifted inference} or \emph{extensional
query evaluation}, one exploits the high-level structure of the query as a
first-order formula.
Although it is widely believed that lifted inference is strictly more powerful
than grounded inference on the lineage alone, no formal separation has
previously been shown for query evaluation. 
In this paper we show such a formal separation for the first time.
\par
We exhibit a class of queries for which model counting can be done in polynomial
time using extensional query evaluation, whereas the 
algorithms used in state-of-the-art exact model counters on their lineages
provably require exponential time.
Our lower bounds on the running times of these exact model counters follow
from new exponential size lower bounds on the kinds of $d$-DNNF
representations of the lineages that these model counters (either explicitly
or implicitly) produce.  Though some of these queries have been studied before,
no non-trivial lower bounds on the sizes of these representations 
for these queries were previously known.

\end{abstract}

\section{Introduction}
Model counting is the problem of computing the number, $\# \Phi$, of
satisfying assignments of a Boolean formula $\Phi$. 
In this paper we are concerned with the weighted version of model counting, 
which is the same as the probability computation problem on independent 
random variables. 
Although model counting is \#P-hard in general 
(even for formulas where satisfiability is easy to check) \cite{DBLP:journals/siamcomp/Valiant79}, there have been major
advances in practical algorithms that compute exact, weighted model
counts for many relatively complex formulas.  Exact model counting for
propositional formulas (see~\cite{DBLP:series/faia/GomesSS09} for a
survey) are based on extensions of backtracking search using the {\em
  DPLL} family of algorithms \cite{DavisPutnam60, DPLL62} that were
originally designed for satisfiability search.

The study of (weighted) model counting in this paper is %largely 
motivated by
query evaluation in tuple-independent probabilistic databases (see,
\eg, \cite{DBLP:series/synthesis/2011Suciu}):
given a (fixed) Boolean query $Q$, and 
independent probabilities of the individual tuples in a probabilistic
database $D$, compute $\Pr[Q(D)]$, the 
probability that $Q$ is true on a random instance of $D$. 
Here the propositional formula $\Phi$ is the grounding of the first-order
formula representing the query $Q$ on 
database $D$, and is called the \emph{lineage}.
We call this the \emph{grounded inference} approach:
first compute $\Phi$, then perform model counting on $\Phi$ at the
propositional level  to compute $\Pr[Q(D)]$.
			
The mismatch between the high
level representation as a first-order formula and the low level
of propositional inference was noted early on, and has given rise to
various techniques that operate at the first-order level, which are
collectively called {\em lifted inference} in statistical relational
models (see~\cite{jaeger-broeck-2012} for an extensive discussion), or
{\em extensional query evaluation} in probabilistic
databases~\cite{DBLP:series/synthesis/2011Suciu}.  These methods
exploit the high level structure of the first-order formula for the query.

It is widely believed that lifted inference, or extensional
query evaluation, is strictly more powerful than grounded inference that does
not take advantage of the high level structure.   While there have been 
examples in other contexts where provable separations have been
shown~(e.g.,~\cite{sabharwal:symchaff-journal}), %  \red{A}),
no formal separation has previously been shown in the context of query
evaluation.  We show such a formal separation for the first time.

We describe a class of queries and prove a formal statement which implies that
grounded inference in the form of
current propositional model counting algorithms requires exponential time
on any member in the class.
On the other hand, model counting for each query in this class can be done in
polynomial time in the size of the domain using the inclusion/exclusion
formula applied to the query
expression~\cite{DBLP:conf/pods/DalviSS10,DBLP:series/synthesis/2011Suciu},
which is a form of extensional query evaluation, or lifted inference.

We first review the state of the art for propositional model counting
algorithms that are based on extensions of DPLL family of algorithms
\cite{DavisPutnam60, DPLL62}.
Extensions include caching the results of
solved sub-problems \cite{MajercikLittman98}, dynamically decomposing
residual formulas into components (Relsat \cite{Bayardo+00-modelcounter}) and
caching their counts (\cite{bacchusdalmaopitassi}),
and applying dynamic component caching
together with conflict-directed clause learning (CDCL) to further
prune the search (Cachet \cite{cachet} and sharpSAT \cite{sharpSAT}).
In addition to DPLL-style algorithms that compute the counts on the fly,
model counting has been addressed through a complementary approach, known as
{\em knowledge compilation}, which converts the input formula into a
representation of the Boolean function that the formula defines and from
which the model count can be computed efficiently in the size of the
representation~\cite{Darwiche01JACM, Darwiche01, HuangDJAIR07,Dsharp}. 
Efficiency for knowledge compilation depends both on the
size of the representation and the time required to construct it.
These two approaches are quite related. 
As noted in~{\tt c2d}~\cite{HuangDJAIR07} (based on component caching) and
Dsharp~\cite{Dsharp} (based on sharpSAT), the traces of all the
DPLL-style methods yield knowledge compilation algorithms that can
produce what are known as {\em \decDNNF}
representations~\cite{HuangD-IJCAI05, HuangDJAIR07}, a syntactic
subclass of $d$-DNNF representations~\cite{Darwiche01, DarwicheM2002};

Indeed, all the methods for exact model counting surveyed
in~\cite{DBLP:series/faia/GomesSS09} (and all others of which we are
aware) can be converted to knowledge compilation algorithms that
produce \decDNNF\ representations, without any significant increase in
their running time.
A \decDNNF\ is a rooted DAG where each node
either tests a variable $Z$ and has two outgoing edges corresponding
to $Z=0$ or $Z=1$, or is an AND-node with two sub-DAGs that do not
test any variable in common;
these naturally
correspond to the two types of operations in any modern DPLL-style
algorithm: Shannon expansion on a variable $Z$, or partitioning the
formula into two disconnected components.  
We will refer to the DPLL-style algorithms and knowledge compilation methods
used in the state-of-the-art model
counters as \emph{\decDNNF-based model counting algorithms}. 

Therefore, we obtain our lower bounds on propositional model counting
algorithms by showing that, for every query in the
class we define, every \decDNNF\ for the lineage of that query is
exponentially large in the size of the domain; this implies that all
current model counting algorithms will run in exponential time
on the propositional grounding of that query.

\vspace{1ex}
\noindent\textbf{Our Contributions.~~} 
We first consider a well-studied family of simple queries whose
associated (weighted) model counting problems are known to
be \#P-hard~\cite{DBLP:journals/jacm/DalviS12}.
These queries have lineages that are simple 2-DNF formulas.
We %begin by proving 
prove exponential lower bounds of the form
$2^{\Omega(\sqrt{n})}$ on the sizes of \decDNNF\ 
representations of %any of these 
these lineages, % for any of these queries, 
which is the first non-trivial
\decDNNF\ lower bounds for these queries (\autoref{TH:1}).

We obtain these lower bounds by first proving that any {\em FBDD}
representation for any of these formulas requires at least $2^{n-1}/n$ size for 
a domain of size $n$; then using our recent result~\cite{Beame+13-uai} showing
that every \decDNNF\ of size $N$ can be
converted into an \FBDD\ of size at most $N 2^{\log^2 N}$. (FBDDs, 
{\em free binary decision diagrams}, also known as {\em read-once branching
programs}, are a restricted subclass of \decDNNF s
without any AND-nodes).
A much weaker lower bound of $2^{\Omega(\log^2 n)}$ was shown
previously~\cite{DBLP:journals/mst/JhaS13} for the \FBDD\ size of one of these
queries, but this is insufficient to yield any \decDNNF\ lower bound using
our %above 
translation.

We also strengthen the conversion from~\cite{Beame+13-uai} to one with the
same complexity that applies to a new, more
general class of representations than \decDNNF s  (\autoref{TH:2}).
We call this class {\em decomposable
logic decision diagrams} (\decDNNFn s). 
\decDNNFn s can better represent ways of using \decDNNF-based algorithms for
query lineages, which naturally are given as DNF formulas rather than the CNF
formulas that are the natural input for DPLL algorithms.
%This result enables us to derive lower bounds on the running time of 
%\decDNNF-based algorithms for query lineages for which can be expressed as a $k$-DNF formula
%with a fixed $k$ for a fixed query, whereas the
%DPLL-style algorithms were originally 
%proposed for CNF formulas. 

%As should not be surprising given that model counting for the
%queries considered in \autoref{TH:1} is \#P-hard, we do not have corresponding efficient algorithms for 
%model counting for these queries, even using lifted methods.
Given that model counting for the
queries considered in \autoref{TH:1} is \#P-hard even using lifted methods, it should not be surprising that
we do not have efficient algorithms for 
model counting for these queries, .
So \autoref{TH:1} does not lead to a separation between lifted and grounded
inference.
The separation we derive follows by substantially generalizing the class of
queries for which the lower bounds on the sizes of \FBDD s and \decDNNFn s hold.
Each of the queries in \autoref{TH:1} is a disjunction of a number of elementary
queries. 
We strengthen our lower bounds by showing that, if we generalize
the disjunction operation to {\em any}
Boolean combination of the same elementary queries 
such that the query $Q$ built this way depends on all of the elementary queries,
then the exponential lower bounds also hold for $Q$ (\autoref{TH:3}).
To prove this surprising result we show that every \FBDD\ for such
a Boolean combination can, with a small increase in size, be converted into
an \FBDD\ that simultaneously represents the lineages of all of its
constituent elementary queries, and therefore into an \FBDD\ for the
disjunction query for which the previous lower bounds hold.
%We note that a result converting an \FBDD\ for one particular query
%$Q_W$ into an \FBDD\ for one of these disjunctions was shown
%in~\cite{DBLP:journals/mst/JhaS13}, but this did not yield a general technique.

To obtain the separation between the \decDNNF-based algorithms for
model counting and lifted inference on a query $Q$, 
we apply a
characterization given in~\cite{DBLP:journals/jacm/DalviS12} for Union of
Conjunctive Queries (UCQ), which gives a specific property 
(entirely) based on the structure of $Q$ that allows exact
model counting in time polynomial in the size of the database. 
This yields a $2^{\Omega(\sqrt{n})}$ versus $n^{O(1)}$
separation between these propositional and lifted methods for weighted
model counting for a wide variety of such queries $Q$ (\autoref{TH:MAIN-SEP}).

\vspace{1ex}
\noindent\textbf{Roadmap.~~} We discuss some useful knowledge compilation
representations in \autoref{sec:fbdds}. 
In \autoref{sec:main-results}, we describe our main results
which are proved in the following Sections \ref{sec:h0h1} and \ref{sec:hk}.
We discuss related issues in \autoref{sec:conclusions}.

\section{Background}\label{sec:fbdds}
In this section we review the knowledge compilation representations used 
in the rest of the paper.\\

\textbf{\FBDD s.~~} An \FBDD\ \footnote{\FBDD s are also known as \emph{Read Once Branching Program}s.}
is a rooted directed acyclic graph (DAG) $\F$
that computes $m$ Boolean functions $\mathbf{\Phi} = (\Phi_1,\ldots, \Phi_m)$. $\F$ has two
kinds of nodes: {\em decision nodes}, which are labeled by a Boolean
variable $X$ and have two outgoing edges labeled 0 and 1; and
\emph{sink nodes} labeled with an element from $\set{0, 1}^m$.  
Every path from the root to some sink node may test a Boolean variable $X$
at most once.  For each assignment $\theta$ on all the Boolean variables,
$\mathbf{\Phi}[\theta] = (\Phi_1[\theta], \ldots, \Phi_m[\theta]) = L$, where
$L$ is the label of the unique sink node reachable by following the
path defined by $\theta$.  The \emph{size} of the \FBDD\ $\F$ is the number 
nodes in $\F$.  Typically $m = 1$,  but we will also consider \FBDD s $\F$ with $m>1$ and call $\F$ a {\em multi-output
  \FBDD}.
\par
For every node $u$, the sub-DAG of $\F$ rooted at $u$, denoted $\F_u$,
computes $m$ Boolean functions $\mathbf{\Phi}_u$ defined as
follows.  If $u$ is a decision node labeled with $X$ and has children
$u_0,u_1$ for 0- and 1-edge respectively, then $\mathbf{\Phi}_u = (\neg X) \mathbf{\Phi}_{u_0} \vee X
\mathbf{\Phi}_{u_1}$; if $u$ is a sink node labeled $L \in
\set{0,1}^m$, then $\mathbf{\Phi}_u = L$.  $\F$ computes
$\mathbf{\Phi} = \mathbf{\Phi}_r$ where $r$ is the root.  The probability of each of
the $m$ functions can be computed in time linear in the size of the
\FBDD\ using a simple dynamic program: $\Pr[\mathbf{\Phi}_u] = (1 -
p(X)) \Pr[\mathbf{\Phi}_{u_0}] + p(X) \Pr[\mathbf{\Phi}_{u_1}]$. 
\par
For our purposes, it will also be useful to consider FBDDs with {\em
  no-op nodes}.  A no-op node is not labeled by any variable, and has
a single child; the meaning is that we do not test any variable, but
simply continue to its unique child.  Every FBDD with no-op nodes can
be transformed into an equivalent FBDD without no-op nodes, by simply
skipping over the no-op node.\\

\textbf{\DecDNNF s.~~}
A \decDNNF \footnote{A \decDNNF is a special case of both an \ANDFBDD\ (the AND nodes have no restriction)
\cite{wegener2000book} and a d-DNNF \cite{Darwiche01}, which are
different kinds of circuits used in knowledge compilation.
see~\cite{Beame+13-uai}  for a discussion.}
 $D$ generalizes an \FBDD\
allowing \emph{decomposable AND-nodes} in addition to decision-nodes,
\ie, any AND-node $u$ must satisfy the restriction that, for its two
children $u_1, u_2$, the sub-DAGS $\D_{u_1}$ and $\D_{u_2}$ do not
mention any common Boolean variables. The function $\mathbf{\Phi_u}$ is defined as $\mathbf{\Phi_u} =
\mathbf{\Phi_{u_1}} \wedge \mathbf{\Phi_{u_2}}$, and its probability 
is computed as $\Pr[\mathbf{\Phi_{u}}] = \Pr[\mathbf{\Phi_{u_1}}] \cdot \Pr[\mathbf{\Phi_{u_2}}]$.
In a \decDNNF, similar to \FBDD s,
any Boolean variable can be tested at most once along any path from the root to any sink.\\

\textbf{\decDNNFn s.~~}
In this paper we introduce \emph{Decomposable Decision Logic Diagrams} or \decDNNFn s 
by further generalizing a \decDNNF. 
	%A \decDNNFn\ can have decision nodes (tested at most once along any path) and 0- or 1- sinks like \decDNNF s.
	%But it can have
A \decDNNFn\ can also have
NOT-nodes $u$ having a unique child $u_1$, 
%and 
%instead of only having decomposable AND-nodes, it can have 
and decomposable %AND-, 
OR-, \XOR-, and \EQUIV-nodes similar to decomposable AND-nodes\footnote{These 
four nodes along with NOT-nodes can capture all possible non-constant 
functions on two Boolean variables}:
%$u$ with two
%children $u_1, u_2$ such that the sub-DAGS $\D_{u_1}$ and $\D_{u_2}$ do not
%mention any common Boolean variables. 
(i) for a NOT-node, 
$\mathbf{\Phi_u} = \neg \mathbf{\Phi_{u_1}}$, and 
$\Pr[\mathbf{\Phi_{u}}] = 1 - \Pr[\mathbf{\Phi_{u_1}}]$;
(ii) for an OR-node, $\mathbf{\Phi_u} = \mathbf{\Phi_{u_1}} \vee \mathbf{\Phi_{u_2}}$, and 
$\Pr[\mathbf{\Phi_{u}}] = 1 - (1 - \Pr[\mathbf{\Phi_{u_1}}]) \cdot (1 - \Pr[\mathbf{\Phi_{u_2}}])$;
(iii) for an \XOR-node, $\mathbf{\Phi_u} = \mathbf{\Phi_{u_1}} \cdot \neg \mathbf{\Phi_{u_2}}$ $\vee$ $\neg \mathbf{\Phi_{u_1}} \cdot \mathbf{\Phi_{u_2}}$, and
(iv) for an \EQUIV-node, $\mathbf{\Phi_u} = \mathbf{\Phi_{u_1}} \cdot \mathbf{\Phi_{u_2}}$ $\vee$ $\neg \mathbf{\Phi_{u_1}} \cdot \neg \mathbf{\Phi_{u_2}}$
(again $\Pr[\mathbf{\Phi_{u}}]$ can easily be computed from $\Pr[\mathbf{\Phi_{u_1}}], \Pr[\mathbf{\Phi_{u_2}}]$).
%$\Pr[\mathbf{\Phi_{u}}] = \Pr[\mathbf{\Phi_{u_1}}] \cdot (1 - \Pr[\mathbf{\Phi_{u_2}}]) + (1 - \Pr[\mathbf{\Phi_{u_1}}])\Pr[\mathbf{\Phi_{u_2}}])$;
%$\mathbf{\Phi_u} = \mathbf{\Phi_{u_1}} \cdot \mathbf{\Phi_{u_2}}$ $\vee$ $\neg \mathbf{\Phi_{u_1}} \cdot \neg \mathbf{\Phi_{u_2}}$, and
%$\Pr[\mathbf{\Phi_{u}}] = \Pr[\mathbf{\Phi_{u_1}}] + \Pr[\mathbf{\Phi_{u_2}}])$.
Hence the probability of the formula can still be computed in time linear in $\D$.\\

\textbf{Conversion of a \decDNNFn\ into an equivalent \FBDD.~~}
The trace of
any DPLL-based algorithm with caching and components is a \decDNNF. Therefore
any lower bound on the size of \decDNNF s represents a lower
bound on the running time of modern model counting algorithms.
We have proven recently the first lower bounds on \decDNNF s~\cite{Beame+13-uai}.  
However, model counting algorithms were designed
for CNF expressions: for example, the component analysis partitions
the clauses into two disconnected components (without common
variables), then computes the probability as $\Pr[\Phi_1 \wedge
\Phi_2] = \Pr[\Phi_1] \Pr[\Phi_2]$.  In order to run such an algorithm
on a DNF expression %(like $\Theta_0$ or $\Theta_1$) 
(which are more related to lineages in databases) one would
naturally first apply a negation, %, and obtain a CNF expression.  
which
transforms the formulas into CNF.  This suggest a simple extension of
such algorithms: allow the application of the negation operator at any
step.  The trace now also has NOT-nodes and therefore is a special case of \decDNNFn s. 
But we prove our first result in the paper for general \decDNNFn s:

\begin{theorem} \label{TH:2} For any \decDNNFn\ $\D$ with $N$ nodes
  there exists an equivalent \FBDD\ $\F$ computing the same formula as
  $\D$, with at most $N 2^{\log^2 N}$ nodes (at most quasi-polynomial increase in size).
\end{theorem}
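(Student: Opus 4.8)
The plan is to bootstrap from the \decDNNF\ case rather than to reprove the conversion from scratch: the result of~\cite{Beame+13-uai} already turns any \decDNNF\ of size $N$ into an equivalent \FBDD\ of size at most $N 2^{\log^2 N}$, so it suffices to show that the two genuinely new features of a \decDNNFn\ $\D$ over a \decDNNF\ --- NOT-nodes, and decomposable OR-, \XOR-, and \EQUIV-nodes --- can be absorbed into the very same recursive construction without changing the quasi-polynomial form of the bound. I would argue this in two moves, one for the unary gate and one for the binary gates, and then track the blowup.

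First I would dispose of NOT-nodes by pushing negations toward the sinks. Negation commutes with every other node type while preserving decomposability: at a decision node $\neg$ simply negates both children; at a binary gate it maps the gate to its De~Morgan dual, sending AND$\leftrightarrow$OR and \XOR$\leftrightarrow$\EQUIV\ without touching the variable partition of the two children (so the decomposability restriction that $\D_{u_1},\D_{u_2}$ share no variables is maintained); and $\neg\neg$ cancels. Concretely, I would replace each node $u$ by two nodes $u^{+}$ and $u^{-}$ computing $\Phi_u$ and $\neg\Phi_u$, where $u^{-}$ at a sink just flips the label. This at most doubles the size and yields an equivalent NOT-free \decDNNFn\ built only from decision nodes and decomposable AND/OR/\XOR/\EQUIV-nodes. (Equivalently, one can thread a parity bit through the conversion and resolve it at the sinks; the doubling is the cleanest thing to state.)

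Second, I would observe that the conversion of~\cite{Beame+13-uai} never uses the semantics of AND --- it uses only decomposability. The induced \FBDD\ reads the variables of one child of a decomposable node to completion, reaching a one-bit value $b_1$, then reads the (variable-disjoint) other child to a value $b_2$, and finally applies the gate to $(b_1,b_2)$; because the two variable sets are disjoint this is read-once, and the final combination is a cost-free relabeling at a sink. This template is insensitive to which of AND/OR/\XOR/\EQUIV\ the node carries, so the same recursive serialization --- process the lighter child first while deferring the heavier sibling, recording each deferred sibling on a context stack --- applies verbatim; the only change is the two-bit function evaluated where a subcomputation terminates.

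The main obstacle is keeping the size bound at $N 2^{O(\log^{2}N)}$, and here the new gates force slightly more bookkeeping than AND alone. The light-child-first analysis that caps the context-stack depth at $\log N$ is purely structural: each deferral enters a child whose sub-DAG has at most half the size of the parent's, and these entered nodes are nested, so no more than $\log N$ deferrals are pending at once; this argument depends only on sub-DAG sizes and is inherited unchanged. What differs is the content of a stack frame. An AND-node short-circuits on $b_1=0$ and an OR-node on $b_1=1$, but an \XOR- or \EQUIV-node admits no short-circuit and must remember $b_1$ while its second child is evaluated. I would therefore enlarge each frame from a single deferred node to a triple (deferred node, gate type, and the already-computed sibling bit when applicable), which is only an $O(1)$ multiplicative increase per frame. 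Since each \FBDD\ node is a (current node, stack) pair with at most $N$ choices for the node and at most $(O(N))^{\log N}$ choices for the depth-$\le\!\log N$ stack, the total is $N\,(O(N))^{\log N}=N\,2^{\log^{2}N+O(\log N)}$, i.e.\ the same quasi-polynomial form, with the constant-factor from NOT-elimination and the enlarged frames absorbed into lower-order terms of the exponent. The fiddly part to write carefully is exactly this frame bookkeeping for the non-short-circuiting gates; everything else transfers from the \decDNNF\ proof.
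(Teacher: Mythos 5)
Your proposal is correct, and its skeleton coincides with the paper's proof: your NOT-elimination via paired nodes $u^{+},u^{-}$ with De~Morgan dualization and sink-flipping is exactly the paper's Steps~1--2, which build a dual \decDNNFn\ $\OO$ computing $\neg\mathbf{\Phi}$ (swapping sinks, AND$\leftrightarrow$OR, \XOR$\leftrightarrow$\EQUIV) and then splice each NOT-node as a \noop\ edge crossing between the two copies, at a factor-2 cost; and your Step for the binary gates is the light-child-first conversion of~\cite{Beame+13-uai}, which the paper likewise invokes with private replication of the lighter sub-DAG. The one place where you genuinely diverge is the crux --- the non-short-circuiting gates. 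You thread the light child's bit $b_1$ through an enlarged stack frame, so the two differently-rewired continuations into the heavy child arise implicitly as distinct (node, stack) states; the paper instead needs no remembered bit at all, because the dual copy built for NOT-elimination is reused: for an \XOR-node the $1$-sinks of the private light copy are routed into the \FBDD\ for $\neg\mathbf{\Phi_{u_2}}$ (the heavy child's node in $\OO$) and the $0$-sinks into the \FBDD\ for $\mathbf{\Phi_{u_2}}$, with \EQUIV\ handled dually, so the ``remembered bit'' is encoded in \emph{which copy} the computation continues in. The two mechanisms are equivalent in effect and cost: your per-frame constant factor contributes at most $c^{\log N}=N^{O(1)}$, giving $N\,2^{\log^2 N+O(\log N)}$ rather than the literal $N\,2^{\log^2 N}$ of the statement --- but the paper is equally loose about such lower-order factors (its own Step~2 already doubles $N$ before citing the quasi-polynomial analysis), so this is not a gap. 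What each approach buys: the paper's trick makes the doubling pay twice (NOT removal and \XOR/\EQUIV\ serialization) and keeps the construction as explicit graph surgery with no auxiliary state, while your trace/stack formulation is more generic --- it handles any two-input decomposable gate uniformly without constructing negated copies --- and makes the $\log N$ stack-depth bound, hence the quasi-polynomial size, especially transparent.
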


In~\cite{Beame+13-uai} we have proven a similar result with the same bound
for \decDNNF s; now we strengthen it to \decDNNFn s.
The proof of Theorem~\ref{TH:2} appears in Section~\ref{sec:proof-thm-decdnnfn}.

%using a relatively minor extension to 
%extending the proof in \cite{Beame+13-uai}; 
%the extension is rather simple and appears in Appendix~\ref{app:proof-thm-decdnnfn}.  

				%In the next section we discuss how this quasi-polynomial conversion
			%combined with the exponential lower bound results on the size of \FBDD s shown in this paper leads to exponential 
			%lower bounds on the size of \decDNNFn s, and therefore on the running time of
			%the state of the art model counting algorithms.
			
\section{Main Results}
\label{sec:main-results}

Here we formally state our main results and discuss their implications,
and defer the proofs to the following sections. But first we introduce some %simple
elementary
queries that work as building blocks for the class of queries considered in these results \cite{DBLP:journals/jacm/DalviS12, DBLP:journals/mst/JhaS13}:
\par

Let $[n]$ denote the set $\{1,\ldots, n\}$. 
Fix $k > 0$ and consider the following set of $k+1$ Boolean queries
$\hk = (\query_{k0}, \cdots, \query_{kk})$, where 
\begin{align*}
  \query_{k0} & =  \exists x_0 \exists y_0\ R(x_0) \land S_1(x_0,y_0) \\
	\query_{k\ell} & =  \exists x_{\ell} \exists y_{\ell}\ S_{\ell}(x_{\ell},y_{\ell})\land S_{{\ell}+1}(x_{\ell},y_{\ell})\qquad \forall \ell \in [k-1] \\
  %\query_{k1} =  \exists x_1 \exists y_1 S_1(x_1,y_1),S_2(x_1,y_1) \\
  %\query_{k2} =  \exists x_2 \exists y_2 S_2(x_2,y_2),S_3(x_2,y_2) \\
          %& \cdots \\
  \query_{kk} & =  \exists x_k \exists y_k\ S_k(x_k,y_k)\land T(y_k)
\end{align*}

Fix a domain size $n > 0$; for each $i, j \in [n]$,
let $R(i)$, $S_1(i,j), \ldots, S_k(i,j)$, $T(j)$ be Boolean variables 
representing potential tuples in the database. 
Then the corresponding {\em lineages}, the associated Boolean expressions for these queries 
	%the associated Boolean functions for these
	%queries over $[n]$, are given by
are \footnote{For simplicity, conjunctions 
in Boolean formulas are represented as products.}:
\begin{align*}
  \lin_{k0} = & \bigvee_{i,j \in [n]} R(i) S_1(i,j), \qquad \lin_{kk} = \bigvee_{i,j \in [n]} S_k(i,j) T(j),\\
  \lin_{k\ell} = & \bigvee_{i,j \in [n]} S_{\ell}(i,j) S_{\ell+1}(i,j)\qquad \forall \ell \in [k-1]    
\end{align*}
We %also 
define $\Hk=(\lin_{k0},\ldots,\lin_{kk})$. Two well-studied queries \cite{DBLP:journals/jacm/DalviS12}
that we will consider
in this section are given below:

\textbf{Query $\query_k$:~~} $\query_k$ is a disjunction on the queries in $\hk$:~
$\query_k = \query_{k0} \vee \query_{k1} \vee  \cdots \vee \query_{kk}$.
The lineage $\lin_k$ of $\query_k$ is given by
$\lin_k = \lin_{k0} \vee \lin_{k1} \vee \cdots \vee \lin_{kk}$.\\

\textbf{Query $\query_0$:~~} Also we define $\query_0$ that uses a single relation symbol $S$ in addition to $R$ and $T$:~
$\query_{0} = \exists x \exists y\ R(x)\land S(x,y)\land T(y)$.
$S$ is defined on Boolean variables $S(i, j)$, $i, j \in [n]$, and therefore the lineage $\lin_0$ of $\query_0$ is
$\lin_0 = \bigvee_{i, j \in [n]} R(i) S(i, j) T(j)$.

\subsection*{Lower bounds on \FBDD s for queries $\query_0$, $\query_k$}
Jha and Suciu~\cite{DBLP:journals/mst/JhaS13} previously showed
that every \FBDD\ for the lineage $\lin_1$ of $\query_1$ has size
$2^{\Omega(\log^2 n)}$.
Our first result improves this to an exponential lower bound, not just for
$\lin_1$ but also for $\lin_0$ and all $\lin_k$ for $k>1$:

\begin{theorem} \label{TH:1}
For every $n > 0$, any \FBDD\ for $\lin_0$ or $\lin_k$ for $k\ge 1$ has
$\geq 2^{(n-1)}/n$ nodes.
\end{theorem}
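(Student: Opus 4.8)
The plan is to lower-bound the size of any \FBDD\ by exhibiting exponentially many structured inputs that are forced to reach distinct nodes, using a read-once ``cut-and-paste'' principle. First I would record the basic combinatorial tool: for any node $u$ of an \FBDD, let $\mathrm{Above}(u)$ be the set of variables tested on some root-to-$u$ path and $\mathrm{Below}(u)$ the set tested on some $u$-to-sink path. Concatenating a root-to-$u$ path through a variable $X$ with a $u$-to-sink path through $X$ would give a root-to-sink path testing $X$ twice, contradicting the read-once property; hence $\mathrm{Above}(u)\cap\mathrm{Below}(u)=\emptyset$. Consequently, if two inputs $x,y$ both reach $u$, the ``crossover'' $w$ that agrees with $x$ on $\mathrm{Above}(u)$ and with $y$ on $\mathrm{Below}(u)$ also reaches $u$ and then follows $y$'s path, so $\lin_0(w)=\lin_0(y)$. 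In particular, two $0$-inputs reaching a common node have a crossover that is again a $0$-input.

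Second, I would design a family of $0$-inputs that packs $\Omega(n)$ bits while making crossovers accepting. For each $A\subseteq[n]$ define $w_A$ by $R(i)=1\iff i\in A$, $T(j)=1\iff j\notin A$, and $S(i,j)=1$ iff $i,j$ lie on the same side of the partition $(A,[n]\setminus A)$. A satisfying triple would need $i\in A$, $j\notin A$ (opposite sides) yet $S(i,j)=1$ (same side), so every $w_A$ is a $0$-input, while the bipartition $A$ is encoded in the $\Theta(n^2)$ entries of $S$. The point of this ``two cliques'' pattern is that when the variable partition places all of $R,T$ in $\mathrm{Above}(u)$ and the relevant $S$-entries in $\mathrm{Below}(u)$, the crossover of $w_A$ and $w_{A'}$ takes $R,T$ from $w_A$ and the cross-edges from $w_{A'}$, producing (for balanced $A,A'$ with $A'\neq A,[n]\setminus A$) a pair $i\in A$, $j\notin A$ that are on the same side of $A'$, i.e.\ a satisfying triple. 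Thus this crossover is a $1$-input, contradicting the previous paragraph, so $w_A$ and $w_{A'}$ cannot share such a node.

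Third, I would turn this into a count. Along the path of each $w_A$ I would locate a ``balanced'' cut node $u_A$ and charge $w_A$ to it; the cut-and-paste contradiction then shows that few members of the family can be charged to the same node, and summing over the $\approx\binom{n}{n/2}$ balanced sets $A$ (note $\binom{n}{n/2}\geq 2^{n-1}/n$) yields a $2^{\Omega(n)}$ bound, a careful accounting of the charging giving the stated $2^{(n-1)}/n$, with the polynomial loss coming from the number of possible cut levels. The main obstacle is precisely that an \FBDD\ is non-oblivious: the adversary may interleave the reads of the matrix $S$ with those of the endpoint vectors $R,T$, and in particular may read $S$ first, so that when all of $R,T$ has been read no relevant $S$-entry remains in $\mathrm{Below}(u)$ and the crossover need not be accepting. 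Taming this adaptive read order is the heart of the proof: one must show that for every variable ordering there is still a cut at which the two-cliques structure separates enough row/column endpoints from matrix entries, which is naturally phrased as a bound on the largest $\lin_0$-monochromatic combinatorial rectangle over every balanced partition. This reflects the conceptual picture that each individual setting of $S$ to constants makes the query easy, yet no single read-once program can multiplex the incompatible good orders required by all the settings at once.

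Finally, for $\lin_k$ I would carry out the analogous argument on the layered relations $S_1,\dots,S_k$: choose patterns that simultaneously falsify all $k+1$ disjuncts $\lin_{k0},\dots,\lin_{kk}$ while still encoding a bipartition of $[n]$, and apply the same cut-and-paste count; alternatively one reduces $\lin_k$ to the two-layer core by restricting the intermediate relations. Since the read-once principle and the crossover contradiction are unchanged, the same $2^{(n-1)}/n$ lower bound follows.
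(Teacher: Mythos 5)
There is a genuine gap, and you have named it yourself without closing it. Your cut-and-paste principle and the ``two cliques'' family $w_A$ are fine as far as they go (each $w_A$ is indeed a $0$-input, and for balanced $A'\notin\{A,[n]\setminus A\}$ a same-side cross pair exists), but the contradiction in your second paragraph only fires at a node $u$ whose prefix path contains $R(i)$ and $T(j)$ for a witnessing pair while leaving $S(i,j)$ untested below---and an \FBDD\ is adaptive, so nothing forces such a node to exist on the path of $w_A$. Your third paragraph concedes exactly this (``Taming this adaptive read order is the heart of the proof: one must show that for every variable ordering there is still a cut\dots'') and then asserts, rather than proves, the needed statement, recast as a monochromatic-rectangle bound over every balanced partition. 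That unproven step is the entire difficulty: a single-cut fooling-set argument of this shape is precisely what does \emph{not} transfer from \OBDD s to \FBDD s, because the set of variables read above a node is path-dependent and the program can interleave $S$-reads with $R,T$-reads differently on different inputs. This is why Bollig and Wegener~\cite{bollig98} needed a specially structured edge set $E_n$ rather than the complete bipartite pattern, and even that route yields only $2^{\Omega(\sqrt n)}$ for $\lin_0$, as the paper notes. Your final paragraph on $\lin_k$ is likewise only a sketch; in particular the proposed ``reduction to the two-layer core by restricting the intermediate relations'' cannot work as stated, since every restriction of the 2-DNF $\lin_k$ is again a 2-DNF and never produces the 3-DNF $\lin_0$.

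For contrast, the paper resolves the adaptivity problem by a different mechanism. It first normalizes the \FBDD\ so that it ``follows the unit rule'' (\autoref{lemma:unit}, \autoref{lem:unit_hk}), at a multiplicative cost of $n$---this, not a count of cut levels, is where the $1/n$ in $2^{n-1}/n$ comes from. It then argues along \emph{paths} rather than at a single cut: it defines admissible paths consistent with one of four per-$(i,j)$ assignment patterns, shows via \autoref{lem:RTdetermined} and \autoref{lem:uniquepaths} that distinct admissible paths carry distinct residual subfunctions of $\lin_k$ (so they cannot merge at a node), and finally shows every admissible path makes at least $n-1$ unforced binary decisions, injecting $\{0,1\}^{n-1}$ into distinct nodes (\autoref{lem:theta1}). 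The subfunction-distinguishing step plays the role your rectangle bound was supposed to play, but it is proved by a local propagation argument through the chain $R,S_1,\dots,S_k,T$ (where the unit rule is essential), not by a partition-based counting argument. To repair your proposal you would have to prove your ``good cut exists for every adaptive order'' claim, which in effect requires machinery of this kind anyway.
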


It is known that weighted model counting for both
$\lin_0$ and $\lin_k$ is \#P-hard~\cite{DBLP:journals/jacm/DalviS12}.
However, the lower bounds we show on these \FBDD\ sizes are absolute
(independent of any complexity theoretic assumption) and do not rely
on the \#P-hardness of the associated weighted model counting problems.
We give the proof in \autoref{sec:h0h1}.  
This improved bound is critical
for proving the %main result 
overall lower bound result in this paper (\autoref{thm:main}).  

While we do not need
$\query_0$ and $\lin_0$ in the rest of the paper, we include %its proof here
it in \autoref{TH:1} 
because
it is obtained in a fashion similar to that for $\lin_k$ and substantially
improves on a $2^{\Omega(\sqrt{n})}$ lower bound for $\lin_0$ from our previous work \cite{Beame+13-uai} which
%showed a  \FBDD\ size lower bound using a % related 
%known lower bound
was based on a result by of Bollig and Wegener~\cite{bollig98}\footnote{Bollig and Wegener defined a set
   $E_n \subseteq [n] \times [n]$ for which
   any \FBDD\ for the formula $\bigvee_{(i,j) \in E_n} R(i) T(j)$ requires
   size $2^{\Omega(\sqrt{n})}$ which obviously implies the same lower bound
   for $\lin_0$.  The set $E_n$ is given as follows:
   Assume that $n=p^2$ where $p$ is a prime number. 
   Each number $0 \leq i < n$ can be uniquely written as $i=a + bp$ where
   $0 \leq a,b < p$.  Then:
   $E_n = \setof{(i+1,j+1)}{i=a+bp, j=c+dp, c \equiv (a + bd) \mod p}$.
   }.
Our new lower bound improves this to the nearly optimal $2^{n-1}/n$.

We also note that our stronger lower bounds for $\lin_1$ give
instances of bipartite 2-DNF formulas that are simpler to describe than
those of~\cite{bollig98} but yield as good a lower bound on \FBDD\ sizes in
terms of their number of variables and even better bounds as a function
of their number of terms\footnote{In the formulas of~\cite{bollig98}, $p$
is analogous to $n$ in our formulas and theirs have $p^3$ terms, versus only
$2n^2$ for our formulas.}.

\subsection*{Lower bounds for \FBDD s for queries over  $\hk$}

\autoref{TH:1} gives a lower bound on $\query_k$, which is simply the logical
OR of the queries in $\hk$.
\autoref{TH:3} below generalizes this result by allowing queries that are
{\em arbitrary functions} of queries in $\hk$.

Let $f(\vecX)=f(X_0, X_1, \cdots, X_k)$ be an arbitrary Boolean function on
$k+1$
Boolean variables $\vecX = (X_0, \cdots, X_k)$, and $Q$ the
Boolean query $Q = f(\query_{k0}, \query_{k1}, \cdots, \query_{kk})$.  
%Define the
Clearly, the lineage of $Q$ 
is $f(\Hk)=f(\lin_{k0}, \lin_{k1}, \cdots, \lin_{kk})$.  
\begin{example}
If $f(X_0, X_1, \cdots, X_k) = \bigvee_{\ell = 0}^k X_{\ell}$, we get query 
$\query_k =\bigvee_{\ell = 0}^k h_{k\ell}$; its lineage is $\lin_k = \bigvee_{\ell = 0}^k \lin_{k\ell}$.
%
%For a simple example, denoting the query $H_k = \query_{k0}
%\vee \cdots \vee \query_{kk}$, then the lineage of $H_1$ is $\lin_1$ in
%\autoref{TH:1}.
%\footnote{The formula $\lin_0$ is the lineage of the
  %query $H_0 = \exists x \exists y R(x),S(x,y),T(y)$, which is not a
  %Boolean combination of $\HH_k$.}
\end{example}
The function $f$ \emph{depends on} a variable $X_\ell$,
$\ell \in \{0,\ldots,k\}$, if
there is an assignment $\mu_{\ell}$ on
the rest of the variables $\vecX \setminus \{X_{\ell}\}$ such that $f[\mu_{\ell}] = X_{\ell}$ or $\neg X_{\ell}$.

\begin{theorem} \label{TH:3} 
If $f$ depends on all $k+1$
variables $X_0, \cdots, X_k$, then any \FBDD\ $\F$ with $N$ nodes for the
  lineage of $Q = f(\query_{k0}, \cdots, \query_{kk})$ can be converted into a
  multi-output \FBDD\ for $(\lin_{k0}, \cdots, \lin_{kk})$ with $O(k 2^k n^3 N)$
  nodes.
  In particular, for $k\le \alpha n$ for any constant $\alpha<1$, 
  $\F$ requires at least $2^{\Omega(n)}$ nodes. 
\end{theorem}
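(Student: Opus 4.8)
The plan is to prove \autoref{TH:3} in three stages: a conversion, a trivial projection, and an application of \autoref{TH:1}. The only substantial work is converting an \FBDD\ $\F$ for the single Boolean function $f(\Hk)$ into a multi-output \FBDD\ that computes the whole tuple $(\lin_{k0},\ldots,\lin_{kk})$ on every assignment. Granting that conversion with the stated size $O(k2^k n^3 N)$, the rest is immediate: a multi-output \FBDD\ for $(\lin_{k0},\ldots,\lin_{kk})$ is turned into a single-output \FBDD\ for $\lin_k=\bigvee_\ell \lin_{k\ell}$ without adding any node, by relabelling each sink $(b_0,\ldots,b_k)\in\{0,1\}^{k+1}$ with $\bigvee_\ell b_\ell$. \autoref{TH:1} then forces this \FBDD\ to have at least $2^{n-1}/n$ nodes, so $k2^kn^3 N=\Omega(2^{n-1}/n)$, i.e. $N=\Omega(2^{n-1}/(k2^kn^4))$. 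For $k\le\alpha n$ this is $2^{(1-\alpha)n}/n^{O(1)}=2^{\Omega(n)}$ because $\alpha<1$, which is the ``in particular'' claim.

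For the conversion itself I would exploit two features: the monotone 2-DNF shape of each $\lin_{k\ell}$, and the hypothesis that $f$ depends on every $X_\ell$. The construction simulates $\F$ node by node while carrying a bounded amount of side information, so that each state is a node $u$ of $\F$ together with a monitor state; the monitor records the set $W\subseteq\{0,\ldots,k\}$ of elementary lineages already made true along the current path (monotonicity guarantees $W$ only grows, giving the $2^k$ factor) together with $O(1)$ witness coordinates in $[n]$ (the $n^3$ factor) that pin down where a lineage was satisfied or is about to be resolved. Because $f$ depends on $X_\ell$, there is an assignment $\mu_\ell$ to the remaining arguments with $f[\mu_\ell]=\pm X_\ell$; the plan is to realise $\mu_\ell$ by variable restrictions so that the restricted \FBDD\ reads off $\lin_{k\ell}$. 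The clean half of this is forcing a lineage to $0$: since $\lin_{k\ell}$ uses only relations $S_\ell,S_{\ell+1}$ (with $R,S_1$ and $S_k,T$ at the ends), a neighbour can be zeroed by setting an entire non-shared relation (e.g. $S_{\ell-1}\equiv 0$ or $S_{\ell+2}\equiv 0$) to $0$, which does not touch the variables of $\lin_{k\ell}$.

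The main obstacle is precisely the overlap of consecutive lineages: $\lin_{k,\ell-1}$ and $\lin_{k\ell}$ share $S_\ell$, and $\lin_{k\ell}$ and $\lin_{k,\ell+1}$ share $S_{\ell+1}$, so forcing a neighbour to $1$ necessarily assigns one shared variable and slightly perturbs $\lin_{k\ell}$ (replacing one of its terms by a single literal). I expect to handle this by choosing a single fresh witness position $(a,b)\in[n]^2$ for each forced-to-$1$ neighbour and absorbing the resulting one-term perturbation into the monitor's witness coordinates, which is what the $n^3$ bookkeeping pays for; one must then check that the read-once property is preserved, i.e. that the variables used to realise the restrictions are disjoint from those later tested to evaluate $\lin_{k\ell}$, which the chain structure and the end-relations make possible. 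The delicate point, and the step I anticipate being hardest, is to merge the $k+1$ separate extractions into one multi-output \FBDD\ that is simultaneously consistent: along a single root-to-sink path the monitor must commit to a value for every lineage, and one has to argue that whenever $\F$ reaches a sink---where $f(\Hk)$ is already determined---the recorded side information together with the dependence of $f$ leaves each still-undetermined lineage resolvable by appending a small sub-\FBDD, without re-reading any variable. Verifying that this can always be carried out within the $O(k2^kn^3 N)$ budget is the crux of the argument.
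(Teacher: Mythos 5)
Your outer reduction is exactly the paper's: relabel each sink $(b_0,\ldots,b_k)$ of the multi-output \FBDD\ with $\bigvee_\ell b_\ell$ to obtain an \FBDD\ for $\lin_k$ of the same size, invoke \autoref{TH:1}, and do the arithmetic; that part is correct, and you also correctly isolate both the role of the hypothesis that $f$ depends on every $X_\ell$ (the restrictions $\mu_\ell$ with $f[\mu_\ell]=X_\ell$ or $\neg X_\ell$) and the real obstacle, namely that consecutive lineages share a relation, so forcing a neighbour to $1$ perturbs $\lin_{k\ell}$ by turning a 2-term into a unit. But the conversion itself --- which is the entire content of the theorem, since each $\lin_{k\ell}$ in isolation has a linear-size \OBDD\ and per-$\ell$ extraction proves nothing --- has a genuine gap. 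Your mechanism is a product construction: simulate $\F$ while a monitor with $2^k\cdot n^{O(1)}$ states records which lineages are already true plus $O(1)$ witness coordinates, and at each sink of $\F$ finish the still-undetermined lineages with appended sub-\FBDD s. This cannot work as stated: to append a correct sub-\FBDD\ you must know the residual functions $\lin_{k\ell}[\theta]$, and these range over exponentially many possibilities (any subset of the $n^2$ prime implicants of $\lin_{k\ell}$ may survive, and individual variables may be half-set) across the many paths $\theta$ that merge at a single node of $\F$. A polynomial-size monitor remembering ``already true'' bits and a constant number of indices is information-theoretically too small to distinguish them, and the one-term perturbations you propose to absorb into witness coordinates accumulate --- one per forced neighbour --- rather than staying $O(1)$.

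The missing idea, and the paper's escape from this, is that no external tracking is needed: \autoref{lemma:transparent} shows that once $\Phi_v=\Psi[\theta]$ is $\Hk$-unit-free and $\theta$ has at least $3$ independent transversals, $\Phi_v$ is \emph{transparent} --- the function computed at the node by itself uniquely determines every residual $\lin_{k\ell}[\theta]$, so merging paths automatically agree and the original nodes of $\F$ can be reused unchanged. The proof of transparency is where your $\mu_\ell$ idea actually lives (\autoref{prop:odd:even}), made sound precisely by alternating the forced $1$'s between two independent transversals so that no spurious implicants arise; the dependence of $\mu$ only on $\Psi[\theta]$ and the indices is what kills the two-assignments-one-function scenario. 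The degenerate cases are then handled structurally rather than by witnesses: restrictions with few independent transversals have residuals computable by $O(k2^kn^2)$-size \OBDD s reading variables in row-major then column-major order (\autoref{prop:transversals}), and the unit problem you flagged is absorbed by an $\Hk$-unit-setting path construction analogous to the unit rule of \autoref{sec:h0h1}, justified by the characterization of $\Hk$-units in \autoref{lemma:unitwrthk} when at least $4$ independent transversals exist; these attachments are what the $O(k2^kn^3)$ per-node budget actually pays for. You name the merging step as the crux and anticipate it being hardest, but you offer no substitute for transparency, so the crux remains unproved.
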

\noindent
We prove the theorem in \autoref{sec:hk}.  The condition that $f$
depends on all variables is necessary
(see Sections \ref{sec:h0h1} and \ref{sec:hk}):
if $Q$ does not
depend on any one of the queries in $\hk$, then its lineage has an \FBDD\ of 
size linear in the number of Boolean variables.
\par
\autoref{TH:3} extends prior work in several ways. 
First, it is the first result showing exponential lower bounds on \FBDD s for a large class of queries.
Prior to \autoref{TH:3} the only known lower bound was the quasipolynomial lower bound for $\query_1$ \cite{DBLP:journals/mst/JhaS13}.
Second, although  a conversion of an \FBDD\ for a specific query $Q_W$
(described later in this section)
 into one for $\query_1$ was given
in~\cite{DBLP:journals/mst/JhaS13}, % to show lower bounds on the size of \FBDD s for $Q_W$, 
this %translation 
conversion did not extend to other queries. 
While we
were inspired by %the existence of 
that proof, the techniques we use in
\autoref{TH:3} are considerably more powerful, and uses new ideas 
which can be of independent interest to show lower bounds on the size 
of \FBDD s in general.
\par
We also extend the lower bound in \autoref{TH:3} by proving a dichotomy theorem for 
a slightly more general class of queries: any query in this class either has
a polynomial-time model counting algorithm, or all existing \decDNNF-based model counting
algorithms require exponential time. The details appear in Section~\ref{sec:mini:dichotomy}.
%due to space constraints.

					\cut{
					 by the following argument.  
					For any total order
					$\Pi$ on the Boolean variables, a $\Pi$-\OBDD\footnote{Ordered Binary
						Decision Diagram.} is an \FBDD\ where every path from the root to a
					sink node visits the variables in the order $\Pi$.  
					We ensure that
					each path visits {\em all} variables (by introducing spurious decision
					nodes where needed) and call the width of the $\OBDD$ the largest
					number of occurrences of each Boolean variable $X$.  By rewriting the
					lineage of $\query_{k0}$ as $\bigvee_i R(i) \wedge (\bigvee_j S_1(i,j))$,
					one can construct an OBDD of width 3 (hence of size $O(n)$) that
					visits the variables in the order listed in the expression.  Notice
					that $S_1$ is traversed in row-major order.  Similarly, $\query_{kk}$ can
					be written as $\bigvee_j (\bigvee_i S_k(i,j))\wedge T(j)$, leading to
					a 3-width \OBDD\ that traverses $S_k$ in column-major order.  Each
					query $\query_{k\ell}$, for $0 < \ell < k$, has 2-width \OBDD\ if {\em
						both} $S_\ell$ and $S_{\ell+1}$ are traversed in row-major, or both
					in column-major order.  Now, given two \OBDD s for $\Phi_1, \Phi_2$ of
					width $w_1, w_2$, whose orders agree on the common variables of
					$\Phi_1$ and $\Phi_2$, then one can construct an \OBDD\ for $\Phi_1
					\wedge \Phi_2$ of width $w_1 \cdot w_2$, and similarly for $\Phi_1
					\vee \Phi_2$.  Therefor, if $Q$ does not depend on some query
					$\query_{k\ell}$, then one can construct an \OBDD\ of linear size that
					traverses $S_1, \cdots, S_{\ell}$ in row-major order, and traverses
					$S_{\ell+1}, \cdots, S_k$ in column-major order.  But if $Q$ depends
					on all queries $\query_{k\ell}$ then we can no longer construct a
					polynomial size \OBDD, because the row-major order needed for $\query_{k0}$
					will conflict with the column-major order needed for $\query_{kk}$.
					}

\subsection*{Lower Bounds for Model Counting Algorithms for Queries over $\hk$}

Theorems~\ref{TH:2}, \ref{TH:1}, and \ref{TH:3} together prove the 
%main result of our paper:
following lower bound result:

\begin{theorem}%[Main Result] 
\label{thm:main}%\label{thm:main} 
  If $Q$ is a Boolean
  combination of the queries in $\hk$ that depends on all $k+1$
  queries in $\hk$, then any \decDNNFn\ (and therefore any \decDNNF)
	for the lineage $\Theta$ of $Q$ has size
  $2^{\Omega(\sqrt n)}$. 
% Therefore, any \decDNNF-based algorithm will have 
% running time $2^{\Omega(\sqrt n)}$ to perform weighted model counting for $Q$.
\end{theorem}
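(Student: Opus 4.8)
The plan is to chain together \autoref{TH:2}, \autoref{TH:3}, and \autoref{TH:1}, using the hypothesis that $Q$ depends on all $k+1$ elementary queries precisely to invoke \autoref{TH:3}. Let $\D$ be any \decDNNFn\ computing the lineage $\Theta = f(\Hk)$ of $Q$, and let $N$ be its number of nodes. First I would apply \autoref{TH:2} to convert $\D$ into an equivalent \FBDD\ $\F_0$ computing $\Theta$ with at most $N 2^{\log^2 N}$ nodes. Since $Q$ depends on all $k+1$ of the queries in $\hk$, the function $f$ depends on all $k+1$ of its variables, so the hypothesis of \autoref{TH:3} is met: I would convert $\F_0$ into a multi-output \FBDD\ $\F_1$ that simultaneously computes $(\lin_{k0},\ldots,\lin_{kk})$ with $O(k\, 2^k n^3 \cdot N 2^{\log^2 N})$ nodes.

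Next I would extract from $\F_1$ a single-output \FBDD\ for the disjunction $\lin_k = \lin_{k0}\vee\cdots\vee\lin_{kk}$. This requires no new nodes: I simply relabel each sink of $\F_1$, which carries a label $L\in\set{0,1}^{k+1}$, by the bit $\bigvee_{\ell=0}^k L_\ell$. The read-once property and the underlying decision structure are unchanged, so the result is an \FBDD\ for $\lin_k$ with the same $O(k\, 2^k n^3 \cdot N 2^{\log^2 N})$ node bound. Now \autoref{TH:1} gives an absolute lower bound of $2^{n-1}/n$ on the size of any \FBDD\ for $\lin_k$, so
\begin{equation*}
  O\!\left(k\,2^k\,n^3\cdot N\,2^{\log^2 N}\right)\ \geq\ \frac{2^{n-1}}{n}.
\end{equation*}

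Treating $k$ as fixed (the queries fix $k$ and let the domain size $n$ grow) and taking base-$2$ logarithms, this inequality becomes $\log^2 N + \log N + O(\log n) \geq n - 1$, whence $\log^2 N = \Omega(n)$ and therefore $\log N = \Omega(\sqrt n)$, giving $N = 2^{\Omega(\sqrt n)}$ as claimed; the same computation goes through for $k \le \alpha n$ with any constant $\alpha<1$, since then $\log^2 N = \Omega((1-\alpha)n)$. The step I expect to carry the real content is not this final arithmetic but the invocation of \autoref{TH:3}, since everything rests on being able to recover \FBDD s for all constituent lineages $\lin_{k\ell}$ from an \FBDD\ for an \emph{arbitrary} Boolean combination of them. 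It is also worth flagging the one place where the exponent degrades: the bound on \FBDD\ size coming from \autoref{TH:1} and \autoref{TH:3} is $2^{\Omega(n)}$, but the quasipolynomial $2^{\log^2 N}$ blowup in the \decDNNFn-to-\FBDD\ conversion of \autoref{TH:2} is exactly what forces the weaker $2^{\Omega(\sqrt n)}$ bound for \decDNNFn s.
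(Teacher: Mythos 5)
Your proposal is correct and follows essentially the same route as the paper's own proof: chain \autoref{TH:2} (\decDNNFn\ to \FBDD, quasipolynomial blowup), then \autoref{TH:3} (recovering a multi-output \FBDD\ for $\Hk$, which yields an \FBDD\ for $\lin_k$ by relabeling sinks with the disjunction of their label bits---exactly the paper's remark that any \FBDD\ for $\Hk$ gives one for $\lin_k$ of the same size), then the $2^{n-1}/n$ lower bound of \autoref{TH:1}, with the final arithmetic forcing $\log^2 N = \Omega(n)$ and hence $N = 2^{\Omega(\sqrt n)}$. Your write-up is in fact more careful than the paper's terse version (which elides the sink-relabeling step and the polynomial factors), and you correctly identify the quasipolynomial conversion of \autoref{TH:2} as the reason the bound degrades from $2^{\Omega(n)}$ to $2^{\Omega(\sqrt n)}$.
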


\begin{proof}
  Let $N$ be the size of a \decDNNFn\ for $Q$.  By \autoref{TH:2}, $Q$
  has an \FBDD\ of size $N 2^{\log^2 N}$.  By \autoref{TH:3}, $H_1$
  has an \FBDD\ for size $2^{O(\log^2 N)}$, which has to be 
  $2^{\Omega(n)}$ by \autoref{TH:1}, implying that $N$ is
  $2^{\Omega(\sqrt{n})}$.
\end{proof}

Notice that in order to prove %the main result, 
\autoref{thm:main}, we needed the
strong exponential lower bound on the size of an \FBDD\ for $\lin_k$ in
\autoref{TH:1}: the prior quasipolynomial lower bound 
was not sufficient because of the quasipolynomial increase in size in
\autoref{TH:2} moving from \decDNNFn s to \FBDD s.

Since, as we discussed previously, current propositional exact weighted model
counting algorithms (extended with negation to handle DNFs)
without loss of generality 
yield \decDNNFn s of size at
most their running time, we immediately obtain:
\begin{corollary}\label{cor:main}
All current propositional exact model counting algorithms require
running time $2^{\Omega(\sqrt n)}$ to perform weighted model counting for any
query $Q$ that is a Boolean
  combination of the queries in $\hk$ and depends on all $k+1$
queries in $\hk$.
\end{corollary}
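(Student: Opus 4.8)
The plan is to reduce this running-time lower bound to the representation-size lower bound already established in \autoref{thm:main}. The guiding idea is that every relevant exact model counter, when run on the (possibly negated) lineage $\Theta$ of $Q$, leaves behind a \emph{trace} that is itself a \decDNNFn\ computing the same Boolean function, and whose number of nodes is bounded by the running time of the algorithm. Once this correspondence is in place, the $2^{\Omega(\sqrt n)}$ size bound of \autoref{thm:main} transfers immediately into the desired running-time bound, so the heart of the proof is establishing the trace-to-\decDNNFn\ translation and its size guarantee.

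First I would make precise the correspondence between the algorithmic operations of a DPLL-style counter and the node types of a \decDNNFn. Such an algorithm performs two basic branching steps on a residual formula: Shannon expansion on a variable $Z$, which corresponds to a decision node with $0$- and $1$-edges, and decomposition of the residual formula into variable-disjoint components whose probabilities are multiplied, corresponding to a decomposable AND-node. Caching of solved subproblems ensures that each distinct residual subproblem is handled only once, so the recursion collapses from a tree into a DAG whose node count is at most the total work done, i.e.\ the running time. Because the lineages here are DNF formulas while these counters are engineered for CNF input, the natural way to run them is to allow negation steps; each such step contributes a NOT-node, and any decomposable binary connective the algorithm exploits contributes one of the OR-, \XOR-, or \EQUIV-nodes that the class \decDNNFn\ was introduced precisely to accommodate. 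Hence the trace is a \decDNNFn\ for $\Theta$ (after skipping \noop-nodes), of size at most the running time.

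With this in hand the conclusion is immediate: an algorithm running in time $T$ yields a \decDNNFn\ for $\Theta$ with at most $T$ nodes, whereas \autoref{thm:main} guarantees that every \decDNNFn\ for $\Theta$ has at least $2^{\Omega(\sqrt n)}$ nodes, forcing $T \ge 2^{\Omega(\sqrt n)}$. I expect the only genuine obstacle to lie in the first step: arguing that the trace of \emph{every} state-of-the-art counter — including refinements such as dynamic component caching and conflict-directed clause learning — is faithfully captured by a \decDNNFn\ with at most a polynomial overhead, and that caching really does bound the trace size by the running time rather than by something larger. This is a modeling claim rather than a combinatorial one; it has already been argued for \decDNNF s in the knowledge-compilation literature, and the mild extension to NOT- and decomposable connective nodes needed for DNF inputs is exactly what \decDNNFn s provide. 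The remaining steps are a routine chaining of the three theorems and require no further calculation.
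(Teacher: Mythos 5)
Your proposal is correct and follows exactly the paper's route: the paper derives this corollary immediately from \autoref{thm:main} by invoking the same trace argument you spell out, namely that every current exact counter (with negation added to handle DNF lineages) yields a \decDNNFn\ of size at most its running time, as discussed in the introduction and \autoref{sec:fbdds}. Your expansion of the trace-to-\decDNNFn\ correspondence (decision nodes for Shannon expansion, decomposable nodes for component splits, NOT-nodes for negation, caching turning the recursion tree into a DAG) is precisely the modeling claim the paper relies on, so there is nothing to flag.
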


\subsection*{Propositional versus Lifted Model Counting}
%
		%The significance of \autoref{thm:main} and \autoref{cor:main}
		%is that they prove an exponential
		%lower bound on \decDNNF-based weighted model counting algorithms (\ie, all state of the art algorithms used in practical propositional model counters) 
		%on the lineage of any query $Q$ defined over $\hk$ that depends on all
		%coordinates in $\hk$.
		%In the case of one such query $h_k$, 
\autoref{thm:main} when applied to query $h_k$, $k \geq 1$, is not surprising:
\#P-hardness of $h_k$ makes it 
unlikely to have an efficient model counting algorithm.
However, there are many other query combinations %defined over all the functions in 
over $\hk$ for which lifted methods %that take advantage 
taking advantage of the high-level
structure yield polynomial-time %weighted 
model
counting and therefore outperform current propositional techniques.

%Here we show that if we even when model counting can be done in polynomial time on the
%query expression $Q$ (as we explain below).  
%This suggests that model counting
%on the query expression is likely to be more powerful than on its propositional
%grounding (although this theorem does not imply inexistence of such a technique).  

%There are simple examples where model counting for $Q$ can be done in
%time polynomial in the number of Boolean variables:
%\begin{example}
 %If $f(X_0, X_1, X_2) = X_0 X_1 \vee (\neg X_0) X_2$ one can compute the probability as $\Pr(Q) = \Pr(\query_{20} \wedge \query_{21}) +
%\Pr(\neg \query_{20}) \Pr(\query_{22})$, it can be verified that both expressions can be computed in polynomial time.  
%\end{example}

Consider the case when $Q=f(\hk)$ and $f$ is a monotone Boolean formula $f(X_0, \cdots, X_k)$,
and thus $Q$ is a UCQ query.  Here the cases when weighted model
counting for $Q$ can be done in polynomial time are entirely
determined by the structure of the query expression\footnote{The
  propositional formula $f$ describes the query expression $Q$, and
  should not be confused with the propositional grounding of $Q$ on the instance $R(i), S_{1}(i, j),$ $\cdots, S_{k}(i, j), T(j)$; $\ell \in [1, k-1]$, $i, j \in [n]$.}
$f$, and we review it here briefly
following~\cite{DBLP:series/synthesis/2011Suciu}.

To check if weighted model counting for %$Q = f(\query_{k0}, \cdots, \query_{kk})$
$Q$ is computable in polynomial time,
write $f$ as a CNF formula, $f = \bigwedge_i C_i$, where each (positive)
clause $C_i$ is
a set of propositional variables $C_i \subseteq \set{X_0, \cdots, X_k}$.
Define the lattice $(L,\leq)$, where $L$ contains all subsets
$u \subseteq \mathbf{X}$ that
are a union of clauses $C_i$, and the order relation is given by $u
\leq v$ if $u \supseteq v$.  The maximal element of the lattice is
$\emptyset$, (we denote it $\hat 1$), while the minimal element is
$\mathbf{X}$ (we denote it $\hat 0$).  The M\"obius function on the
lattice $L$, $\mu : L \times L \rightarrow \mathbf{R}$, is defined as
$\mu(u,u)=1$ and $\mu(u,v)$ = $-\sum_{u<w \leq v} \mu(w,v)$
\cite{stanley-combinatorics-1997}.  The following
holds~\cite{DBLP:series/synthesis/2011Suciu}: if $\mu(\hat 0, \hat 1)=
0$, then weighted model counting for $Q$ can be done in time polynomial
in $n$ (using in the inclusion/exclusion formula on the CNF); if
$\mu(\hat 0, \hat 1) \neq 0$, then the weighted model counting problem for
$Q$ is \#P-hard. 

\begin{figure}[t]
  \centering
%\hfill
  %\includegraphics[scale=0.2]{figs/qW}
	\includegraphics[scale=0.2]{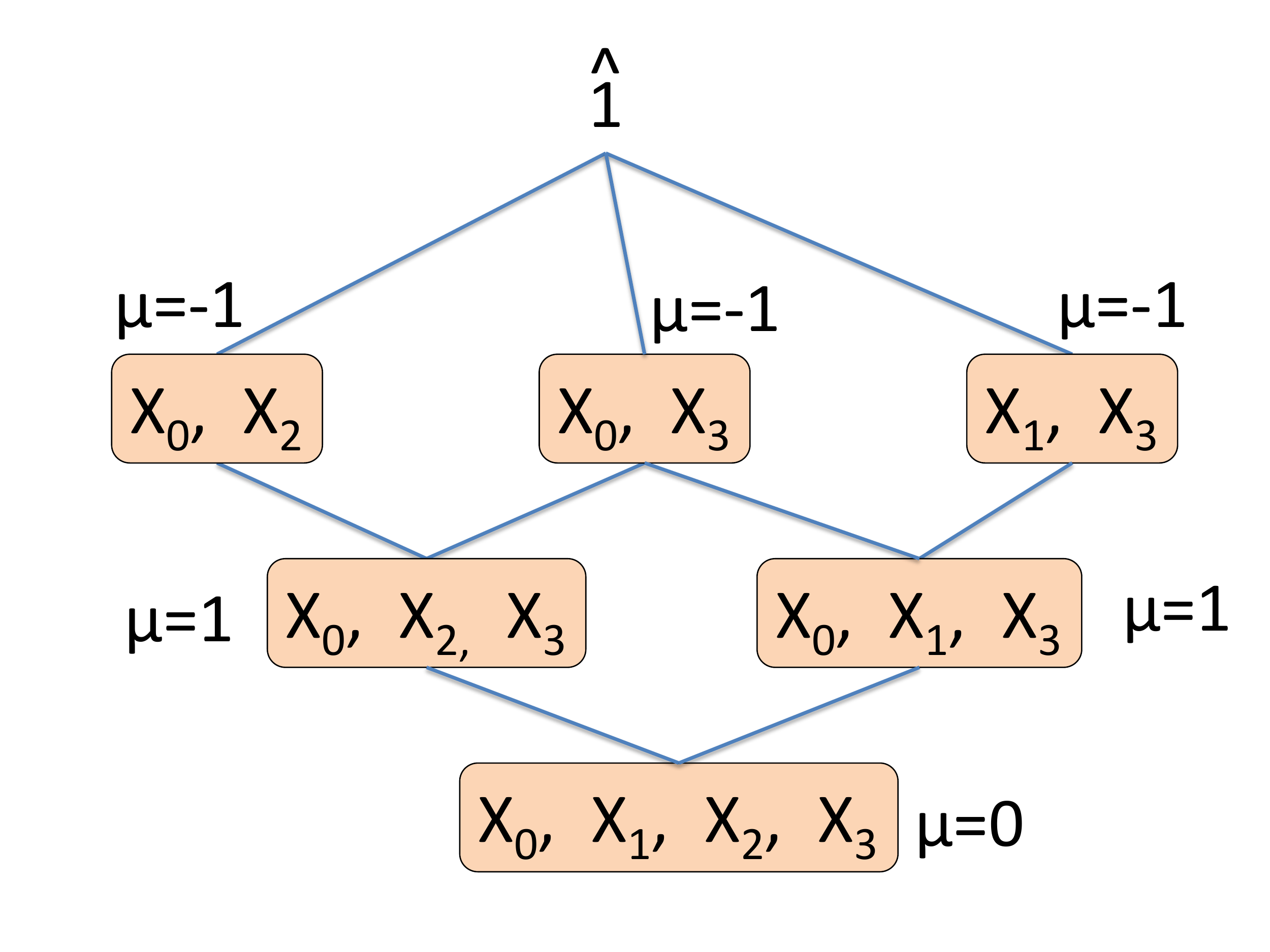}
~~~~~~~~~~~~~
	\includegraphics[scale=0.2]{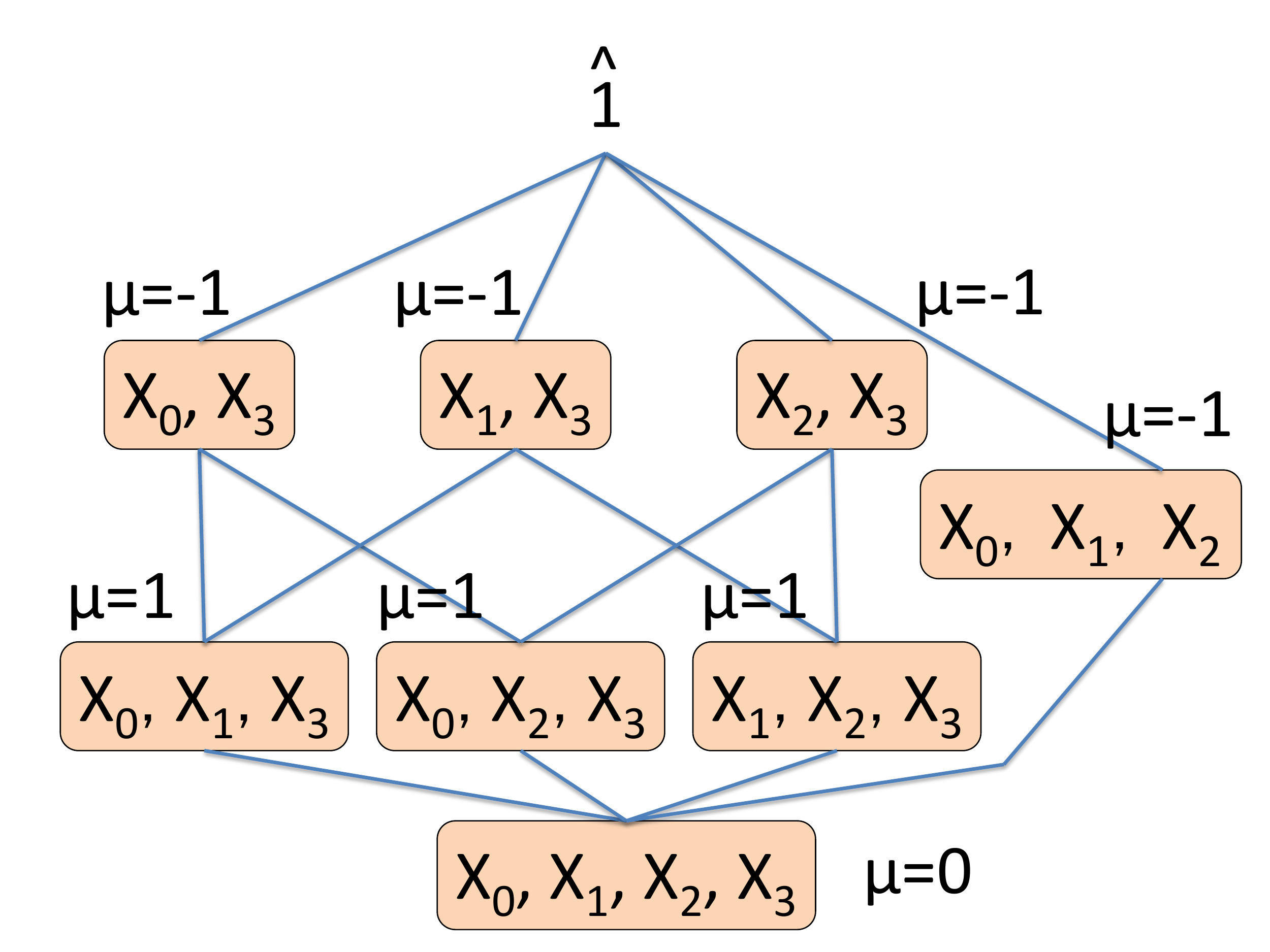}
%\hfill
%\centerline{\hfill (a) \hfill (b) \hfill}
\caption{The lattices for (a) $f_W$, (b) $f_9$. } %the formula $f_W$ (a) and for $f_9$ (b).}
  \label{fig:qw:q9}
  %\vspace{-6mm}
\end{figure} 
\begin{example}
Here we give examples of easy and hard queries:\\[-3ex]
\begin{itemize}
\setlength{\itemsep}{-0.3ex}
	\item  For a trivial example, %$H_k 
	$\query_k = \query_{k0} \vee \cdots \vee \query_{kk}$ has a
single clause, hence its lattice has exactly two elements $\hat 0$ and
$\hat 1$, and $\mu(\hat 0, \hat 1)=-1$, hence %$H_k$ 
$\query_k$ is \#P-hard.  

\item Two more interesting examples %are given below, 
for $k=3$:
{\small
\begin{align*}
  f_W = & (X_0 \vee X_2) \wedge (X_0 \vee X_3) \wedge (X_1 \vee X_3) \\
  f_9~ = & (X_0 \vee X_3) \wedge (X_1 \vee X_3) \wedge (X_2 \vee X_3) \wedge (X_0 \wedge X_1 \wedge X_2)
\end{align*}
}
Their lattices, shown in \autoref{fig:qw:q9}, satisfy $\mu(\hat 0,
\hat 1)= 0$, therefore weighted model counting for $Q_W = f_W(\query_{30},
\query_{31}, \query_{32}, \query_{33})$ and $Q_9 = f_9(\query_{30}, \query_{31}, \query_{32},
\query_{33})$ can be done in polynomial time.  For example, to compute the
probability of $Q_W$ we apply the inclusion/exclusion formula on the
query expression and get $\Pr[Q_W]$ = 
\begin{align*}
%&\Pr[Q_W]=\\ 
    &\phantom{-}\Pr[\query_{30} {\vee} \query_{32}] + \Pr[\query_{30} {\vee} \query_{33}] +
	\Pr[\query_{31} {\vee} \query_{33}] \\
&  - \Pr[\query_{30} {\vee} \query_{32} {\vee} \query_{33}] - \Pr[\query_{30} {\vee} \query_{31}
  {\vee} \query_{33}]\\ 
 & - \Pr[\query_{30} {\vee} \query_{31} {\vee} \query_{32} {\vee} \query_{33}] 
  + \Pr[\query_{30} {\vee} \query_{31} {\vee} \query_{32} {\vee} \query_{33}]
\end{align*}
%
		%\begin{align*}
			%\Pr[Q_W] = & \Pr[\query_{30} \vee \query_{32}] + \Pr[\query_{30} \vee \query_{33}] +  \Pr[\query_{31} \vee \query_{33}] \\
			%- & \Pr[\query_{30} \vee \query_{32} \vee \query_{33}] - \Pr[\query_{30} \vee \query_{31}
			%\vee \query_{33}] \\
				%& \ \ \ \ \ \ \ \ - \Pr[\query_{30} \vee \query_{31} \vee \query_{32} \vee \query_{33}] \\
			%+ & \Pr[\query_{30} \vee \query_{31} \vee \query_{32} \vee \query_{33}]
		%\end{align*}
%
While computing $\Pr[\query_{30} \vee \query_{31} \vee \query_{32} \vee \query_{33}]$ is
\#P-hard (because this query is $\query_3$), the two occurrences of this
term cancel out, and for all remaining terms one can compute the
probability in polynomial time in $n$ (since each misses at least one
term $\query_{30}, \query_{31}, \query_{32}, \query_{33}$).  
%To compute $\Pr[Q_9]$ one proceeds similarly.  
Thus, weighted model counting can be done in
polynomial time for $Q_W$ (similarly for $Q_9$), at the query expression
level.  
\end{itemize}
\end{example}

On the other hand, \autoref{thm:main} proves that, if we ground $Q_W$ or $Q_9$ first,
then any \decDNNF-based model counting algorithm will take exponential time on
the lineage. This leads to the main separation result of this paper:

\begin{theorem}[Main Result]\label{TH:MAIN-SEP}
 If $Q$ is a Boolean
 combination of the queries in $\hk$ that depends on all $k+1$
 queries in $\hk$ such that $\mu(\hat 0, \hat 1)=0$, 
 then weighted model counting for $Q$ can be done in time polynomial
 in $n$, whereas all existing \decDNNF-based propositional algorithms for model counting require 
 exponential time on the lineage. 
\end{theorem}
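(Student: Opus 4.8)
The plan is to prove the two directions of the separation independently and then combine them. Neither direction requires new machinery beyond what has already been assembled in this section: the exponential lower bound is packaged in \autoref{thm:main} and \autoref{cor:main}, and the polynomial upper bound is exactly the lifted-inference characterization recalled immediately before the statement. The real content of the theorem is thus to verify that a single query $Q$ can simultaneously meet the hypothesis each direction needs, and that the class so defined is nonempty.

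For the lifted, polynomial-time direction I would first observe that $\mu(\hat 0, \hat 1)$ is only defined once $f$ has been written as a conjunction of positive clauses, so the hypothesis implicitly forces $f$ to be monotone and $Q = f(\hk)$ to be a UCQ. I would then apply the inclusion/exclusion expansion of $\Pr[Q]$ over the lattice $(L, \leq)$ from~\cite{DBLP:series/synthesis/2011Suciu}. In that expansion the term associated with the bottom element $\hat 0 = \mathbf{X}$ has coefficient $\mu(\hat 0, \hat 1)$, and $\hat 0$ is the only lattice element whose associated query is the full disjunction $\query_{k0} \vee \cdots \vee \query_{kk} = \query_k$, \ie the only \#P-hard contribution. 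The hypothesis $\mu(\hat 0, \hat 1) = 0$ annihilates this term, while every surviving term corresponds to a proper subset of $\mathbf{X}$ and hence to a query omitting at least one $\query_{k\ell}$, which is computable in time polynomial in $n$. Summing the $n^{O(1)}$ surviving terms yields $\Pr[Q]$ in polynomial time; this is precisely the cited characterization, as illustrated for $Q_W$ and $Q_9$.

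For the grounded, exponential-time direction I would apply the chain developed earlier verbatim. Because $Q$ depends on all $k+1$ queries in $\hk$, \autoref{thm:main} shows that every \decDNNFn, and therefore every \decDNNF, for the lineage $\Theta$ of $Q$ has size $2^{\Omega(\sqrt n)}$; this rests on converting a small \decDNNFn\ into an \FBDD\ through \autoref{TH:2}, extracting from it a multi-output \FBDD\ for $(\lin_{k0}, \ldots, \lin_{kk})$ through \autoref{TH:3}, and contradicting \autoref{TH:1}. Since every current exact weighted model counter, once extended with a negation operator so that it can process the DNF lineage, produces (explicitly or implicitly) a \decDNNFn\ no larger than its running time, \autoref{cor:main} forces any such algorithm to spend $2^{\Omega(\sqrt n)}$ time on $\Theta$.

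Combining the two directions yields the advertised $n^{O(1)}$ versus $2^{\Omega(\sqrt n)}$ gap. The step I expect to require the most care is not an estimate but a consistency check: dependence on all $k+1$ queries (needed for \autoref{thm:main}) and $\mu(\hat 0, \hat 1) = 0$ (needed for the lifted bound) are logically independent conditions, so I must confirm that the class they jointly cut out is nonempty, and thus that the separation is not vacuous. This is witnessed by the explicit $k = 3$ queries $Q_W$ and $Q_9$ of \autoref{fig:qw:q9}, each of which depends on all of $\query_{30}, \query_{31}, \query_{32}, \query_{33}$ yet has vanishing M\"obius function.
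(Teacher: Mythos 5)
Your proposal is correct and takes essentially the same route as the paper, which proves this theorem implicitly by combining the cited M\"obius/inclusion--exclusion characterization (the $\hat 0$ term carrying coefficient $\mu(\hat 0,\hat 1)$ cancels, and every surviving term misses some $\query_{k\ell}$ and is hence polynomial-time, exactly as in the paper's $Q_W$ computation) with \autoref{thm:main} and \autoref{cor:main} for the exponential lower bound on the lineage. Your added non-vacuousness check is also the paper's own, witnessed by the queries $Q_W$ and $Q_9$ of \autoref{fig:qw:q9}.
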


\section{Exponential lower bounds for all $\lin_k$}
\label{sec:h0h1}

In this section we prove \autoref{TH:1} which gives lower bounds on the
sizes of \FBDD s computing all $\lin_k$.
We find it convenient to prove these bounds assuming a natural property of
FBDDs.   We show that we can ensure this property with
only minimal change in \FBDD\ size, yielding our claimed lower bounds.

%\subsubsection*{The Unit Rule and Monotone Formulas}

Let $\Phi$ be a Boolean formula.  A {\em prime implicant}
(or {\em minterm}) of $\Phi$ is a term $T$
such that $T \Rightarrow \Phi$ and no proper subterm of $T$ implies $\Phi$.
If $T$ involves $k$ variables then we call it a {\em $k$-prime implicant}. 
$1$-prime implicants are also known as {\em \opis\ variables}.
For example, $X$ and $W$ are \opis\ in $X \vee YZ \vee YU \vee W$.  

The following definition is motivated by the {\em unit clause rule} in DPLL
algorithms which are primarily designed for satisfiability of CNF formulas.
If there is any clause consisting of a single variable or its negation (a unit
clause), then DPLL immediately sets such variables, one after another, since
their value is forced.   

\begin{definition} \label{def:unit:rule}
  Let $\F$ be an \FBDD\ for a Boolean function $\Phi$. Call a node $u$ in $\F$
  a \emph{unit node} if $\Phi_u$ has a \opi, and a
  \emph{decision node} otherwise.   
  We say that $\F$ follows the {\em unit rule} if for every unit node $u$
  the variable tested at $u$ is a unit.
\end{definition}

In the special case that $\Phi$ is a monotone formula, we can apply a 
transformation in order to convert any \FBDD\ $\F$ for $\Phi$ into one
that follows the unit rule and is not much larger than $\F$.  

For a variable $X$ of $\Phi$, define the {\em degree} of $X$ in $\Phi$ to be
the maximum over all partial assignments $\theta$ of the number of \opis\ of
$\Phi[\theta\cup\set{X{=}1}]$ that are not units of $\Phi[\theta]$. 
(If $\Phi$ is a DNF
formula then the degree of $X$ is at most the number of distinct variables
to co-occur in terms with $X$.)   Write $\Delta(\Phi)$ for the maximum degree
of any variable in $\Phi$.  In section~\ref{subsection:unitrule} we prove the
following:

\begin{lemma} \label{lemma:unit}
If $\Phi$ is a monotone formula with \FBDD\ $\F$ of size $N$,
then $\Phi$ has an \FBDD\ of size at most $\Delta(\Phi)\cdot N$
that follows the unit rule.
\end{lemma}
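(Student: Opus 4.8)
The plan is to build the desired \FBDD\ $\F'$ by a single top-down pass over $\F$ that interleaves $\F$'s own decisions with unit-propagation steps. Two facts about a monotone $\Phi$ drive everything. First, setting any variable $Z$ to $0$ creates no new unit and destroys no unit other than $Z$ itself: if $Y \Rightarrow \Phi[Z{=}0]$ then, by monotonicity, raising $Z$ to $1$ can only increase $\Phi$, so $Y \Rightarrow \Phi$ already. Hence along edges that force variables to $0$ the set of units can only shrink. Second, setting a variable to $1$ creates at most $\Delta(\Phi)$ new units: this is exactly the content of the degree definition, and the reason it is quantified over \emph{all} partial assignments $\theta$ is that I will need the bound for every residual $\Phi[\theta]$ that arises during the pass, not merely for $\Phi$ itself.

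The construction maintains states $(u,\rho)$, where $u$ is a node of $\F$ and $\rho$ is the set of variables already forced to $0$ as units but not yet read by $\F$; the state is meant to compute $\Phi_u[\rho]$. At a state, if $\Phi_u[\rho]$ has a unit then, to obey the unit rule, I must test a unit: I prefer the variable that $\F$ tests at $u$ whenever it happens to be a unit, so as to stay aligned with $\F$, and otherwise test any unit $Y$, sending its $1$-edge to the $1$-sink and, on its $0$-edge, adding $Y$ to $\rho$. If $\Phi_u[\rho]$ has no unit, I copy $\F$'s decision at $u$ and recurse on its two children, carrying $\rho$ along. Whenever $\F$ reads a variable already in $\rho$, I follow its $0$-edge and drop that variable from $\rho$.

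Correctness, the read-once property, and the unit rule are then routine. Each state indeed computes $\Phi_u[\rho]$; the diagram is read-once because a forced variable is tested exactly once, inside its propagation chain, and is merely skipped wherever $\F$ would read it again; and the unit rule holds because, by construction, every state whose residual has a unit tests a unit.

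The size bound is the real work, and it is where I expect the main difficulty. The target factor $\Delta(\Phi)$ should come from the second fact: a state is entered carrying genuinely new pending units only immediately after a $1$-edge, and there at most $\Delta(\Phi)$ units can require propagation, so each node of $\F$ need only carry a unit chain of at most $\Delta(\Phi)-1$ extra nodes on top of its single decision copy. The delicate point --- the crux of the proof --- is to control node duplication: to show that the states $(u,\rho)$ collapse to at most $\Delta(\Phi)$ distinct survivors per node $u$, rather than proliferating as $\rho$ ranges over the many $0$-restrictions that can reach $u$. This is exactly where the first fact is essential: because forcing variables to $0$ never spawns a unit, the residuals reached along $0$-edges stay unit-free precisely where $\F$'s own node functions are unit-free, so the decision skeleton of $\F'$ can be identified with that of $\F$ and only the short unit chains are genuinely new. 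Making this alignment precise, so as to obtain the clean factor $\Delta(\Phi)$ rather than a weaker polynomial blow-up, is the part that requires care.
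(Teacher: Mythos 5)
Your plan coincides in outline with the paper's construction: insert chains of unit tests whose $1$-edges go to the $1$-sink, skip (as no-ops) any later node where $\F$ would re-read a propagated variable, and charge the chain lengths to $\Delta(\Phi)$. Your two monotonicity facts are correct and are indeed the paper's starting points (the first is \autoref{proposition:monotone:units}). But there is a genuine gap at exactly the point you flag as the crux: the state collapse is asserted, not proved, and the target you set for it is the wrong one. Showing ``at most $\Delta(\Phi)$ distinct survivors $(u,\rho)$ per node $u$'' would not even yield the claimed bound, since $\Delta(\Phi)$ decision copies of $u$, each carrying a unit chain of length up to $\Delta(\Phi)$ after its $1$-edge, gives a $\Delta(\Phi)^2$ factor. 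What is true, and what the paper proves, is that \emph{exactly one} state survives per node: every path reaching the copy of $u$ carries the same pending set --- namely $\units(\Phi_u)$, the units of the residual at $u$, restricted to variables $\F$ has not yet read --- so the copy of $u$ canonically computes $\Phi_u^{-}=\Phi_u[\units(\Phi_u){=}0]$.

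Your first fact cannot deliver this canonicity by itself. The missing ingredient is a persistence property (\autoref{prop:forever-unit}): if $X\in\units(\Phi_u)$ and $X$ is not queried on a path from $u$ to $v$, then $X\in\units(\Phi_v)$ or $\Phi_v=1$. Combined with read-once-ness, which forces all root-to-$u$ paths of $\F$ to induce the same residual $\Phi_u$ (so $\units(\Phi_u)$ depends on $u$ alone, not on the path), persistence shows that eager propagation along \emph{any} path zeroes exactly the units of $\Phi_u$, regardless of the order and the edges at which they were created; your fact one then guarantees these decision states are unit-free, and the degree bound caps each chain at $\Delta(\Phi)$. This is what lets the paper dispense with states entirely and perform a static rewrite of $\F$: place $|\units(e)|$ chain nodes on each edge $e=(u,v)$, where $\units(e)=\units(\Phi_v)\setminus\units(\Phi_u)$ is nonempty only on $1$-edges by your fact one, and convert to a no-op any node testing a variable that lies in $\units(e)$ for some ancestor edge $e$. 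Two further checks you call ``routine'' in fact also rest on persistence: the well-definedness of the no-op conversion (\autoref{prop:well-defined}, the out-edges of such a node create no new units) and the read-once property when two chain nodes on one path might test the same variable (handled in the proof of \autoref{lem:unit-correctness}). So the architecture is right, but the argument that yields the clean factor $\Delta(\Phi)\cdot N$ --- the path-independence of the pending-unit set --- is precisely what is missing from your sketch.
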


Since $\lin_k$ obviously has degree at most $n$ (for variables $R(i)$ and
$T(j)$), we obtain the following corollary.

\begin{corollary}
\label{lem:unit_hk} If $\lin_k$ has an \FBDD\ of size $N$, then $\lin_k$ has
an \FBDD\ of size at most $n N$ that follows the unit rule.
\end{corollary}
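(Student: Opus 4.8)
The plan is to obtain the corollary as an immediate specialization of \autoref{lemma:unit} to the formula $\lin_k$. Two facts need to be verified before the lemma applies: that $\lin_k$ is monotone, and that its maximum degree satisfies $\Delta(\lin_k)\le n$. Monotonicity is immediate, since each $\lin_{k\ell}$ is a disjunction of products of \emph{positive} Boolean variables and no negation occurs anywhere, so $\lin_k=\bigvee_{\ell=0}^{k}\lin_{k\ell}$ is a monotone DNF.

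For the degree bound I would invoke the observation recorded just after the definition of degree: for a DNF formula the degree of a variable $X$ is at most the number of distinct variables that co-occur with $X$ in some term. The terms of $\lin_k$ are exactly the two-variable products $R(i)\,S_1(i,j)$, the products $S_\ell(i,j)\,S_{\ell+1}(i,j)$ for $1\le \ell\le k-1$, and $S_k(i,j)\,T(j)$, over all $i,j\in[n]$. Reading off co-occurrences term by term: a variable $R(i)$ appears only in $\lin_{k0}$ and there co-occurs precisely with $S_1(i,1),\ldots,S_1(i,n)$, hence with exactly $n$ distinct variables; symmetrically $T(j)$ co-occurs only with $S_k(1,j),\ldots,S_k(n,j)$, again $n$ variables. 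Every $S$-variable, by contrast, has bounded degree: because the two indices are shared within each product, $S_\ell(i,j)$ co-occurs only with its ``neighbors'' $S_{\ell-1}(i,j)$ and $S_{\ell+1}(i,j)$ (replaced by $R(i)$ at $\ell=1$ and by $T(j)$ at $\ell=k$), i.e.\ with at most two variables. The maximum over all variables is therefore attained by the $R$ and $T$ variables, giving $\Delta(\lin_k)\le n$.

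With these two facts in hand, applying \autoref{lemma:unit} to the size-$N$ \FBDD\ for $\lin_k$ produces an \FBDD\ of size at most $\Delta(\lin_k)\cdot N\le nN$ that follows the unit rule, which is exactly the claim. I do not expect any genuine obstacle here: the corollary is a routine instantiation, and the only step requiring care is the degree computation, where one must correctly exploit that the two indices are shared inside each product (so that the doubly-indexed $S$-variables stay at degree two, while only the singly-indexed $R$ and $T$ variables accumulate degree $n$). The real work of this development lies in \autoref{lemma:unit} itself, proved separately in Section~\ref{subsection:unitrule}; granting that lemma, the corollary follows at once.
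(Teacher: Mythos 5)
Your proposal is correct and follows exactly the paper's route: the paper derives this corollary by instantiating \autoref{lemma:unit} with the observation that $\Delta(\lin_k)\le n$, the maximum degree being attained by the $R(i)$ and $T(j)$ variables. Your only addition is to spell out the term-by-term co-occurrence count (the $S$-variables having degree at most two because both indices are shared within each product), which the paper leaves as ``obvious'' but which matches its intent precisely.
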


%An \FBDD\ that follows the unit rule has the following structure. 
%Consider any node $u$.
%If $\Phi_u$ has no \opis, then the node $u$ may test any variable.  If
%$\Phi_u$ has one or more \opis, then $u$ is part of a prime implicant
%branch (or PI-branch) $u = u_1, u_2, \ldots, u_p$ such that (1) $u_i$
%tests a prime implicant, (2) its 1-branch leads to 1, (3) its 0-branch
%leads to $u_{i+1}$, (4) it has a single parent.
%
%The motivation behind this definition comes from a standard heuristics
%deployed by the DPLL algorithm, which, recall, is primarily designed
%for CNF formulas: 
%if there is a clause consisting of a single variable
%$X$, called a {\em unit rule}, then use $X$ as the next variable to
%test.  Since our interest is in the DNF formulas $\lin_k$, the
%NOT QUITE.  THERE MIGHT BE MORE THAN ONE SUCH VARIABLE
%role of a unit rule is taken by a 1-prime implicant. 

Now \autoref{TH:1} is an immediate consequence of \autoref{lem:unit_hk}
together with the following lemma.

\begin{lemma}
\label{lem:theta1}
  Every \FBDD\ $\F$ for $\lin_k$ that follows the unit rule has size $\geq 2^{(n-1)}$.
\end{lemma}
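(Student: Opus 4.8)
The plan is to prove the bound by a \emph{fooling-set} argument, exhibiting $2^{n-1}$ inputs that must reach pairwise-distinct nodes of $\F$. The basic tool is the read-once property in the following form: if a variable $X$ is tested anywhere in the sub-FBDD $\F_u$ rooted at a node $u$, then $X$ is tested on no root-to-$u$ path (otherwise concatenating would give a root-to-sink path testing $X$ twice). Hence any two inputs that reach the same node $u$ are sent to the \emph{same} sink, and thus give the \emph{same} value of $\lin_k$, by any completion that assigns only variables tested below $u$. I will use this contrapositively: I construct many inputs on which $\lin_k$ is false, together with, for each pair, a ``separating'' completion that touches only variables fresh for both inputs and yet makes $\lin_k$ true for one input and false for the other; this forces the inputs to sit at distinct nodes.

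To choose the inputs, I first describe the false region. One checks that $\lin_k[\theta]=0$ exactly when, in every cell $(i,j)$, the column $(S_1(i,j),\dots,S_k(i,j))$ has no two consecutive $1$'s (so every $\lin_{k\ell}$, $1\le\ell\le k-1$, is killed), every row $i$ with $R(i)=1$ has $S_1(i,\cdot)\equiv 0$, and every column $j$ with $T(j)=1$ has $S_k(\cdot,j)\equiv 0$. The hardness lives in the tension between the \emph{row} structure at the $R$--$S_1$ end (from $\lin_{k0}$) and the \emph{column} structure at the $S_k$--$T$ end (from $\lin_{kk}$), linked cell-by-cell through the chain. I would parameterize a family of false inputs $\{x_A\}_{A}$ by subsets $A$ of a size-$(n-1)$ index set, built so that $A$ controls which rows/columns are ``armed'' (i.e.\ a support $1$ has been exposed in a place where flipping an endpoint $R(i)$ or $T(j)$ from $0$ to $1$ would complete a term of $\lin_{k0}$ or $\lin_{kk}$, and hence satisfy $\lin_k$). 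For $k\ge 2$ the perturbation at an endpoint must be threaded through the length-$k$ chain using the per-cell ``no two consecutive $1$'s'' condition so that the arming propagates coherently.

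The unit rule is what lets me compare the nodes reached by different $x_A$ and guarantee freshness of the separating variables. Under the unit rule the FBDD behaves like a forced DPLL search: as soon as a $1$ of $S_1$ is exposed in some row $i$, the variable $R(i)$ becomes a \opi\ of $\Phi_u$ and must be tested immediately, and to keep $\lin_k$ false it must be set to $0$ (symmetrically for $T(j)$ via $S_k$). I would use this to define a canonical prefix for each $x_A$ ending once a fixed ``core'' set of variables has been read, and to arrange the construction so that the endpoint variables used by the separating completions are \emph{never} forced, hence never read, along any such prefix; by the read-once fact they are then fresh for both inputs of any pair. For $A\ne A'$ I then pick an endpoint (an $R(i)$ or $T(j)$) on which the two inputs' armings differ and flip it to $1$ in the completion: this creates a satisfied term for exactly one of the two inputs, separating them. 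Summing over all $2^{n-1}$ choices of $A$ yields the claimed $2^{n-1}$ distinct nodes.

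The main obstacle is the combinatorial design of the family $\{x_A\}$ together with its separating completions so that three requirements hold \emph{simultaneously}: (i) every $x_A$ keeps $\lin_k$ false; (ii) each separating completion flips only endpoint variables that the unit rule has not forced (hence not read) on either prefix, so that they are genuinely fresh below the common node; and (iii) for general $k$ the endpoint perturbation provably propagates through the chain to satisfy some $\lin_{k\ell}$. Getting the freshness argument right---interleaving the read-once ``no variable below is read above'' property with the unit rule's forced reads so that the core frontier is well defined and the separating endpoints lie strictly outside it---is where the care is concentrated; this is also exactly the place where the unit rule hypothesis of the lemma is indispensable, since without it the FBDD could read the $S_\ell$ entries in an uncontrolled order and no canonical frontier would exist.
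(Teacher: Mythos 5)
There is a genuine gap here, and it sits exactly where you say the ``care is concentrated.'' Your overall strategy --- produce $2^{n-1}$ partial inputs kept inside the false region of $\lin_k$, use the unit rule to control forced reads, and separate pairs by flipping an endpoint variable $R(i)$ or $T(j)$ --- is in the same spirit as the paper's proof, but your sketch defers the two components that constitute the actual argument. First, the ``combinatorial design of the family $\{x_A\}$'' that you flag as the main obstacle is, in the paper, the definition of \emph{admissible} paths via four explicit per-cell patterns (see \autoref{tab:eg-pattern}), chosen so that along any path in $\mathcal{P}$ the \emph{first} variable touched in each row (or column) is an unforced decision --- this is what yields $n-1$ free binary choices and hence the injection from $\set{0,1}^{n-1}$ into distinct nodes. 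Second, your claim that an endpoint perturbation ``provably propagates through the chain'' for general $k$ is precisely the content of \autoref{lem:uniquepaths}: an induction that, starting from the first divergence of two paths at some $S_\ell(i,j)$, alternately forces $S_{\ell+2p}(i,j)=0$ on one path and $S_{\ell+1+2p}(i,j)=0$ on the other, in both directions along the chain, until reaching $R(i)$ and $T(j)$ and contradicting admissibility. Nothing in your proposal substitutes for this induction, and it is delicate --- the paper needs different admissibility patterns depending on the parity of $k$ (the even and $k=0$ cases are handled separately in the appendix).

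Moreover, one step you do commit to is false as stated: you propose to ``arrange the construction so that the endpoint variables used by the separating completions are never forced, hence never read, along any such prefix.'' The unit rule constrains only \emph{unit} nodes; at every non-unit decision node the \FBDD\ may test any variable whatsoever, including $R(i)$ or $T(j)$ before any $S_1$ or $S_k$ entry in that row or column has been exposed --- an adversarial $\F$ can simply open by querying $R(1),\ldots,R(n-1)$. So no choice of the inputs $x_A$ alone can guarantee freshness of the separating endpoints, and for the same reason your ``fixed core frontier'' does not exist: the variable order is path-adaptive, which is why the paper defines the stopping set $\mathcal{P}$ path-dependently via the saturation counts $|Row(P)|$ and $|Col(P)|$. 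The paper's resolution is \autoref{lem:RTdetermined}: if two paths in $\mathcal{P}$ reach the same node, so that their residual subfunctions coincide, then they must test the same $R$ and $T$ variables with the same values; consequently the first divergence is forced onto an $S_\ell(i,j)$ variable, where the chain induction takes over. You would need to prove an analogue of that lemma --- it, and not freshness-by-construction, is what makes separation via endpoints sound. (A smaller imprecision: when $S_1(i,j)=1$ is exposed, $S_2(i,j)$ becomes a unit alongside $R(i)$, so the unit rule guarantees only that $R(i)$ is tested before any non-unit variable, not ``immediately.'')
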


The proof of \autoref{lem:theta1} follows using a general technique in which one
defines a notion of {\em admissible} paths in $\F$.
We will give such a definition and show that no two admissible paths in $\F$
can lead to the same node of
$\F$ since they must correspond to different subfunctions of $\lin_k$.
We will further show that every admissible path branches off from other
admissible paths at least $n-1$ times, guaranteeing that $\F$ must contain
a complete binary tree of distinct nodes of depth $n-1$ (in which edges may have
been stretched to partial paths).

For the remainder of this section we fix some \FBDD\ $\F$ for
$\lin_k$ that follows the unit rule.
Given a path $P$ in $\F$,
let $Row(P)$ be the set of $i \in [n]$ so that $P$
tests $R(i)$ at a decision node or there are $\ell$ and $j$ so that $P$ tests
$S_\ell (i, j)$ at a decision node; similarly, let $Col(P)$ be the set of
$j \in [n]$ which $P$ tests $T(j)$ at a decision node or there is some
$\ell$ and $i$ so that $P$ tests $S_\ell(i, j)$ at a decision node. 
Let $\mathcal{P}$ be the set of partial paths $P$ starting at the root and
ending at a (non-leaf) decision node so that both $|Row(P)|<n$ and
$|Col(P)|<n$ but any extension of $P$ has either $|Row(P)|=n$ or $|Col(P)|=n$.

The following is an easy observation.

\begin{lemma}
\label{lem:RTdetermined}
For all $k \geq 0$, if $P_1, P_2 \in \mathcal{P}$ with
$\lin_k [P_1] = \lin_k [P_2]$ then the two paths test the same set of $R$
and $T$ variables and must assign those tested the same values.
\end{lemma}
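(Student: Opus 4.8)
The plan is to prove that the residual function $\lin_k[P]$ by itself encodes the test status and the assigned value of every $R$- and $T$-variable, so that $\lin_k[P_1]=\lin_k[P_2]$ forces $P_1$ and $P_2$ to agree on all of them. The crux is that the unit rule makes the behaviour of any $P\in\mathcal{P}$ on these variables extremely rigid.

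First I would prove the key structural fact for $k\ge 1$: no $R(i)$ and no $T(j)$ is ever set to $1$ along $P$. Since $P$ ends at a decision node $u$, the function $\lin_k[P]$ it computes has no unit (no $1$-prime implicant). If $R(i)=1$ were tested, then for every $j$ with $S_1(i,j)$ still free the term $R(i)S_1(i,j)$ would restrict to $S_1(i,j)$, so $S_1(i,j)\Rightarrow\lin_k[P]$ would be a unit, contradicting that $u$ has none. Hence every $S_1(i,j)$, $j\in[n]$, would have to be tested, forcing $|Col(P)|=n$ and contradicting $P\in\mathcal{P}$. The symmetric argument, using the terms $S_k(i,j)T(j)$ and $Row(P)$, rules out $T(j)=1$. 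Thus for $k\ge 1$ every tested $R$- or $T$-variable is set to $0$, and it only remains to recover which ones are tested.

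Next I would show that this is visible in the function: $\lin_k[P]$ depends on $R(i)$ if and only if $R(i)$ is untested. A tested variable is simply absent from $\lin_k[P]$, so the function cannot depend on it. For the converse I would use a fresh column: since $|Col(P)|<n$ there is $j^*\notin Col(P)$, so $S_1(i,j^*)$ and its partner ($S_2(i,j^*)$, or $T(j^*)$ if $k=1$) are free; setting $S_1(i,j^*)=1$, the partner and all remaining free variables to $0$ collapses $\lin_k[P]$ to exactly $R(i)$, because no term is fixed to $1$ ($u$ is not a leaf) and every other term has a factor fixed to $0$ (here one checks that a free $R(i)$ precludes any $S_1(i,j)=1$, again by the unit rule). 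So flipping $R(i)$ flips the output, witnessing dependence. A fresh row $i^*\notin Row(P)$ gives the symmetric statement for $T(j)$. Since $\lin_k[P_1]=\lin_k[P_2]$ is one and the same function, the two paths depend on the same variables, hence have identical sets of free, and therefore of tested, $R$- and $T$-variables, all assigned $0$.

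Finally I would treat $k=0$ separately, since $\lin_0=\bigvee_{i,j}R(i)S(i,j)T(j)$ is a $3$-DNF for which Step 1 fails: an $R(i)$ can legitimately be set to $1$. The free/tested classification via dependence on $R(i)$ carries over verbatim (a fresh $j^*$ with $S(i,j^*)=T(j^*)=1$ witnesses dependence). To pin down the value of a tested $R(i)$ I would observe that $S(i,\cdot)$ occurs in $\lin_0$ only alongside $R(i)$: if $R(i)=0$ then $\lin_0[P]$ depends on no $S(i,j)$, whereas if $R(i)=1$ a fresh column again exhibits dependence on $S(i,j^*)$; so the value is read off as ``$R(i)=1$ iff $\lin_0[P]$ depends on some $S(i,j)$'', and symmetrically for $T$. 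I expect the main obstacle to be exactly Step 1 — the careful use of the unit rule together with the fresh column/row guaranteed by $P\in\mathcal{P}$ to forbid setting $R(i)$ or $T(j)$ to $1$, which is what makes all tested values $0$ for $k\ge 1$ and must be replaced by a direct value-readout for the $3$-DNF $\lin_0$.
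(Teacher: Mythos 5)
There is a genuine gap, and it sits exactly at the heart of the lemma. Your Step 1 --- that no $P\in\mathcal{P}$ ever sets $R(i)=1$ or $T(j)=1$ when $k\ge 1$ --- is false, because $Row(P)$ and $Col(P)$ count only variables tested \emph{at decision nodes}. When a path sets $R(i)=1$, the free variables $S_1(i,j)$ do become units, but the unit rule then forces them to be tested at \emph{unit} nodes (and set to $0$, since setting one to $1$ sends the path to a sink); none of these tests contributes to $Col(P)$, so the path arrives at a new decision node with $Col(P)$ unchanged and can perfectly well lie in $\mathcal{P}$. This is not a corner case: the admissible paths the paper later counts (cases 2 and 4 of the definition for odd $k$; see \autoref{tab:eg-pattern}) explicitly set $R(i)=T(j)=1$, and the $2^{n-1}$ count in \autoref{lem:theta1} needs all four patterns --- so if your Step 1 were true, the downstream lower bound would collapse, which is a sanity check that would have flagged the error. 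Since your ``same values'' conclusion for $k\ge 1$ is derived solely from Step 1, the value-agreement half of the lemma is left unproven in precisely the hard case: $P_1$ sets $R(i)=1$ while $P_2$ sets $R(i)=0$.

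That case is what the paper's proof is really about, and its argument shows how to use the unit rule correctly rather than to exclude $R(i)=1$: if $P_1$ sets $R(i)=1$, the unit rule forces $P_1$ to set $S_1(i,j)=0$ for \emph{all} $j$, so $\lin_k[P_1]$ does not depend on any $S_1(i,j)$; but taking a fresh column $j_2\notin Col(P_2)$, the term $S_1(i,j_2)S_2(i,j_2)$ (or $S_1(i,j_2)T(j_2)$ when $k=1$) is untouched on $P_2$ because $R(i)=0$ there blocks the only way column-$j_2$ variables in row $i$ could have been unit-tested, so $\lin_k[P_2]$ \emph{does} depend on $S_1(i,j_2)$ --- contradicting $\lin_k[P_1]=\lin_k[P_2]$. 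Your Step 2 (recovering the \emph{set} of tested $R$ and $T$ variables from the dependence structure of the residual, via a fresh row or column) is sound and is in the same spirit as the paper's argument, and your separate treatment of $k=0$ with a direct value readout is reasonable; but to repair the proposal you must replace Step 1 by an argument of the above type that directly rules out $P_1,P_2$ assigning a tested $R(i)$ (or $T(j)$) opposite values.
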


\begin{proof}
Suppose that there is some $R(i)$ such that $P_1$ assigns $R(i)$
value $b\in \{0,1\}$ that $P_2$ does not.   
$\lin_k[P_2]=\lin_k[P_1]$ does not depend on $R(i)$ so we can assume without
loss of generality that $P_2$ assigns $R(i)$ value $1-b$.
%By definition there must exist $(i_1,j_1)\in [n]^2$ such that 
%Consider $(i_1,j_1)\notin Row(P_1)\times Col(P_1)$.
Suppose without loss of generality that $P_1$ sets $R(i)$ to 1 and
$P_2$ sets $R(i)$ to 0. 

First consider the case $k=0$. 
Let $j_1\in [n]-Col(P_1)$.
Since $P_2$ sets $R(i)$ to 0, $\lin_0 [P_2]$ does not depend on $S (i, j_1)$
but $\lin_0 [P_1]$ sets neither $T(j_1)$ nor $S (i', j_1)$ for any $i'$,
so it does depend on $S (i, j_1)$, a contradiction.

Now suppose that $k\ge 1$.
Let $j_2\in [n]-Col(P_2)$.
As $\F$ follows the unit rule, this implies that $P_1$ sets
$S_1 (i, j ) = 0$ for all $j$, which in particular implies that $\lin_k[P_1]$,
and thus $\lin_k[P_2]$, does not depend on $S_1 (i, j_2)$.
However, since $j_2\notin Col(P_2)$, all terms of $\lin_{k\ell}$ for
$\ell\in [k]$ involving indices $(i',j_2)$ are unset for every $i'$,
which implies that $\lin_k[P_2]$ depends on $S_1(i,j_2)$, a contradiction. 
The case when the difference is $T(j)$ is analogous.
\end{proof}

We will first prove Lemma \ref{lem:theta1} for $k = 2m + 1$ odd.
The cases when $k>0$ is even as well as when $k = 0$ use almost identical
techniques; the proofs in these cases are given in the appendix.
%Appendix~\ref{sec:app-h0h1}.

%\newpage
\begin{definition}
Let $P$ be a partial path through $\F$ starting at the root. It is \emph{admissible} if for all $i, j$, it is consistent with one of the four following assignments:\\[-3ex]
\begin{enumerate}
\setlength{\itemsep}{-0.3ex}
\item
$R(i) = T(j) = 0$ and $S_\ell (i, j) = 0$ for all $\ell$ odd and $S_\ell(i, j) = 1$ for all $\ell$ even,
\item
$R(i) = T(j) = 1$ and $S_\ell (i, j) = 1$ for all $\ell$ even and $S_\ell(i, j) = 0$ for all $\ell$ odd,
\item
$R(i) = 0, T(j) = 1$ and $S_\ell (i,j) = 1$ for all $\ell$ even and $S_\ell (i, j) = 0$ for all $\ell$ odd, or
\item
$R(i) = 1, T(j) = 0$ and $S_\ell (i, j) = 1$ for all $\ell$ even and $S_\ell (i, j) = 0$ for all $\ell$ odd.
\end{enumerate}
$P$ is \emph{forbidden} if it is not admissible.
Let $\mathcal{A} \subset \mathcal{P}$ be the set of admissible paths in $\mathcal{P}$.
(See \autoref{tab:eg-pattern} for the case $k=5$).\\[-4ex]
\end{definition}
\begin{table}[t]\footnotesize
\begin{center}
\resizebox{0.45\textwidth}{!}{
    \begin{tabular}{|c| c | c | c | c | c | c |}
    \hline
    $R(i)$ & $S_1 (i, j)$ & $S_2 (i, j)$ & $S_3 (i,j)$ & $S_4 (i, j)$ & $S_5 (i, j)$ & $T(j)$ \\ \hline
   0 & 1 & 0 & 1 & 0 & 1 & 0 \\ \hline
    1 & 0 & 1 & 0 & 1 & 0 & 1 \\ \hline
    0 & 0 & 1 & 0 & 1 & 0 & 1 \\ \hline
    1& 0 & 1 & 0 & 1 & 0 & 0 \\ \hline
    \end{tabular}
    }
\end{center}
\vspace*{-2ex}
   \caption{The patterns for admissible paths for $k=5$.}
\vspace{-0.2in}
	\label{tab:eg-pattern}
\end{table}
\begin{lemma}
\label{lem:uniquepaths}
If $P_1, P_2 \in \mathcal{A}$ are distinct then
$\lin_k [P_1] \neq \lin_k [P_2]$.
\end{lemma}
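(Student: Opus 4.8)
The plan is to show that the map $P \mapsto \lin_k[P]$ is injective on the admissible set $\mathcal{A}$; equivalently, assuming $\lin_k[P_1] = \lin_k[P_2]$ for $P_1,P_2 \in \mathcal{A}$, I will reconstruct the full assignment made by each path from its subfunction and conclude $P_1 = P_2$. The first move is to apply \autoref{lem:RTdetermined}: from $\lin_k[P_1] = \lin_k[P_2]$ it already follows that $P_1$ and $P_2$ test exactly the same $R$- and $T$-variables and give them the same values. So it suffices to recover, from $\lin_k[P]$ together with the now-known values of the tested $R(i),T(j)$, which variables $S_\ell(i,j)$ the path tested and with what values. Throughout I will use that $P\in\mathcal{P}$ guarantees a free row $i^\ast$ with $i^\ast\notin Row(P)$ and a free column $j^\ast$ with $j^\ast\notin Col(P)$, whose cells are entirely untouched and provide slack for distinguishing completions.

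The engine of the reconstruction is the cell-by-cell decomposition of $\lin_k$: for each pair $(i,j)$ the incident terms form the chain $R(i),S_1(i,j),\dots,S_k(i,j),T(j)$, whose adjacent entries are exactly the variable pairs of the two-variable terms, so $\lin_k$ is satisfied iff some cell has two adjacent chain entries both equal to $1$; admissibility says that, on each cell, the path's assignment agrees with the dense alternating ``no two consecutive $1$'s'' pattern. Given this, I want to argue that $\lin_k[P]$ depends on a variable $S_\ell(i,j)$ exactly when the path leaves it free, and that its value when tested is forced by the endpoint pattern. Suppose instead that the two paths disagree at some $v = S_\ell(i,j)$ — either by assigning it opposite values (possible only in cells where $R(i)$ or $T(j)$ is left untested, since once the endpoints are fixed the dense pattern determines every $S_\ell(i,j)$) or because exactly one of them tests it. In either case I will build a single total assignment $\sigma$ with $\lin_k(\sigma|_{P_1}) \neq \lin_k(\sigma|_{P_2})$, where $\sigma|_{P}$ denotes $\sigma$ with $P$'s tested variables overwritten by their path values: I zero out all cells except $(i,j)$ (legal because the already-fixed entries never form a consecutive $1$-pair, and the free row $i^\ast$ and column $j^\ast$ absorb any shared endpoint), and inside cell $(i,j)$ I fix the chain alternately from $v$ outward until reaching a free variable or a pinned endpoint, leaving exactly one consecutive pair straddling $v$ live; flipping $v$ then flips $\lin_k$ under the path that left it free (or the two fixed values already produce different firing), the desired contradiction.

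The one configuration that could defeat this — an untested $v$ whose two chain-neighbors are both already fixed to $0$, so that $\lin_k[P]$ cannot see $v$ at all — is exactly what $\mathcal{P}$-maximality forbids: testing such a $v$ would add nothing new to $Row(P)$ or $Col(P)$ (both $i$ and $j$ already appear via the neighbors), so $P$ would not be maximal below the threshold; likewise, the requirement that a $\mathcal{P}$-path end at a decision node, which has no unit, forces the partner of every variable set to $1$ to be tested as well, keeping the chains well-behaved. I expect the real work to be the case analysis underlying the ``fix the chain outward'' step: the interior case $2 \le \ell \le k-1$ versus the boundary cases $\ell\in\{1,k\}$, where the partner in the live term is $R(i)$ or $T(j)$ and where the parity of $k = 2m+1$ governs whether the odd-high pattern with $S_1 = S_k = 1$ is available, together with the sub-cases according to which neighbor of $v$ is free. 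In each one must check that the outward fixing and the zeroing of other cells respect both the read-once property and the alternating admissible pattern, after which injectivity of $P\mapsto\lin_k[P]$ on $\mathcal{A}$, and hence the lemma, follows.
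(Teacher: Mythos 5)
Your proposal has a genuine gap, and it sits exactly where you placed your patch. Your reconstruction strategy treats the two paths as abstract partial assignments and tries to handle a disagreement at $v=S_\ell(i,j)$ in two cases, including the case where exactly one path tests $v$. In that case you must rule out the configuration where the other path leaves $v$ free but has both chain-neighbors of $v$ set to $0$, making $v$ invisible to its residual function --- and your claim that ``$\mathcal{P}$-maximality forbids'' this is false. Maximality in the definition of $\mathcal{P}$ constrains \emph{extensions of the path within the FBDD} $\F$: the variable tested next is whatever the endpoint node of $\F$ tests, and the condition only says that continuing would push $|Row(P)|$ or $|Col(P)|$ to $n$. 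It says nothing about which variables were tested \emph{along} the path, and cannot force $\F$ to have tested $v$ just because $i\in Row(P)$ and $j\in Col(P)$ already. Indeed such configurations are perfectly admissible: for $k$ odd and $\ell$ even, the patterns with $R(i),T(j)$ not both $0$ set the odd-indexed neighbors $S_{\ell-1}(i,j)$, $S_{\ell+1}(i,j)$ to $0$, so an admissible path may kill both neighbors and never test $S_\ell(i,j)$. What actually saves the lemma is a structural fact you never invoke: two distinct root-paths in a single FBDD must \emph{first diverge at a common decision node}, where both test the \emph{same} variable $S_\ell(i,j)$ with opposite values (it cannot be an $R$ or $T$ variable by \autoref{lem:RTdetermined}). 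The paper's proof starts from this divergence and runs a zipper induction along the chain of cell $(i,j)$: since the residual $F$ at the paths' endpoints has no units (the endpoint is a decision node under the unit rule) and the divergence kills the term $S_\ell(i,j)S_{\ell+1}(i,j)$ in $F$, forced zeros propagate alternately --- $S_{\ell+1+2p}(i,j)=0$ on $P_2$, $S_{\ell+2p}(i,j)=0$ on $P_1$ --- upward to $T(j)$ and downward to $R(i)$, with admissibility blocking the alternative settings at each step.

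The second, related flaw is your contradiction engine. You aim to build a single total assignment $\sigma$ on which the two residuals fire differently, contradicting $\lin_k[P_1]=\lin_k[P_2]$. But in the terminal configuration of the argument no separating $\sigma$ exists: both residuals really are equal, and the contradiction in the paper is instead that $P_2$ \emph{cannot be admissible} --- the zipper forces $S_1(i,j)=S_k(i,j)=0$ on $P_2$, while \autoref{lem:RTdetermined} (applied to the values $R(i)=0$, $T(j)=0$ derived on $P_1$) forces $R(i)=T(j)=0$ on $P_2$, which matches none of the four patterns. So the endgame is a violation of the admissibility pattern, not a functional separation, and your outline has no mechanism to reach it. Your first move (matching the $R$ and $T$ variables via \autoref{lem:RTdetermined}) and your observation that the $(R,T)$ values pin down the cell pattern are both correct and coincide with the paper; but without the first-divergence structure and with the incorrect maximality claim, the core of the proof does not go through as written.
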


\begin{proof}
Suppose $P_1, P_2 \in \mathcal{A}$ are distinct with
$\lin_k [P_1] = \lin_k [P_2] = F$.
Let $u$ be the first node at which $P_1$ and $P_2$ diverge,
and assume without loss of generality that $P_1$ takes the $0$-edge and
$P_2$ takes the $1$-edge.
Notice that $u$ must be a decision node.
By \autoref{lem:RTdetermined}, 
the node $u$ cannot test a $R(i)$ or $T(j)$ variable
so it must test $S_\ell(i, j)$ for some $i, j$.
Assume that $\ell$ is even (the case when $\ell$ is odd is symmetrical;
switch the roles of $P_1$ and $P_2$).
Then $F$ does not contain the prime implicant
$S_\ell (i, j) S_{\ell + 1} (i, j)$ and does not contain any units, but along
$P_2$ the variable $S_\ell (i, j) = 1$ , so $S_{\ell + 1} (i, j) = 0$ along $P_2$.
This implies that $F$ does not contain the prime implicant $S_{\ell + 1} (i, j) S_{\ell + 2} (i, j)$ 
but since $P_1$ cannot set $S_{\ell + 1} (i, j) = 0$ as otherwise it would be forbidden,
this implies that $P_1$ sets $S_{\ell + 2} (i, j) = 0$.
Inductively, we conclude that
$S_{\ell + 2p} (i, j)$ is set to zero on $P_1$ and $S_{\ell + 1 + 2p} (i, j)$
is set to zero on $P_2$ for all non-negative integers
$p \leq (k - \ell - 1) / 2$.
In particular, $S_k (i, j) = 0$ along $P_2$, so the prime implicant
$S_k (i, j) T(j)$ does not appear in $F$; as $F$ has no units, $T(j)$ must be
set to zero in $P_1$, as otherwise it would be forbidden.
Doing the same procedure but inducting downwards, we also conclude that
$R(i) = 0$ in $P_1$ and $S_1(i, j) = 0$ in $P_2$.
However, by \autoref{lem:RTdetermined}, this implies that $R(i) =T(j) = 0$ in
$P_2$, and since $S_1 (i, j) = S_k (i, j) = 0$ we conclude that $P_2$ is
forbidden, which is a contradiction.
\end{proof}

\begin{proof}[of Lemma \ref{lem:theta1}]
By Lemma \ref{lem:uniquepaths}, it suffices to count how many paths are in
$\mathcal{A}$, because each such path must correspond to a unique node in the
\FBDD.
We show that there are at least $2^{n - 1}$ such paths.
For any path $P \in \mathcal{A}$, call an assignment at a decision node $u$
along $P$ \emph{forced} if taking the opposite assignment would have resulted
in a forbidden path, and call the assignment \emph{unforced} otherwise.
We claim that there are at least $n - 1$ unforced assignments along any path
$P \in \mathcal{P}$.
Since some extension of $P$ either sets some variable in all rows or in all
columns, $P$ itself must have either $|Row(P)|=n-1$ or $|Col(P)|=n-1$.
Without loss of generality assume that $|Row (P)| = n - 1$.
Then the patterns of admissible paths ensure that, %it is readily checked that 
for each $i\in Row(P)$, the first decision
node $u$
along $P$ testing a variable either of the form $R(i)$ or $S_\ell (i, j)$ for
some $\ell$ and $j$ must be unforced (see \autoref{tab:eg-pattern}). So there must be at least $n - 1$
unforced assignments.

We now define an injection from $\{0, 1\}^{n - 1}$ to $\mathcal{A}$,
as follows: map each sequence of bits $(a_1, \ldots, a_{n - 1})$ to the
unique path $P \in \mathcal{A}$ that at its $i$-th unforced decision takes the
$a_i$-edge for $i \leq n - 1$, takes the $1$-edge at all unforced decisions
after its first $n - 1$ unforced decisions, makes all forced decisions as
required, and at each unit node takes the $0$ branch.
The existence of such an injection implies that
$|\mathcal{A}| \geq 2^{n - 1}$, as claimed.
\end{proof}

\subsection{Proof of Lemma~\ref{lemma:unit}}
\label{subsection:unitrule}

We begin with a simple property of monotone functions.
For a formula $\Phi$ let $\units(\Phi)$ denote the set of units in $\Phi$.
The following proposition will be useful because it implies that for monotone
formulas, setting units cannot create additional units.

\begin{proposition}
\label{proposition:monotone:units}
If $\Phi$ is a monotone function and $W$ is a variable in $\Phi$, then
$\units(\Phi[W{=}0])\subseteq \units(\Phi)$.
\end{proposition}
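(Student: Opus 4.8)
The plan is to reduce the set-membership claim to a clean semantic characterization of units for monotone functions. First I would observe that, for a monotone $\Phi$, a variable $Z$ is a unit of $\Phi$ precisely when $\Phi[Z{=}1]\equiv 1$ and $\Phi\not\equiv 1$. Indeed, $\Phi[Z{=}1]\equiv 1$ says exactly that the single-literal term $Z$ implies $\Phi$, so $\{Z\}$ is an implicant; and the only proper subterm of $\{Z\}$ is the empty term, which implies $\Phi$ iff $\Phi\equiv 1$, so the side condition $\Phi\not\equiv 1$ is exactly what makes $\{Z\}$ prime. Monotonicity guarantees that the only candidate $1$-prime implicants are positive single literals, so no other cases arise and this characterization is exhaustive.

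With this characterization in hand, let $Z\in\units(\Phi[W{=}0])$; note that $Z\neq W$ and that $Z$ is a variable of $\Phi$. Unfolding the characterization applied to $\Phi[W{=}0]$ yields $\Phi[W{=}0,Z{=}1]\equiv 1$ together with $\Phi[W{=}0]\not\equiv 1$. It then remains to establish the two conditions witnessing $Z\in\units(\Phi)$, namely $\Phi[Z{=}1]\equiv 1$ and $\Phi\not\equiv 1$.

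For the first condition, the key is monotonicity in the variable $W$: since $\Phi[W{=}0,Z{=}1]\equiv 1$, monotonicity gives $\Phi[W{=}1,Z{=}1]\ge\Phi[W{=}0,Z{=}1]\equiv 1$, so $\Phi[W{=}1,Z{=}1]\equiv 1$ as well; as these two restrictions exhaust the possible values of $W$, we conclude $\Phi[Z{=}1]\equiv 1$. For the second condition I argue contrapositively: if $\Phi\equiv 1$ then $\Phi[W{=}0]\equiv 1$, contradicting $\Phi[W{=}0]\not\equiv 1$. Both conditions therefore hold, which gives $Z\in\units(\Phi)$ and establishes the claimed inclusion.

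The only real subtlety — and the step I would take the most care over — is the primality bookkeeping encoded by the side condition $\Phi\not\equiv 1$, i.e., ruling out the degenerate case in which the restriction collapses to a constant and a single literal would fail to be a genuine prime implicant; the substantive mathematical content, namely the monotone lifting of a forced variable from the restriction $W{=}0$ to an arbitrary value of $W$, is a one-line application of monotonicity.
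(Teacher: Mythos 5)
Your proof is correct. The paper actually states this proposition \emph{without} proof, treating it as a simple property of monotone functions, and your argument supplies exactly the intended justification: characterize units of a monotone $\Phi$ as the variables $Z$ with $\Phi[Z{=}1]\equiv 1$ and $\Phi\not\equiv 1$ (noting that monotonicity rules out negative literals as prime implicants), then lift $\Phi[W{=}0,Z{=}1]\equiv 1$ to $\Phi[Z{=}1]\equiv 1$ by monotonicity in $W$, with the contrapositive check that $\Phi\equiv 1$ would force $\Phi[W{=}0]\equiv 1$. Your care over the $\Phi\not\equiv 1$ primality side condition is consistent with how the paper uses the notion elsewhere (e.g., its remark in the proof of the correctness lemma that constant functions have no units), so the proposal cleanly fills the gap the paper left implicit.
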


%\begin{definition}
%We say that an FBDD $\F$ is {\em non-redundant} if and only if whenever
%$\F$ has a node $v$ that tests a variable $X$, the function $\Phi_v$ computed
%at node $v$ depends on the variable $X$.
%\end{definition}
%
%It is easy to see that any FBDD can be computed by an equivalent smaller FBDD
%that is non-redundant.

Let $\F = (V, E)$ be an FBDD for a monotone formula $\Phi$,
where $V$ and $E$, respectively, denote the nodes and edges of $\F$.
%We assume that $\F$ is non-redundant without loss of generality. 
%For every node $v\in V$, let $\units(v)$ denote $\units(\Phi_v)$. 
For every edge $e=(u,v)\in E$, define $\units(e)=\units(\Phi_v)-\units(\Phi_u)$.
Observe that by \autoref{proposition:monotone:units}, any edge $e$ for which
$\units(e)$ is non-empty must be labeled 1 in $\F$.

Fix some canonical ordering $\pi$ on the variables of $\Phi$.
Define the following transformation on $\F$ to produce an \FBDD\ $\F'$ for
$\Phi$ that follows the unit rule:
The set of nodes $V'$ of $\F'$ is given by: 
\begin{align*}
V' = & V \cup \setof{(e,i)}{e=(u,v) \in E, u \in V,\ 1 \leq i \leq |\units(e)|} %\label{eq:vprime}
\end{align*}
The other details of $\F'$ are given as follows:
\begin{itemize}
\setlength{\itemsep}{-0.3ex}
\item For $e=(u,v)\in E$, the new vertices $(e,1),\ldots, (e,|\units(e)|)$
will appear in sequence on a path from $u$ to $v$ that replaces the edge $e$.
(If $\units(e)$ is empty then the original edge $e$ remains.)
\item Edge $(u,(e,1))$ in $\F'$ will have label 1, which is the label that
$e$ has in $\F$.
\item The variable labeling each new vertex $(e,i)$ in $V'$ will be the $i$-th
element of $\units(e)$ under the ordering $\pi$; we denote this variable by
$Z_{e,i}$.
\item The 1-edge out of each new vertex $(e,i)$ will lead to the 1-sink.  The
0-edge will lead to the next vertex on the newly created path.
\item For a vertex $w\in V$ labeled by a variable $W$, if $W$ appears in
$\units(e)$
for any edge $e=(u,v)$ such that there is a path in $\F$ from $v$ to $w$ then
the node $w$ becomes a no-op node in $\F'$, namely its labeling
variable $W$ is removed, its 1-outedge is removed, and its 0-outedge is
retained with no label.
Otherwise, $w$ keeps the variable label
$W$ as in $\F$ and its outedges remain the same in $\F'$.
\end{itemize}

The size bound required for \autoref{lemma:unit} is immediate by construction
since the degree of a variable upper-bounds the number of new units that
setting it can create.
However, in order for this construction to be well-defined we need to ensure
that the conversion to no-op nodes does not conflict with the conversion of
edges to paths of units.

\begin{proposition}
\label{prop:well-defined}
If the variable $W$ labeling $w$ is in $\units(e)$ for some edge $e=(u,v)$
for which there is a path from $v$ to $w$, then the outedges $e'$ of $w$
have $\units(e')=\emptyset$.
\end{proposition}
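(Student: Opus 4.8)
The plan is to prove that \emph{both} outedges of $w$ carry no new units, treating the $0$-edge and the $1$-edge separately. The $0$-edge is immediate from the observation already recorded just before the proposition: since $\Phi$ is monotone, by \autoref{proposition:monotone:units} any edge of $\F$ with a nonempty $\units(\cdot)$ must be labeled $1$. The $0$-outedge $e'=(w,w_0)$ is labeled $0$, so necessarily $\units(e')=\emptyset$. Hence all the work concentrates on the $1$-outedge, and the key idea is that $W$ is \emph{already forced to $1$} by the time we reach $w$, so that the subfunction at its $1$-child collapses to the constant $1$ (which has no units at all).

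First I would unpack the hypothesis. Since $W\in\units(e)=\units(\Phi_v)\setminus\units(\Phi_u)$, in particular $W\in\units(\Phi_v)$, i.e. $W$ is a $1$-prime implicant of $\Phi_v$; equivalently $\Phi_v[W{=}1]\equiv 1$ (the constant true function). Next I would use the read-once structure of $\F$ to relate $\Phi_w$ to $\Phi_v$. Let $P$ be the given path from $v$ to $w$ and let $\rho$ be the partial assignment it induces (the variables tested at the decision nodes of $P$ strictly before $w$, set to the labels of the edges followed). The standard restriction property of FBDDs gives $\Phi_w=\Phi_v[\rho]$. Because $w$ tests $W$ and $\F$ is read-once, $W$ cannot also be tested on $P$ before $w$, so $W\notin\mathrm{dom}(\rho)$; the substitutions therefore commute and
\[
\Phi_w[W{=}1]=\Phi_v[\rho][W{=}1]=\Phi_v[W{=}1][\rho]=1[\rho]=1.
\]

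Finally I would conclude. The $1$-child $w_1$ of $w$ satisfies $\Phi_{w_1}=\Phi_w[W{=}1]\equiv 1$. The constant $1$ function has no units: its only prime implicant is the empty term, so no single variable is prime, whence $\units(\Phi_{w_1})=\emptyset$ and thus $\units((w,w_1))=\emptyset$. Combined with the $0$-edge case, every outedge $e'$ of $w$ has $\units(e')=\emptyset$, as claimed. The only nonroutine point — the main obstacle — is justifying the two read-once facts: that $\Phi_w=\Phi_v[\rho]$ and that $W\notin\mathrm{dom}(\rho)$. Both follow from $\F$ being read-once (a root-to-sink path through $e$, along $P$, and then testing $W$ at $w$ would test $W$ twice if $W$ occurred on $P$), so I would state the FBDD restriction lemma explicitly or cite it as standard and make the non-re-testing of $W$ precise, after which the rest is the short calculation above.
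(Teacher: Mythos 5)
Your proof is correct and takes essentially the same approach as the paper's: you dispose of the $0$-edge via the monotonicity observation (\autoref{proposition:monotone:units} implies edges with nonempty $\units(\cdot)$ are $1$-labeled), and for the $1$-edge you use the read-once property to argue $W$ is not set between $v$ and $w$, so the $1$-child of $w$ computes the constant $1$, which has no units. Your explicit restriction calculation $\Phi_w=\Phi_v[\rho]$ with $W\notin\mathrm{dom}(\rho)$ is just a more detailed rendering of the paper's step that $\Phi_v=W\lor\Phi'_v$ persists as $\Phi_w=W\lor\Phi''$.
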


\begin{proof}
The assumption implies that $W$ is a unit of some $\Phi_{v}$.
Therefore $\Phi_{v}=W\lor \Phi'_{v}$ for some $\Phi'_{v}$.
Since $\F$ is an FBDD and $W$ labels $w$, $W$ is not set on the path from $v$
to $w$, hence $\Phi_w=W\lor \Phi''$ for some formula $\Phi''$. 
A 0-outedge $e_0$ from $w$ always has $\units(e_0)=\emptyset$ and
the 1-outedge $e_1=(w,w')$ of $w$ sets $W$ to 1 and hence $\Phi_{w'}=1$,
which implies that $\units(e_1)$ is also empty.   
\end{proof}

The following simple proposition is useful in reasoning about the correctness
of our construction.

\begin{proposition}
\label{prop:forever-unit}
If there is a path from $u$ to $v$ in $\F$ and $X \in \units(\Phi_u)$ then
either $X\in \units(\Phi_v)$, or $\Phi_v=1$, or $X$ is queried
on the path from $u$ to $v$ and hence $\Phi_v$ does not depend on $X$.
\end{proposition}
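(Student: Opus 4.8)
The plan is to split on whether the variable $X$ is queried at some decision node on the path from $u$ to $v$; the case where it is queried will land in the third alternative, and the case where it is not will land in the first-or-second alternative.

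First I would establish a restriction identity that underlies both cases: if $\theta$ denotes the partial assignment recorded by following the edges of the path from $u$ to $v$, then $\Phi_v = \Phi_u[\theta]$. This follows by induction along the path from the recursive definition of $\mathbf{\Phi}_u$ together with the read-once property of $\F$. At a decision node $w$ testing a variable $Y$ with children $w_0,w_1$, we have $\Phi_w = (\neg Y)\Phi_{w_0}\vee Y\Phi_{w_1}$; since $\F$ is an \FBDD, $Y$ is not tested in $\F_{w_1}$, so $\Phi_{w_1}$ is independent of $Y$ and $\Phi_w[Y{=}1]=\Phi_{w_1}$, and symmetrically $\Phi_w[Y{=}0]=\Phi_{w_0}$. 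Chaining this along the edges of the path gives the identity.

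Suppose $X$ is queried on the path from $u$ to $v$. Then I would show $\Phi_v$ does not depend on $X$, which is the third alternative. Indeed, if $\Phi_v$ depended on $X$ then $X$ would be tested at some node $w$ of $\F_v$; concatenating a root-to-$u$ path, the $u$-to-$v$ segment (which by assumption already tests $X$), and a $v$-to-$w$-to-sink path would produce a single root-to-sink path testing $X$ twice, contradicting the read-once property that makes $\F$ an \FBDD.

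Now suppose $X$ is not queried on the path, so $\theta$ does not assign $X$. Since $X\in\units(\Phi_u)$ means $X$ is a $1$-prime implicant, we have $X\Rightarrow\Phi_u$, i.e.\ $\Phi_u = X\vee\Phi_u[X{=}0]$. Applying the restriction identity and using that $\theta$ leaves $X$ untouched gives $\Phi_v = X\vee \Phi_u[X{=}0][\theta]$, whose second disjunct does not involve $X$; hence $X\Rightarrow\Phi_v$. Thus $X$ is an implicant of $\Phi_v$, and it is in fact a prime implicant, so $X\in\units(\Phi_v)$ (the first alternative), unless the empty term already implies $\Phi_v$, i.e.\ unless $\Phi_v\equiv 1$ (the second alternative). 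The main obstacle is the restriction identity $\Phi_v=\Phi_u[\theta]$: it is the only place the read-once property is truly needed, and it is precisely what guarantees both that the two generated disjuncts are genuinely independent of $X$ and that the queried-variable case forces independence; once it is in hand, both cases are immediate.
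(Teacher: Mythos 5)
Your proof is correct and takes essentially the same route as the paper's: a case split on whether $X$ is set along the path, with the unqueried case using the decomposition $\Phi_u = X \vee \Phi_u[X{=}0]$ restricted along the path to conclude that $X$ implies $\Phi_v$, hence $X\in\units(\Phi_v)$ unless $\Phi_v=1$. The only difference is that you spell out the restriction identity $\Phi_v=\Phi_u[\theta]$ and the read-once contradiction argument for the queried case, both of which the paper's terser proof leaves implicit.
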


\begin{proof}
$X\in \units(\Phi_u)$ implies that $\Phi_u=X\lor F$ for some monotone formula
$F$.  If $X$ is set on the path from $u$ to $v$ then $\Phi_v$ does not 
depend on $X$; otherwise $\Phi_v=X\lor F'$ for some monotone formula $F'$
and either $X$ is a prime implicant or $F'$ is the constant 1 and hence
$\Phi_v=1$.
\end{proof}

Taken together with the size bound for our construction, the following lemma
immediately implies \autoref{lemma:unit}.

\begin{lemma}\label{lem:unit-correctness}
Let $\Phi$ be monotone and computed by FBDD $\F$.
Then $\F'$ is an FBDD for $\Phi$ that follows the unit rule.
\end{lemma}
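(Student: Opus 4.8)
The plan is to derive both halves of the claim---that $\F'$ is a read-once DAG computing $\Phi$, and that it follows the unit rule---from a single path-independent invariant that pins down the function computed at every node, proved by induction from the sinks up. For an original node $v\in V$ let $\mathcal{D}_v$ be the set of units of $\Phi_v$ that are \emph{newly introduced} upstream, i.e. the variables $Z\in\units(\Phi_v)$ for which some edge $e=(x,y)$ has $Z\in\units(e)$ together with a path from $y$ to $v$ in $\F$. The invariant I would establish is: (a) every original node $v$ has $\F'_v$ computing $\Phi_v[\mathcal{D}_v{=}0]$; and (b) an inserted node $(e,i)$ with $e=(u,v)$ and $m=|\units(e)|$ has $\F'_{(e,i)}$ computing $\bigvee_{j=i}^{m} Z_{e,j}\vee \F'_v$. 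Since the root has no incoming edges, $\mathcal{D}_r=\emptyset$, so (a) gives $\F'_r=\Phi_r=\Phi$, which is the correctness statement; for the monotone lineages $\lin_k$ of interest $\Phi$ has no units at all, so in fact $\mathcal{D}_v=\units(\Phi_v)$ for every reachable $v$.

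The first ingredient I would record is the structural meaning of a no-op conversion. If $w$ (label $W$) is turned into a no-op, then by definition $W\in\units(e')$ for an edge $e'=(x,y)$ with a path from $y$ to $w$; since $\F$ is read-once, $W$ is not queried strictly before $w$ along that path, so \autoref{prop:forever-unit} forces $W\in\units(\Phi_w)$ (the alternative $\Phi_w=1$ is impossible, as $w$ queries $W$). Monotonicity then yields $\Phi_w=W\vee\Phi_{w_0}$ and $\Phi_{w_1}=1$, where $w_0,w_1$ are the $0$- and $1$-children; hence replacing $w$ by its $0$-edge, as the construction does, computes exactly $\Phi_w[W{=}0]=\Phi_{w_0}$. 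This underlies case (a) at no-op nodes. \autoref{prop:well-defined} guarantees the conversion never collides with an inserted chain, so the construction is well defined.

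For the inductive step, the inserted chain on an edge $e=(u,v)$ satisfies $\F'_{(e,i)}=Z_{e,i}\vee\F'_{(e,i+1)}$ because each $1$-edge leads to the $1$-sink, giving (b); substituting (a) at $v$, the chain computes $\bigl(\bigvee_{Z\in\units(e)}Z\bigr)\vee\Phi_v[\mathcal{D}_v{=}0]$. Each $Z\in\units(e)\subseteq\units(\Phi_v)$ is a single-variable disjunct, so it remains a unit after the other variables of $\mathcal{D}_v$ are zeroed, and the chain therefore computes $\Phi_v[(\mathcal{D}_v\setminus\units(e)){=}0]$---precisely the residual that should feed the $1$-edge at $u$. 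Ordinary Shannon expansion at $u$ (its $0$-edge carries no new units, since by \autoref{proposition:monotone:units} every edge with $\units(e)\neq\emptyset$ is labelled $1$) then reproduces (a) at $u$. The one genuinely delicate point, which I expect to be the main obstacle, is showing that $\mathcal{D}_v$ is path-independent and that it increments by exactly $\units(e)$ across each edge---equivalently, that the units ``pending'' when the induction reaches $v$ coincide with $\mathcal{D}_v$ and that variables dropped from $\mathcal{D}$ across a $0$-edge no longer occur in the residual; this is exactly where the global definition of the no-op rule and the persistence statements \autoref{proposition:monotone:units} and \autoref{prop:forever-unit} must be combined carefully.

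It remains to verify that $\F'$ is read-once and follows the unit rule. For read-once I would rule out a repeated test of any variable along a root-to-sink path: by \autoref{prop:forever-unit} a unit stays a unit until queried, so if an inserted node on an edge $e$ tests $Z$ and a later original node $w$ also carries label $Z$, then $Z\in\units(e)$ with a path to $w$, which by definition makes $w$ a no-op and hence no re-test occurs; a symmetric argument---using that setting a variable removes all dependence on it, so it can no longer be a prime implicant---excludes a second inserted test of $Z$. For the unit rule, the inserted nodes test units by construction, as each $Z_{e,i}$ is a single-variable disjunct of the function they compute. Every original decision node $v$, by invariant (a), computes $\Phi_v$ with its newly introduced units deleted; when $\Phi$ itself has no units---as for every $\lin_k$, our application---one has $\mathcal{D}_v=\units(\Phi_v)$, so $\F'_v$ is unit-free and imposes no constraint, leaving the inserted nodes as the only unit nodes of $\F'$, all of which comply. (For a general monotone $\Phi$ carrying top-level units one first tests those units at the root, which does not affect the $\Delta(\Phi)\cdot N$ size bound.) Together with that size bound this proves \autoref{lem:unit-correctness} and hence \autoref{lemma:unit}.
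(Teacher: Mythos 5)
Your proof is correct and follows the same skeleton as the paper's: the same construction is taken as given, read-once-ness is ruled out by the identical four-case analysis (old/old, old-then-new, new-then-old via the no-op rule, new/new via \autoref{prop:forever-unit}), and correctness rests on an invariant saying that each original node computes $\Phi_v$ with its accumulated units zeroed out. The genuine difference is in how that invariant is organized. The paper runs a top-down induction over \emph{paths}: every assignment $\theta'$ labeling a path of $\F'$ to $v\in V$ satisfies $\Phi[\theta']=\Phi_v^-$ and decomposes as $\theta'=\theta\cup\set{\units(\Phi_v){=}0}$ for a path $\theta$ of $\F$, with an explicit three-case step (1-edge with inserted chain, 0-edge into a no-op, 0-edge otherwise). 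You instead run a bottom-up induction over \emph{subfunctions}, with the chain identity $\F'_{(e,i)}=\bigvee_{j\ge i}Z_{e,j}\vee\F'_v$ as a separate clause; this makes path-independence of the node functions automatic, at the price of the bookkeeping you correctly flag as delicate --- that $\mathcal{D}_v$ is well defined and increments by exactly $\units(e)$ across each edge. That point does go through: if $Z\in\units(\Phi_v)$ then read-once-ness forbids any path to $v$ from querying $Z$, so by \autoref{prop:forever-unit} one gets $\mathcal{D}_v=\units(\Phi_v)\setminus\units(\Phi)$, a path-independent set; this is the exact counterpart of the paper's $\theta'=\theta\cup\set{\units(\Phi_v){=}0}$ bookkeeping, so you have deferred rather than avoided it. One place where your version is actually \emph{more} careful than the paper's: by distinguishing $\mathcal{D}_v$ from $\units(\Phi_v)$ you notice that pre-existing units of $\Phi$ itself are never in any $\units(e)$, are never tested by the construction, and hence the unit rule can fail at the root for a general monotone $\Phi$; the paper's base case (``trivially true for the root,'' which needs $\emptyset=\theta\cup\set{\units(\Phi_r){=}0}$) silently assumes $\units(\Phi)=\emptyset$, harmless for the intended application to $\lin_k$ but a real gap in full generality, and your patch of prepending tests of the root units repairs the statement without affecting the $\Delta(\Phi)\cdot N$ bound. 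Net assessment: same method, a cleaner dual formulation of the induction, and a legitimate minor fix to the lemma as stated.
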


\begin{proof}
We first show that $\F'$ is an FBDD, namely, every root-leaf path $P$ in $\F'$
queries each variable at most once.   
  $P$ contains old nodes $u\in V$ and new nodes $(e,i)$.  
  Suppose that a variable $X$ is tested twice along a path.  Clearly the two
  tests cannot be done by old nodes since $\F$ is an FBDD.  
  It cannot be tested by an old node $u$ and later by a new node
  $(e,i)$, because once tested by $u$, for any descendent node $v$, the
  formula $\Phi_v$ no longer depends on $X$, hence $X \not\in \units(e)$.
  It cannot be first tested by a new node $(e,i)$ and then later by an old
  node $u$ since the test at the old node would have been removed and
  converted to a no-op by the last item in the construction of $\F'$.
  Finally, suppose that the two tests are done by two new nodes $(e_1,i)$, and
  $(e_2,j)$ on $P$, where we write $e_1 = (u_1,v_1)$ and $e_2=(u_2,v_2)$. 
  then we must have $X \in \units(v_1)$ and $X \not\in\units(u_2)$
  where there is a path from $v_1$ to $u_2$ in $\F$.  
  By \autoref{prop:forever-unit}, this implies that $\Phi_{u_2}$ does not
  depend on $X$ which contradicts the requirement that
  $X\in \units(v_2)$ since $v_2$ is a child of $u_2$.

By construction, $\F'$ obviously follows the unit rule.
It remains to prove that $\F'$ computes $\Phi$.  We show something slightly
stronger:
For any function $F$, define $F^-$ to be
$F[\units(F){=}0]$, in which all variables in $\units(F)$ are set to 0.
We claim by induction that 
for all nodes of $v\in V$, if $\theta'$ labels a path in $\F'$ from the
root to $v$, then $\Psi[\theta']=\Phi_v^-$.
and $\theta'=\theta\cup \set{\units(\Phi_v)=0}$ for some $\theta$
that labels a path in $\F$ from the root to $v$.
This trivially is true for the root.
If it is true for the output nodes, then $\F'$ correctly computes $\Psi$
since constant functions have no units.
Let $v\in V$ and suppose that this is true for all vertices $u$ such that
there is some path $\theta'$ from the root to $v$ in $\F'$ for which
$u$ is the last vertex in $V$ on $\theta'$.
By the construction, for each such $u$ there must be an edge $e=(u,v)\in E$.
Suppose that the variable tested at $u$ in $\F$ is $W$.
We have 3 cases: If $e=(u,v)\in E$ is a 1-edge then $\Phi_v=\Phi_u[W{=}1]$.
Every path $\theta'$ from the root to $v$ 
through $u$ is of the form
$\theta'=\theta\cup\set{W{=}1}\cup \set{\units(e){=}0}$ for some $\theta$
that labels a path from the root to $u$ in $\F'$.  (This is true even if
$\units(e)$ is empty.)
By induction, $\Psi[\theta]=\Phi_u^-=\Phi_u[\units(\Phi_u){=}0]$ and
by definition $\units(\Phi_v)=\units(\Phi_u)\cup \units(e)$ so
$\Psi[\theta']=\Phi_u[W{=}1\cup \set{\units(\Phi_v){=}0}]=\Phi_v[\units(\Phi_v){=}0]$ as required.
If $e=(u,v)\in E$ is a 0-edge of $\F$ then $\Phi_v=\Phi_u[W{=}0]$.
If $u$ became a no-op vertex in $\F'$ then there was some ancestor $w$ of $u$
at which $W$ became a unit of $\Phi_w$.  Since $\F$ is an FBDD, it does not
query $W$ between that ancestor and $u$.
By \autoref{prop:forever-unit}, either $W\in\units(\Phi_u)$ or
$\Phi_u=1$.  In the latter subcase, $\Phi_v=\Phi_u=1$ and the correctness for
$u$ implies that for $v$.  In the former case,
$\units(\Phi_u)=\units(\Phi_v)\cup\set{W}$ and $\Phi_u^-=\Phi_v^-$ and
again the correctness for $\Phi_u^-$ implies that for $\Phi_v^-$.
In the case that $u$ does not become a no-op vertex, $W$ is not a unit of
$\Phi_u$ so $\units(\Phi_u)=\units(\Phi_v)$ and the fact that all paths
to $u$ yield $\Phi_u[\units(\Phi_u){=}0]=\Phi_u[\units(\Phi_v){=}0]$ for
all paths to $v$ that pass through $u$ as the previous vertex in $V$,
The last edge to $v$ adds the extra $W{=}0$ constraint.  Adding this
constraint to $\Phi_u[\units(\Phi_v){=}0]$ yields 
$\Phi_v[\units(\Phi_v){=}0]=\Phi_v^-$ as required. 
Therefore the statements holds for all possible paths from the root to $v$.
%The replacement of each edge $e=(u,v)$ by a path of unit tests clearly maintains
%this property inductively, namely if it is true at node $v$ then it remains
%true at node $u$.  
%The maintenance of the property through a node $w$ of $V$ labeled $W$ that is
%converted to a no-op node when $W$ is unit in some ancestor $v$ of $w$
%follows from the fact that $W$ is not queried prior to node $w$ and
%hence by \autoref{prop:forever-unit}, $W$ either
%is a unit of $\Phi_w$ or $\Phi_w=1$.  
%In the first case the no-op conversion exactly corresponds to
%setting the unit $W$ to 0 as required; in the second case, the function 
%computed by the 0-child also must be the constant 1 as required.
\end{proof}

% That is, the $\units(u)$ are the variables which are made into
% \opis by setting this edge. We now notice two useful and
% straightforward to prove properties.

\ignore{
%I INCORPORATED MOST OF THIS IN THE ABOVE
\begin{lemma}
\label{lemma:primeimplicantproperties}
% In any FBDD with edge set $E$, if $\units: E \rightarrow 2^{\Var F}$ is defined as above, then:
\begin{enumerate}[(a)]
\item \label{item:1} If $u$ is above $v$, $x \in \units(u)$, and $F_v$
  depends on $x$, then $x \in \units(v)$.  ``Once an $1$-prime implicant, always an
  $1$-prime implicant''.
\item \label{item:2} If edge $e_1$ is above $e_2$, then $\units(e_1) \cap
  \units(e_2) = \emptyset.$ ``A variable can become an \opi\ only once''.
\item \label{item:3} For any node $v$ and $P$ is any path from
	 the origin to $v$, if $E$ is the set of edges 
	in that path from the origin to $v$, then $\units(v) \subseteq \cup_{e \in E} 
	\units(e)$.  ``Every
  \opi\ became an \opi\ along some edge''.
\end{enumerate}
\end{lemma}
\begin{proof}
  For \autoref{item:1}, suppose $x\in \units(u)$, then $F_u= x \vee G$,
  for some formula $G$ that does not depend on $x$.  If $u$ is above
  $v$, then $F_v = F_u[\theta] = x[\theta] \vee G[\theta]$, where
  $\theta$ is the partial assignment along any path from $u$ to $v$.
  Hence, either $\theta$ sets $x$ (to 0 or 1) and then $F_v$ is
  independent of $x$, or $F_v = x \vee G[\theta]$ and then $x \in
  \units(v)$.  
  
  For \autoref{item:2},

  Finally, for \autoref{item:3}, let $e=(u, v)$ be the
  last edge on the path $P$.  Then $\units(v) \subseteq \units(u) \cup
  \units(e)$, and the claim follows by induction on $\units(u)$.
\end{proof}

%\begin{itemize}
%\item If $u \in V_0$, then $u$ becomes a no-op in $\F_c$, with a
%  single edge, namely its former 0-branch in $\F$.
%\item Every 1-edge $e = (u,v)$ in $\F$ become a series of edges
%  $u \rightarrow (e,1) \rightarrow (e,2) \ldots
%  (e,p_e) \rightarrow v$ in $\F_c$.  Each of the new nodes $(e,i)$ tests one
%  \opi\ in $\units(e)$, in some order.  The first edge, $u \rightarrow
%  (e,1)$, has the same label as the original edge $(u,v)$ (i.e. 0 or
%  1).  All other edges $(e,i-1) \rightarrow (e,i)$, and $(e,p_e)
%  \rightarrow v$, are 0-branches.  In addition, each node $(e,i)$ has
%  a 1-branch to the leaf 1.  Notice that, if $\units(e) = \emptyset$,
%  then the branch $e=(u,v)$ is copied unchanged in $\F_c$.
%\end{itemize}

We prove the following properties.

\begin{lemma}
\label{lemma:free}
  $\F_c$ is read-once, meaning that every path from the root to a leaf node
  tests every variable only once.
\end{lemma}

\begin{proof}

\begin{lemma}
\label{lemma:simplifiedformula}
  For every node $u \in V$, let $F_u$ be the formula defined by the
  node $u$ in $\F$.  Then, the formula defined by the node $u$ in
  $\F_c$ is precisely $F_u^-$.
\end{lemma}

This implies that $\F$ and $\F_c$ compute the same Boolean function $F$.

\begin{proof}
  By \autoref{item:3} of the lemma above, all \opis\ at $u$ became
  \opis\ along some edge, and in the new FBDD that edge will test the
  \opi\ then will continue with the 0 branch. Moreover, if $z$ became an \opi\
  along some edge and is no longer an \opi\ at $u$, by \autoref{item:1} above
  $F_u$ does not depend on $z$, so setting it to zero did not change the formula
  defined by the node in $\F_c$.
\end{proof}

\begin{proof}[of Lemma \ref{lemma:unit}]
  That we have constructed a valid \FBDD\ for $\Phi$ follows from Lemmas \ref{lemma:free} and \ref{lemma:simplifiedformula}. We show that $\F_c$ follows the unit rule. At any old node $u$, the formula at $u$ is $F_u^-$ and has no \opi variables by \autoref{proposition:monotone:units};
  thus, only new nodes have formulas with \opi variables, and these indeed test
  an \opi. Thus, all unit nodes test \opi variables. All \opi\ nodes are the new nodes, of
  the form $(e,i)$, the construction guarantees that nodes of the form $(e, i)$
  have exactly one parent, thus all unit nodes have a unique parent.

  Finally, the size of $\F_c$ can be seen from \autoref{eq:vprime} to be
  most $\Delta (\Phi) |\F|$, as claimed.
  \end{proof}
}

\section{Lower bounds for Boolean combinations over $\Hk$}
\label{sec:hk}

In this section we prove \autoref{TH:3}.
Throughout this section we fix $f(\vecX)=f(X_0, \ldots, X_k)$, a Boolean
function that depends on all variables $\vecX$, and a domain size
$n>0$.  
To prove \autoref{TH:3}, we first prove that any \FBDD\ for
the lineage of the query $Q = f(h_{k0}, \ldots, h_{kk})$ can be
converted into a multi-output \FBDD\ for all of 
$\Hk=(\lin_{k0}, \lin_{k1}, \ldots, \lin_{kk})$
with at most an $O(k 2^k n^3)$ increase in size.  
The proof is constructive.  
\autoref{TH:3} then follows immediately using \autoref{TH:1} since
any \FBDD\ for $\Hk$ yields an \FBDD\ for $\lin_k$ of the same
size.

Recall that  $\lin_{k\ell}$ denotes the lineage of $\query_{k\ell}$ and
let $\Psi = f(\Hk)=f(\lin_{k0}, \ldots, \lin_{kk})$ be the lineage of $Q$. 
%The set of
%Boolean variables is
%
%$$\mathbf{Z} = \setof{R(i), S_1(i,j), \ldots, S_k(i,j), T(j)}{i,j \in
  %[n]}$$
%

If $\F$ is an \FBDD\ for $\Psi=f(\Hk)$, we will let $\Phi_u$ denote the
Boolean function computed at the node $u$; thus $\Psi = \Phi_r$,
where $r$ is the root node of $\F$. 
By the correctness of $\F$, all paths $P$ leading to $u$ have the property
that $\Psi[P]=\Phi_u$.
%MOVE TO SECTION 2
%Given a Boolean function $\Phi$, we let $Var(\Phi)$ denote the set
%of variables on which $\Phi$ depends.

In order to produce the \FBDD\ $\F'$ for $\Hk$ from $\F$ computing
$\Psi=f(\Hk)$, we would like to ensure that every internal node $v$ of
$\F'$ has the property that all paths $P$ leading to $v$ not only are
consistent with the same residual function $\Phi_v=\Psi[P]$, but they also all
agree on the residual values of $\lin_{k\ell}(v)\myeq \lin_{k\ell}[P]$ for all
$\ell$.
Since we will not easily be able to characterize its paths we find it
convenient to define this property not only with respect to
paths of $\F'$ but for formulas $\Phi_v$ with respect to arbitrary partial
assignments $\theta$.
We use the term {\em transparent} to describe this property since it ensures
that the value of $\Phi_v$ automatically also reveals the values
for all $\lin_{k\ell}(v)$.

\begin{definition} \label{def:transparent} Fix $\Psi = f(\lin_{k0},
  \ldots, \lin_{kk})$.  A formula $\Phi$ that is a restriction of $\Psi$
  is called {\em transparent}
%\footnote{We are interested only in formulas $\Phi$ of the
% form $\Psi[\theta]$ for some partial assignment $\theta$.  If
%$\Phi$ is not of this form, then it is technically considered to
%be transparent.}  
    if there exist $k+1$ formulas $\varphi_0, \ldots,
  \varphi_k$ such that, for every partial assignment $\theta$, if
  $\Phi = \Psi[\theta]$, then $\lin_{k0}[\theta] = \varphi_0$,
  $\ldots$, $\lin_{kk}[\theta] = \varphi_k$. We say that $\Phi$ {\em
    defines} $\varphi_0, \ldots, \varphi_k$.  
%    (b) Let $u$ be a node of
%some \FBDD\ $\F$.  We say that $u$ is {\em pathwise transparent}
%if the residual formula $\Phi_u$ has the property above holds when
%$\theta$ is restricted to the paths in $\F$ from the root to the
%node $u$, and we say $\F$ is pathwise transparent if all nodes in $\F$
%are pathwise transparent.
\end{definition}

In other words, assuming that $\Phi$ is derived from $\Psi$ as
$\Phi=\Psi[\theta]$
for some partial assignment $\theta$, then $\Phi$ is transparent if the
formulas $\lin_{k0}[\theta], \ldots, \lin_{kk}[\theta]$ are
uniquely defined; i.e., are independent of $\theta$.  Equivalently, for
any two assignments $\theta, \theta'$, if $\Psi[\theta] =
\Psi[\theta'] = \Phi$, then for all $0 \leq \ell \leq k$,
$\lin_{k\ell}[\theta] = \lin_{k\ell}[\theta']$.

\begin{example} \label{ex:transparent}
  Let $k=3$ and $f = X_0 \vee X_1 \vee X_2 \vee X_3$.  Given a domain
  size $n > 0$, the formula $\Psi$ is:
	{\small
	$$\bigvee_{i,j} R(i)S_1(i,j) \vee \bigvee_{i,j}  S_1(i,j)S_2(i,j) 
            \vee \bigvee_{i,j}  S_2(i,j)S_3(i,j) \vee \bigvee_{i,j} S_3(i,j)T(j)$$
	}					
    %\begin{align*}
    %\Psi = & \bigvee_{i,j} R(i)S_1(i,j) \vee \bigvee_{i,j}  S_1(i,j)S_2(i,j) \\
             %& \bigvee_{i,j}  S_2(i,j)S_3(i,j) \vee \bigvee_{i,j} S_3(i,j)T(j)
  %\end{align*}
  $\lin_{30}, \ldots, \lin_{33}$ denote each of the four
  disjunctions above.  Let $\Phi = R(3)S_1(3,7) \vee S_1(3,7)S_2(3,7)$.
   %\begin{align*}
     %\Phi = & R(3)S_1(3,7) \vee S_1(3,7)S_2(3,7)
   %\end{align*}
   There are many partial substitutions $\theta$ for which $\Phi =
   \Psi[\theta]$: for example, $\theta$ may set to 0 all variables
   with index $\neq (3,7)$, and also set $S_3(3,7)=T(7)=0$; or, it
   could set $S_3(3,7)=0$, $T(7)=1$; there are many more choices for
   variables with index $\neq (3,7)$.  However, one can check that,
   for any $\theta$ such that $\Phi=\Psi[\theta]$, we have:
   \begin{align*}
\lin_{30}[\theta]  = & R(3)S_1(3,7) & \lin_{31}[\theta] = & S_1(3,7)S_2(3,7) \\
\lin_{32}[\theta]  = & 0 & \lin_{33}[\theta] = & 0
   \end{align*}
   Therefore, $\Phi$ is {\em transparent}.  On the other hand, consider $\Phi'= S_1(3,7)$.
%
   %\begin{align*}
     %\Phi'= & S_1(3,7) 
   %\end{align*}
%
   This formula is no longer transparent, because it can be obtained by
   extending any $\theta$ that produces $\Phi$ with either $R(3)=0,
   S_2(3,7)=1$, or $R(3)=1, S_2(3,7)=0$, or $R(3)=S_2(3,7)=1$, and
   these lead to different residual formulas for $\lin_{30}$ and
   $\lin_{31}$ (namely $0$ and $S_1(3,7)$, or $S_1(3,7)$ and $0$, or
   $S_1(3,7)$ and $S_1(3,7)$).  
%   However, consider an \FBDD\ for $\Psi$,
%   and let $u$, $v$ be two nodes whose residual formulas are $\Phi_u =
%   \Phi$ and $\Phi_v = \Phi'$.  Suppose $v$ is only reachable from the
%   root by going through $u$, and that the only path from $u$ to $v$
%   corresponds to the assignment $R(3)=0$ and $S_2(3,7)=1$ (the other
%   two assignment leads to a different nodes).  Then $v$ is transparent
%   {\em relative to $\F$}.
\end{example}

In order to convert an \FBDD\ $\F$ for $\Psi=f(\Hk)$ into a multi-output
\FBDD\ for $\Hk=(\lin_{k0}, \ldots, \lin_{kk})$, we will try to 
modify it so that the formulas defined by the restrictions reaching its
nodes become transparent without much of an increase in the \FBDD\ size.
To do this we will add new intermediate nodes at which the formulas may not
be transparent but we will be able to reason about its computations based
on the nodes where the formulas are transparent.

Observe that if we know that $\Phi_v=\Psi[\theta]$ is transparent and
we have a small multi-output \FBDD\ $\F_\theta$ for $\Hk[\theta]$ then
we can simply append that small \FBDD\ at node $v$ to finish the job
and ignore what the original \FBDD\ did below $v$.
Intuitively, the reason that $\Hk$ and $\lin_k$ might not have such small FBDDs
is the tension between the $R(i)S_1(i,j)$ terms, which give a preference for
reading entries in row-major order and the $S_k(i,j)T(j)$ terms, which suggest
column-major order, together with the intermediate $S_\ell(i,j)S_{\ell+1}(i,j)$
terms that link these two conflicting preferences.   If all of those links
are broken, then it turns out that there is no conflict in the variable order
and the difficulty disappears.   
This motivates the following
definition which we will use to make this intuitive idea precise.

\begin{definition} \label{def:transversal}
  Let $\theta$ be a partial assignment to $Var(\Hk)$. 
  \begin{itemize}
  \setlength{\itemsep}{-0.3ex}
\item A {\em transversal} in $\theta$ is a pair of indices $(i,j)$
  such that $R(i)S_1(i,j)$ is a prime implicant of $\lin_{k0}[\theta]$,
  $S_k(i,j)T(j)$ is a prime implicant of $\lin_{kk}[\theta]$, and
  $S_\ell(i,j)S_{\ell+1}(i,j)$ is a prime implicant of $\lin_{k\ell}[\theta]$
  for all $\ell\in [k-1]$.
   \item Call two pairs of indices (or transversals) $(i_1,j_1), (i_2,j_2)$ {\em independent} if $i_1
   \neq i_2$ and $j_1 \neq j_2$.  
   \item A Boolean formula
  is called {\em transversal-free} if there {\em exists} a $\theta$
  such that $\Phi = \Psi[\theta]$ and $\theta$ has no transversals.
  \end{itemize}
\end{definition}

We now see that assignments without transversals, or even those with few
independent transversals, yield small \FBDD s.

\begin{lemma}
\label{prop:transversals}
  Let $\theta$ be a partial assignment to $Var(\Hk)$. 
  If $\theta$ has at most $t$ independent transversals then there exists a
  multi-output
  \FBDD\ for $(\lin_{k0}[\theta], \ldots, \lin_{kk}[\theta])$ of
  size $O(k2^{k+t} n^2)$.
\end{lemma}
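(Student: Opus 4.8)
The plan is to isolate the heart of the statement in the transversal-free case $t=0$ and then reduce the general case to it.

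\emph{Reduction to the transversal-free case.} I would view the transversals of $\theta$ as the edges of a bipartite graph whose two sides are the row indices and the column indices in $[n]$; a set of pairwise \emph{independent} transversals (\autoref{def:transversal}) is exactly a matching in this graph, so the hypothesis says its maximum matching has size at most $t$. By K\"onig's theorem the graph then has a vertex cover $C$ of size at most $t$, i.e.\ a set of at most $t$ rows and columns meeting every transversal. Each row $i\in C$ has $R(i)$ unset under $\theta$ and each column $j\in C$ has $T(j)$ unset, since a transversal at $(i,j)$ requires both $R(i)$ and $T(j)$ unset. I would build a complete binary decision tree reading exactly the $\le t$ variables $\{R(i):i\in C\}\cup\{T(j):j\in C\}$, giving at most $2^t$ leaves, each labelled by an extension $\theta'$ of $\theta$. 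Every defining condition of a transversal has the form ``variable unset'' or ``variable not set to $1$'', so extending an assignment can only destroy transversals; since $C$ meets every transversal of $\theta$, each leaf assignment $\theta'$ is transversal-free. Appending to each leaf the transversal-free multi-output \FBDD\ constructed below yields a read-once DAG of size $2^t\cdot O(k2^kn^2)=O(k2^{k+t}n^2)$ (the leaves never re-read the cover variables), which is the claimed bound.

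\emph{The transversal-free case.} Call a row $i$ \emph{active} if $R(i)$ is unset and no variable $S_1(i,j')$ is set to $1$, and a column $j$ \emph{active} if $T(j)$ is unset and no variable $S_k(i',j)$ is set to $1$; these are exactly the rows and columns that can contribute a genuine $2$-term to $\lin_{k0}[\theta]$ and to $\lin_{kk}[\theta]$. Unwinding \autoref{def:transversal}, $(i,j)$ is a transversal iff row $i$ and column $j$ are both active and all of $S_1(i,j),\ldots,S_k(i,j)$ are unset, so transversal-freeness says: for every active row $i$ and active column $j$ there is a level $m(i,j)\in[k]$ with $S_{m(i,j)}(i,j)$ set. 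I would then build an \OBDD\ (hence an \FBDD) in two phases. In the \emph{row phase} the variables are read in row-major order: for $i=1,\ldots,n$ read the unset $R(i)$ and then, for $j=1,\ldots,n$, a consecutive block of $S$-variables of cell $(i,j)$; in the \emph{column phase} the remaining variables are read in column-major order, with the unset $T(j)$ at the head of column $j$. Each unset $S_\ell(i,j)$ is assigned to one phase: for an active-row/active-column cell I cut its chain at the set level $m(i,j)$, placing $S_1,\ldots,S_{m-1}$ in the row phase and $S_{m+1},\ldots,S_k$ in the column phase (the set variable $S_m$ is not read); a cell whose column is inactive is read entirely in the row phase, and any remaining cell entirely in the column phase. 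Thus every unset variable is read exactly once.

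The crucial point, which I would prove, is that a live $2$-prime-implicant of any $\lin_{k\ell}[\theta]$ is never split across the cut. For $1\le \ell\le k-1$ such a term is $S_\ell(i,j)S_{\ell+1}(i,j)$ with both variables unset, so the cut of cell $(i,j)$, which occurs at a \emph{set} level, lies strictly below $\ell$ or strictly above $\ell+1$ and leaves both variables in the same phase and the same consecutive block; the live terms of $\lin_{k0}[\theta]$ occur only on active rows, where $S_1(i,j)$ sits in the row phase next to $R(i)$, and dually for $\lin_{kk}[\theta]$ on active columns. Consequently each of the $k+1$ lineages is computed by tracking a single ``already satisfied'' bit plus $O(1)$ bits for the cell currently being scanned, so the \OBDD\ has width $2^{O(k)}=O(2^k)$ and size $O(2^k)\cdot O(kn^2)=O(k2^kn^2)$. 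The main obstacle is exactly this base case: choosing the two-phase order and verifying that transversal-freeness forces every live intermediate term onto one side of the cut, so that the width stays $O(2^k)$ instead of blowing up to $2^{\Omega(n)}$ as it does for the genuine conflict between the row-major preference of $\lin_{k0}$ and the column-major preference of $\lin_{kk}$. The bookkeeping at the phase boundary and for cells on inactive rows or columns is routine but must be checked; the K\"onig reduction above is then comparatively straightforward.
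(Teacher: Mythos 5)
Your proposal follows the same two-stage architecture as the paper's proof: reduce to the transversal-free case by exhaustively branching on a small set of $R$/$T$ variables (a tree with $2^t$ leaves), then handle the transversal-free case with a bounded-width \OBDD\ that reads one block of variables in row-major order and the rest in column-major order, tracking one ``satisfied'' bit per lineage. Within that shared skeleton, your reduction step is actually \emph{tighter} than what the paper writes. The paper takes $I$ and $J$ to be the sets of all rows and all columns occurring in transversals, asserts that $\min(|I|,|J|)\le t$, and branches on all $R(i)$ for $i\in I$ (or all $T(j)$ for $j\in J$). But a bound on the maximum matching does not by itself bound $\min(|I|,|J|)$: a ``cross'' of transversals concentrated in a single row and a single column has maximum independent set of size $2$ while $|I|=|J|=n$. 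Passing to a vertex cover of size at most $t$ via K\"onig's theorem, which may mix rows and columns, is exactly the right way to make this step airtight, and your appeal to the fact that extending an assignment can only destroy transversals (the paper's \autoref{lemma:monotonicity}) correctly yields transversal-freeness at every leaf.

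There is, however, one genuine (though easily repaired) hole in your base case. Your claimed characterization ``$(i,j)$ is a transversal iff row $i$ and column $j$ are active and all of $S_1(i,j),\ldots,S_k(i,j)$ are unset'' is false when some $\lin_{k\ell}[\theta]\equiv 1$ globally (e.g., $S_\ell(i',j')=S_{\ell+1}(i',j')=1$ in some other cell, or $R(i')=S_1(i',j')=1$, or $S_k(i',j')=T(j')=1$). In that situation every $2$-term of that $\lin_{k\ell}$ is absorbed, so an active-row/active-column cell can be fully unset and still fail to be a transversal---and then your cut level $m(i,j)$ does not exist, leaving the phase assignment for that cell undefined. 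The fix is one line: if $\lin_{k\ell}[\theta]\equiv 1$, then the link between levels $\ell$ and $\ell+1$ is dead in \emph{every} cell, so you may cut any such cell at that link (placing the entire cell in the column phase when $\ell=0$, in the row phase when $\ell=k$); a dead link is never a live prime implicant, so nothing live is split. It is worth noting that the paper's version of the base case avoids this case analysis structurally: it forms the graph $G_\theta$ whose edges are exactly the live $2$-prime implicants of the $\lin_{k\ell}[\theta]$, observes that transversal-freeness means no path connects an $R$-variable to a $T$-variable, and partitions the variables by connected components into a row-major part $\mathbf{Z}'$ and a column-major part $\mathbf{Z}''$; formulas that have collapsed to constants simply contribute no edges. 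With your one-line patch, your argument is correct and gives the same $O(k2^{k+t}n^2)$ bound, with the added benefit of repairing the covering step.
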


\begin{proof}
  We first show that if $t = 0$ ($\theta$ has no transversals)
  then there exists a small OBDD that computes $\Hk[\theta]$.
  
  Let $G_\theta$ be the following undirected graph.  The nodes are the
  variables $Var(\Hk)$, and the edges are pairs of variables
  $(Z,Z')$ such that $ZZ'$ is a 2-prime implicant in
  $\lin_{k\ell}[\theta]$ for some $\ell$.  Since $\theta$ has no
  transversals, all nodes $R(i)$ are disconnected from all nodes
  $T(j)$.  In particular, there exists a partition $Var(\Hk) =
  \mathbf{Z}' \cup \mathbf{Z}''$ such that all $R(i)$'s are in
  $\mathbf{Z}'$, all $T(j)$'s are in $\mathbf{Z}''$ and every
  $\lin_{k\ell}[\theta]$ can written as $\varphi_\ell' \vee
  \varphi_\ell''$ where $Var(\varphi_\ell') \subset \mathbf{Z}'$ and
  $Var(\varphi_\ell'') \subset \mathbf{Z}''$; in particular,
  $\varphi_0''= \varphi_k'=0$.

  Define {\em row-major order} of the variables $Var(\Hk) - \set{T(1),
    \ldots, T(n)}$ by:
  \begin{align*}
    R(1),&S_1(1,1),\ldots,S_k(1,1),S_1(1,2)\ldots,S_k(1,n),\\
    R(2),&S_1(2,1),\ldots,S_k(2,1),S_1(2,2)\ldots,S_k(2,n),\\
    &\ldots \\
    R(n),&S_1(n,1),\ldots,S_k(n,1),S_1(n,2)\ldots,S_k(n,n)
  \end{align*}
  Let $\pi'$ be the restriction of the row-major order to the
  variables in $\mathbf{Z}'$.  Similarly, let $\pi''$ be the
  restriction to $\mathbf{Z}''$ of the corresponding column-major
  order of the variables that omits the $R(i)$'s, and places the $T(j)$
  before all variables $S_\ell(i,j)$.
  We build a multi-output \OBDD\ using the order $\pi=(\pi',\pi'')$
  for $(\lin_{k0}, \ldots, \lin_{kk})$.  
  In the first part using order $\pi'$ it will compute
  each $\varphi_\ell'$ term in parallel in width $O(2^k)$ and in the second
  part it will continue by including the additional terms from
  $\varphi_\ell''$ using order $\pi''$.  Observe that, except for the
  $R(i)S_1(i,j)$ terms, each of the variables in the 2-prime implicants in
  $\varphi_\ell'$ appear consecutively in $\pi'$.   Each level of the \OBDD\
  will have at most $2^{k+3}$ nodes, one
  for each tuple consisting of a vector of values of the partially computed
  values for the $k+1$ functions $\varphi_\ell'$, remembered value of $R(i)$,
  and remembered value of the immediately preceding queried variable.
  In the part using order $\pi''$, the remembered value of $T(j)$ is used
  instead of the remembered value of $R(i)$.
  The size of $\F'$ is $O(k2^k n^2)$ since there are $kn^2+2n$ variables
  in total in $Var(\Hk)$.
% 
% Indeed, the \OBDD\ for
% $\lin_{k0} = \bigvee_{ij}R(i),S_1(i,j)$ has width 3, the \OBDD\
% for $\lin_{k\ell} = \bigvee_{i,j} S_\ell(i,j),S_{\ell+1}(i,j)$ has
% width 2, and one can synthesize an multi-output \OBDD\ $\F'$ for all
% $k$ functions whose width is their product, i.e.  $3\cdot 2^{k-1}$;
% hence the size of $\F'$ is $O(k2^k n^2)$.
% From this, we immediately obtain
% a multi-output \OBDD\ for $(\varphi_0', \ldots, \varphi_{k-1}')$, by
% first setting all variables in $\mathbf{Z}''$ to 0 (since they don't
% occur in any $\varphi_\ell'$), then setting all variables tested by
% $\theta$ to their values in $\theta$.  Similarly, we can construct
% an \OBDD\ of width $3 \cdot 2^{k-1}$ for $(\varphi_1'', \ldots,
% \varphi_k'')$ by traversing the variables in $\mathbf{Z}''$ in the
% order $\Pi''$.  Since $\Pi', \Pi''$ are over disjoint sets of
% variables, we can combine the two \OBDD s into one \OBDD\ of width
% $3^2 \cdot 2^{2(k-1)}$ to compute $(0, \varphi_1', \ldots,
% \varphi_k') \vee (\varphi_0'', \ldots, \varphi_{k-1}'',0) =
% \lin_{k0}[\theta], \ldots, \lin_{kk}[\theta]$, which completes
% the proof.
% THE ABOVE DOES NOT QUITE WORK BECAUSE THE OBDD NEEDS TO PROPAGATE THE VALUE OF
% H_k0 EVEN THOUGH IT IS NOT CHANGED IN THE SECOND PART.   ITS REPLACEMENT IS
% A LITTLE WASTEFUL BUT MORE INTUITIVE.
  
  For general $t$, let $I$ and $J$ be the sets of rows and columns,
  respectively, of the transversals $(i,j)$ in $\theta$.   Since $\theta$ has
  at most $t$ independent transversals, the smaller of $I$ and $J$ has size
  at most $t$.  Suppose that this smaller set is $I$; the case when $J$ is
  smaller is analogous.  In this case, every transversal $(i,j)$ of $\theta$
  has $i\in I$.
  Notice that if we set all $R(i)$ variables with $i \in I$ 
  in an assignment $\theta'$ then the assignment $\theta \cup \theta'$ has no
  transversals, and thus, by the above construction, $\Hk[\theta']$ can be
  computed efficiently by a multi-output \OBDD. Therefore, construct the
  \FBDD\ which first
  exhaustively tests all possible settings of these at most $t$ variables 
  in a complete tree of depth $t$, then at each leaf node
  of the tree, attaches the \OBDD\ constructed above.
\end{proof}

A nice property of a single transversal for $\theta$ is that its existence
ensures that each $\lin_{k\ell}$ is a non-trivial function of its remaining
inputs; more transversals will in fact ensure that less about each
$\lin_{k\ell}$ disappears.   We will see the following: if there are at least some small
number of independent transversals for $\theta$ (three suffice), then we can use
the fact that $f$ depends on all inputs to ensure that
$\Psi[\theta]=f(\Hk)[\theta]$ will be transparent provided one additional
condition holds: there is no variable which
we can set to kill off all transversals in $\theta$ at once.

If we didn't have this additional condition, then the construction of $\F'$ for
$\Hk$ would be simple:  We would just use \autoref{prop:transversals} at all
nodes $v$ of $\F$ at which all assignments $\theta$ for which
$\Phi_v=\Psi[\theta]$ do not have enough transversals to ensure transparency
of $\Phi_v$.

Failure of the additional condition 
is somewhat reminiscent of the situation with setting units in
\autoref{sec:h0h1}:
This failure means that there is some variable we can set to kill off 
all transversals in $\theta$ at once, which by \autoref{prop:transversals}
means that along the branch corresponding to that setting one can get an
easy computation of $\Hk$ (not quite as simple as fixing the value of the
formula to 1 by setting units as in \autoref{sec:h0h1}, but still easy).
It is not hard to see, and implied by the proposition below, which is
easy to verify,
that the only way to kill off multiple independent transversals at once is to
set such a variable to 1.
By analogy we call such variables {\em $\Hk$-units}.  

\begin{proposition}
\label{lemma:monotonicity}
Let $\Phi=\Psi[\theta]$ for some $\theta$ with $t$ independent transversals
and $\theta'=\theta\cup\set{W{=}b}$ for $b\in\set{0,1}$.
The number of independent transversals in $\theta'$ 
is in $\{t-1,t\}$ if $b=0$ and is in $\{0,t-1,t\}$ if $b=1$.
%Let $e=(u,v)$ be an edge of $\F$.  Let $\Phi_u=\Psi[\theta]$
%for some $\theta$ with $t$ independent transversals and
%$\Phi_v=\Psi[\theta']$.  
%The number of independent transversals in $\theta'$ 
%is in $\{t-1,t\}$ if $e$ is a 0-edge and is in $\{0,t-1,t\}$ if $e$ is a
%1-edge.
\end{proposition}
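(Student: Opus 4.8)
The plan is to prove the proposition by a case analysis on the variable $W$ and the bit $b$, after first extracting a purely combinatorial description of the transversals so that every prime-implicant question reduces to reading off which variables are free or set. The key structural feature to exploit is that each $\lin_{k\ell}$ is a monotone $2$-DNF whose only high-degree variables are $R(i)$ (occurring in $\lin_{k0}$ once per column) and $T(j)$ (occurring in $\lin_{kk}$ once per row), every other variable occurring in a unique monomial. Unwinding \autoref{def:transversal} accordingly, I would first verify that a pair $(i,j)$ is a transversal of $\theta$ exactly when: (i) $R(i)$, $T(j)$ and all of $S_1(i,j),\ldots,S_k(i,j)$ are free in $\theta$; (ii) $R(i)\notin\units(\lin_{k0}[\theta])$, i.e. no $S_1(i,j')$ is set to $1$; and (iii) $T(j)\notin\units(\lin_{kk}[\theta])$, i.e. no $S_k(i',j)$ is set to $1$. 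In this language each row $i$ is \emph{live} or not and each column $j$ is \emph{live} or not, and the transversals are precisely the cells sitting in a live row and a live column whose own $S$-variables are all free.

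Next I would record the monotone (``only shrinks'') direction. Passing from $\theta$ to $\theta'=\theta\cup\set{W{=}b}$ can only destroy transversals: a variable free in $\theta'$ is free in $\theta$, and by \autoref{proposition:monotone:units} setting a variable can only enlarge each $\units(\lin_{k\ell})$, so a row or column failing (ii)/(iii) under $\theta$ still fails it under $\theta'$. Hence every transversal of $\theta'$ is a transversal of $\theta$, the maximum number of pairwise independent transversals cannot increase, and we get the upper endpoint $t$ in both statements. It remains only to lower-bound how far this count can drop.

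For $b=0$, setting $W{=}0$ merely deletes the monomials containing $W$ and, again by \autoref{proposition:monotone:units}, creates no new units; thus (ii),(iii) are untouched and only freeness (i) can newly fail — at $R(i_0)$ if $W=R(i_0)$, at $T(j_0)$ if $W=T(j_0)$, or at the single cell $(i_0,j_0)$ if $W=S_\ell(i_0,j_0)$. In every sub-case the destroyed transversals lie in one row, one column, or one cell; since a family of pairwise independent transversals is a partial matching, it meets any such set in at most one element, so the maximum matching drops by at most one and the count lies in $\set{t-1,t}$. For $b=1$ the new phenomenon is unit creation: besides losing the cell $(i_0,j_0)$, setting $W{=}1$ can make a hub variable a unit. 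Concretely $W=R(i_0)$ or $W=S_1(i_0,j_0)$ makes $R(i_0)$ a unit and kills all of row $i_0$; $W=T(j_0)$ or $W=S_k(i_0,j_0)$ makes $T(j_0)$ a unit and kills all of column $j_0$; an interior $W=S_\ell(i_0,j_0)$ only spawns unique-occurrence units and kills its single cell. As long as the damage stays confined to one row or one column the matching again drops by at most one, giving $\set{t-1,t}$; the remaining possibility is that the setting annihilates every surviving transversal simultaneously — exactly the situation the surrounding discussion isolates as setting an $\Hk$-unit — in which case the count is $0$. Assembling the cases yields the stated $\set{0,t-1,t}$ for $b=1$.

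I expect the $b=1$ analysis to be the crux, and the delicate point is the bookkeeping of the units created by setting $W{=}1$: one must check that no $R(i)$ with $i\ne i_0$ and no $T(j)$ with $j\ne j_0$ can become a unit, so that the damage is confined to at most one row together with at most one column, and then separate the ``kills at most one independent transversal'' case from the ``kills them all'' case — the latter being the only route to a simultaneous loss of transversals in distinct rows \emph{and} columns, which is precisely when a single setting can create units in both $\lin_{k0}$ and $\lin_{kk}$. The combinatorial characterization from the first step is what makes this manageable, since it turns every such question into whether some $S_1(i,\cdot)$ or $S_k(\cdot,j)$ has just been set to $1$, and \autoref{proposition:monotone:units} guarantees there are no cascading or second-order unit creations to track.
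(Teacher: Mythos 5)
The paper never actually proves \autoref{lemma:monotonicity} --- it is introduced with ``which is easy to verify'' and used as a black box --- so there is no official proof to compare against; your attempt has to stand on its own. Its skeleton (a combinatorial reading of \autoref{def:transversal}, the ``transversals only shrink'' direction via \autoref{proposition:monotone:units}, and the matching argument that a maximum independent family meets any single row, column, or cell at most once) is sound, and your $b=0$ case is correct, since setting a variable to $0$ in a monotone DNF creates no units and cannot make any $\lin_{k\ell}$ constant. But there are two genuine problems, both located exactly where you yourself flag the crux. First, your cell characterization omits the condition that no $\lin_{k\ell}[\theta]$ is the constant $1$ (a constant-$1$ lineage has \emph{no} prime implicants, so such a $\theta$ has no transversals at all). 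This is not cosmetic: it is the real ``kills them all'' mechanism. For an interior $W=S_\ell(i_0,j_0)$ whose neighbor $S_{\ell\pm1}(i_0,j_0)$ is already set to $1$ by $\theta$, setting $W{=}1$ makes a whole lineage constant $1$ and annihilates every transversal --- contradicting your claim that interior settings ``only spawn unique-occurrence units and kill their single cell.'' Your catch-all sentence about the all-kill case does not repair this, because the exhaustiveness of your trichotomy (one row / one column / one cell, else everything dies) is precisely what needs proof, and you only assert it.

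Second, and fatally for the argument as written: the confinement claim --- that a simultaneous loss in a row \emph{and} a distinct column can only happen when everything dies --- is false for $k=1$, where $S_1=S_k$, and there the proposition itself fails as stated. Take $k=1$, let $\theta$ set every variable to $0$ except that it leaves free the variables of the three pairwise independent cells $(i_0,j_5)$, $(i_3,j_0)$, $(i_7,j_7)$ together with the extra variable $S_1(i_0,j_0)$. Then $\theta$ has exactly $t=3$ independent transversals, and setting $W=S_1(i_0,j_0)$ to $1$ makes $R(i_0)$ a unit of $\lin_{10}[\theta']$ \emph{and} $T(j_0)$ a unit of $\lin_{11}[\theta']$, killing both $(i_0,j_5)$ and $(i_3,j_0)$ while $(i_7,j_7)$ survives: the count drops to $1=t-2\notin\set{0,t-1,t}$, and $W$ is not an $\Hk$-unit since $\Phi[W{=}1]$ is not transversal-free. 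So no bookkeeping can close your $b=1$ case at $k=1$; the statement needs $k\ge 2$ (or a weakened conclusion). For $k\ge 2$ the confinement argument does go through, but you must say why: a single variable occurs in at most two consecutive lineages, and since $S_1\ne S_k$ no single setting can create both an $R$-unit in $\lin_{k0}$ and a $T$-unit in $\lin_{kk}$; the only unit candidates created by $W{=}1$ are $R(i_0)$ (if $W=S_1(i_0,j_0)$), $T(j_0)$ (if $W=S_k(i_0,j_0)$), and cell-local $S$-variables of $(i_0,j_0)$, so the damage lies in one row, one column, or one cell --- \emph{unless} some lineage becomes constant $1$, in which case the count is $0$ and $W$ was an $\Hk$-unit. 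Supplying that dichotomy, restricted to $k\ge2$, is the missing content of your proof.
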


\begin{definition}
We say that a variable $Z$ is an $\Hk$-{\em unit} for the formula
$\Phi$ if $\Phi[Z=1]$ is transversal-free but $\Phi$ is not. 
We let $\hkunits(\Phi)$ denote the
set of $\Hk$-units of $\Phi$, and we say that $\Phi$ is {\em $\Hk$-unit-free} if
$\hkunits(\Phi)=\emptyset$.   
\end{definition}

In %Appendix~\ref{subsection:transparent}, 
Section~\ref{subsection:transparent}, we will prove the following,
which makes our intuitive claim precise.

\begin{lemma} \label{lemma:transparent}
Let $\Psi=f(\Hk)$ where $f$ depends on all its inputs.  
Suppose that there exists a $\theta$ with at least $3$ independent transversals
such that $\Psi [\theta] = \Phi$.
If $\Phi$ is $\Hk$-unit-free then $\Phi$ is transparent.
\end{lemma}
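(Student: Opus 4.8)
The plan is to prove the statement directly: for any two partial assignments $\theta,\theta'$ with $\Psi[\theta]=\Psi[\theta']=\Phi$ and every $\ell$, I would show $\lin_{k\ell}[\theta]=\lin_{k\ell}[\theta']$. Since each $\lin_{k\ell}$ is monotone, it suffices to show that $\Phi$ determines the prime implicants of each $\lin_{k\ell}[\theta]$, equivalently its value at every point. The engine for reading a single component off of $\Phi$ is the hypothesis that $f$ depends on all its inputs: for each $\ell$ there is an assignment $\mu_\ell$ to $\vecX\setminus\{X_\ell\}$ with $f[\mu_\ell]=X_\ell$ or $f[\mu_\ell]=\neg X_\ell$. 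If I can find a restriction $\tau$ of the free variables such that $\lin_{km}[\theta][\tau]$ equals the constant $\mu_\ell(X_m)$ for every $m\neq\ell$ while the target $\lin_{k\ell}[\theta]$ is left intact (or probed at a chosen point), then $\Phi[\tau]=\pm\,\lin_{k\ell}[\theta][\tau]$, so the value of $\lin_{k\ell}[\theta]$ is recovered from $\Phi[\tau]$ up to the fixed sign determined by $f$.

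The central difficulty — and the reason the three independent transversals are needed — is that consecutive lineages share variables ($\lin_{k,\ell-1}$ and $\lin_{k\ell}$ share the $S_\ell(i,j)$, and $\lin_{k\ell}$ and $\lin_{k,\ell+1}$ share the $S_{\ell+1}(i,j)$), so naively forcing the neighboring components to constants perturbs the very component I am trying to read. I would resolve this using the chain underlying a transversal $(i,j)$, namely the path $R(i),S_1(i,j),\ldots,S_k(i,j),T(j)$: assigning two consecutive chain variables to $1$ forces exactly one $\lin_{km}$ to $1$, while alternating assignments kill every term of the chain (exactly as in the admissible patterns of \autoref{sec:h0h1}). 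Thus a single transversal chain can drive any one component to $1$, and independent transversals (distinct rows and columns, hence sharing no $S$-variable) can be combined without cross-interference. Concretely, to read component $\ell$ I would set every component $m$ with $\mu_\ell(X_m)=0$ to $0$ using only its private variables (for a neighbor $m=\ell\pm1$, zeroing $S_{\ell-1}$ or $S_{\ell+2}$ rather than the shared variable), use one transversal to drive $\lin_{k,\ell-1}$ to $1$ if required and a second, independent transversal to drive $\lin_{k,\ell+1}$ to $1$ if required, and reserve the third transversal to witness the prime implicants of $\lin_{k\ell}[\theta]$. This three-way budget is precisely what forces the hypothesis of three independent transversals.

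The role of $\Hk$-unit-freeness is analogous to the unit rule of \autoref{sec:h0h1}: by \autoref{lemma:monotonicity} the only way a single variable can destroy several independent transversals is to set it to $1$, and an $\Hk$-unit is exactly such a collapsing variable; ruling it out guarantees that none of the constant-forcing restrictions above can inadvertently make $\Phi$ transversal-free, so the reserved witnessing chain genuinely survives and the recovery remains valid. The subtlest point I expect is \emph{uniformity}: the restriction $\tau$ must certify $\lin_{k\ell}[\theta]=\lin_{k\ell}[\theta']$ even though it was built from $\theta$. I would handle this by making each constant-forcing choice syntactically robust — driving a component to $0$ by zeroing all of its relevant free variables and to $1$ by satisfying a term on variables that $\Phi$ genuinely depends on — so that the forced values hold for \emph{any} $\theta'$ with $\Psi[\theta']=\Phi$, whence $\lin_{k\ell}[\theta][\tau]=\pm\Phi[\tau]=\lin_{k\ell}[\theta'][\tau]$ for all probes $\tau$ and therefore $\lin_{k\ell}[\theta]=\lin_{k\ell}[\theta']$. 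This uniformity, together with the bookkeeping that keeps the three transversals genuinely independent throughout the probing, is where I expect the main obstacle to lie.
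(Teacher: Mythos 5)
Your overall skeleton coincides with the paper's: use an assignment $\mu_\ell$ witnessing that $f$ depends on $X_\ell$, build a forcing restriction that drives every other component $\lin_{km}[\theta]$ to the constant $\mu_\ell(X_m)$ by alternating between two independent transversal chains, and make the restriction depend only on $\Phi$ and the indices so that it transfers to any $\theta'$ with $\Psi[\theta']=\Phi$; this is exactly the content of the paper's key technical lemma (\autoref{prop:odd:even}) and its deployment in the proof of \autoref{lemma:transparent}. However, two of your concrete steps would fail as stated. First, reserving the third transversal ``to witness the prime implicants of $\lin_{k\ell}[\theta]$'' probes a place where no disagreement can occur: one must first prove (as the paper does in \autoref{cor:same:transversals}, itself a nontrivial consequence of the key lemma, e.g.\ ruling out $R(i)$ absorbing $R(i)S_1(i,j)$) that every $\theta'$ with $\Psi[\theta']=\Phi$ has \emph{the same} transversals as $\theta$; but then at every transversal index all the terms $P_{\ell,i,j}$ survive as prime implicants under both $\theta$ and $\theta'$, so the residuals automatically agree there, and any disagreement lives at a \emph{non-transversal} index $(i,j)$. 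The probe must therefore be aimed at that specific $(i,j)$, with the two forcing transversals chosen independent of it. Second, ``leaving the target intact'' is impossible: forcing $\lin_{k,\ell\pm 1}$ to $1$ along a chain $(i_b,j_b)$ necessarily sets $S_\ell(i_b,j_b)$ or $S_{\ell+1}(i_b,j_b)$, which are variables of component $\ell$ itself, so $\lin_{k\ell}$ cannot be read off wholesale; the read-off has to be pointwise, setting everything except the single term $P_{\ell,i,j}$ so that $\Psi[\theta\cup\mu]=f_\ell(P_{\ell,i,j}[\theta])$, and then contrasting $f_\ell(P_{\ell,i,j})$ with $f_\ell(0)$ under the common restriction $\mu=\mu'$. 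Relatedly, your budget of one chain per neighbor does not cover a general $\mu_\ell$: \emph{every} $m$ with $\mu_\ell(X_m)=1$, not just $m=\ell\pm1$, must be forced to $1$, and forcing $m$ and $m+2$ on the same chain incidentally satisfies the term of component $m+1$ on that chain; the paper's fix is to alternate the two chains by parity along the ordered set of all such $m$, which your per-neighbor allocation does not provide.

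Your account of $\Hk$-unit-freeness is also off target. The forcing restrictions are under your control, so the danger is not that they ``inadvertently make $\Phi$ transversal-free.'' What unit-freeness actually buys, and where the paper uses it, is structural: if some $\lin_{k\ell}[\theta]$ or $\lin_{k\ell}[\theta']$ contained a unit $Z$, then $Z$ would be an $\Hk$-unit of $\Phi$, and if some residual were constant $1$ then $\Phi$ would be transversal-free; excluding both forces every residual to consist solely of $2$-prime implicants. That is what makes the ``zero a private variable'' forcing sound, guarantees (via \autoref{prop:odd:even}) that all residual variables lie in $Var(\Phi)$, and makes a disagreement manifest cleanly as $P_{\ell,i,j}[\theta]=P_{\ell,i,j}$ versus $P_{\ell,i,j}[\theta']=0$; the formula $\Phi'=S_1(3,7)$ of \autoref{ex:transparent} shows transparency genuinely fails once a unit entangles $\lin_{30}$ and $\lin_{31}$. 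So your plan needs three repairs: establish the transversal-invariance corollary up front, replace the reserved-witness step by the pointwise probe at the disagreement index, and redirect unit-freeness to excluding units and constant-$1$ residuals. With these repairs your argument collapses into the paper's proof rather than constituting an alternative route.
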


We will still need to deal with the situation when $\Phi$ has any
$\Hk$-units along with multiple independent transversals.
Our strategy will be simple: whenever we encounter an edge in $\F$ on which an
$\Hk$-unit is created (possibly more than one at once) and the resulting
formula has sufficiently many transversals then, just as with the
unit rule, we immediately test
these $\Hk$-units, one at a time, each one after the previous one has been set
to 0 (since the branch where it is set to 1 has an easy computation remaining).

In order to analyze this strategy properly, it will be useful to
understand how $\Hk$-units can arise.
Observe, that if $\Phi= \Psi[\theta]$ and $Z$ is a unit
for some $\lin_{k\ell}[\theta]$, for $0 \leq \ell \leq k$,
then $Z$ is an $\Hk$-unit for $\Psi[\theta]$, because setting
$Z=1$ we ensure that $\lin_{k\ell}[\theta \cup \set{Z=1}]=1$, wiping out
all transversals.  The following lemma, which we prove in
Section~\ref{subsection:transparent}, shows a converse
of this statement
under the assumption that $\theta$ has at least 4 independent transversals.

\begin{lemma}
\label{lemma:unitwrthk}
  Let $\Psi=f(\Hk)$ where $f$ depends on all its inputs.
  If $\Phi = \Psi[\theta]$ for some partial assignment
  $\theta$ that has at least 4 independent transversals,
  then $\hkunits(\Phi) = \bigcup_{\ell\in \{0,\ldots,k\}}\units(\lin_{k\ell}[\theta])$.
\end{lemma}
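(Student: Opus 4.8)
The plan is to prove the two inclusions of the claimed equality separately, with the bulk of the work being a geometric dichotomy describing how setting a single variable to $1$ changes the number of independent transversals of $\theta$. Throughout I use that $\theta$ has at least $4$ independent transversals together with \autoref{lemma:monotonicity}, which already restricts the number of transversals after a restriction to $\set{0,t-1,t}$ when the variable is set to $1$.

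The inclusion $\bigcup_{\ell}\units(\lin_{k\ell}[\theta])\subseteq\hkunits(\Phi)$ is essentially the observation recorded just before the lemma: if $Z\in\units(\lin_{k\ell}[\theta])$ then $\lin_{k\ell}[\theta\cup\set{Z{=}1}]\equiv 1$, so that component has no nontrivial prime implicant, no index pair $(i,j)$ can be a transversal, and hence $\theta\cup\set{Z{=}1}$ has no transversals. Thus $\Phi[Z{=}1]=\Psi[\theta\cup\set{Z{=}1}]$ is transversal-free. To conclude $Z\in\hkunits(\Phi)$ I also need that $\Phi$ itself is \emph{not} transversal-free, which I isolate as the robustness statement discussed below.

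For the reverse inclusion $\hkunits(\Phi)\subseteq\bigcup_{\ell}\units(\lin_{k\ell}[\theta])$ I argue by contrapositive: fix $Z\notin\bigcup_{\ell}\units(\lin_{k\ell}[\theta])$ and show $\Phi[Z{=}1]$ still has transversals, so $Z$ is not an $\Hk$-unit. The key is a case analysis on the type of $Z$ showing that setting $Z{=}1$ destroys at most one independent transversal. If $Z=R(i)$, it only rewrites the terms $R(i)S_1(i,j')$ of $\lin_{k0}$, affecting prime implicants only in row $i$; since independent transversals occupy distinct rows, at most one is lost. The cases $Z=T(j)$ (only column $j$) and $Z=S_\ell(i,j)$ (only position $(i,j)$, as $S_\ell(i,j)$ occurs only in the terms $S_{\ell-1}(i,j)S_\ell(i,j)$ and $S_\ell(i,j)S_{\ell+1}(i,j)$, with the obvious boundary modifications at $\ell=1,k$) are analogous. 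Because $Z$ is a unit of no $\lin_{k\ell}[\theta]$, no component collapses to $1$, so by \autoref{lemma:monotonicity} the count cannot drop to $0$; combined with the at-most-one bound and the hypothesis of $\ge 4$ independent transversals, $\theta\cup\set{Z{=}1}$ retains at least $3$ independent transversals. Together with the first inclusion this yields the clean dichotomy: the transversal count drops to $0$ exactly when $Z\in\bigcup_{\ell}\units(\lin_{k\ell}[\theta])$.

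The step I expect to be the main obstacle is the bridge between the transversal count of the \emph{specific} assignment $\theta$ (or $\theta\cup\set{Z{=}1}$) and the \emph{existential} definition of transversal-free, which quantifies over all representations of the formula. Concretely, for the easy inclusion I need ``$\Phi$ not transversal-free'' and for the hard inclusion ``$\Phi[Z{=}1]$ not transversal-free'', even though each speaks about every representation. I would therefore establish once a robustness statement: any restriction of $\Psi$ admitting some representation with at least $3$ independent transversals is not transversal-free. This is exactly where the slack between the threshold $4$ here and the threshold $3$ in \autoref{lemma:transparent} is spent, and it is proved with the same mechanism as transparency — three independent transversals, together with $f$ depending on all its inputs, pin down each $\lin_{k\ell}[\theta]$, and hence the existence of transversals, from $\Phi$ alone, so no alternative transversal-free representation can exist.
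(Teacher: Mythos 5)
Your proof is correct and follows essentially the same route as the paper's: the easy inclusion is the observation that a unit collapses its component of $\Hk[\theta]$ to $1$ and hence wipes out all transversals; the converse is the same contrapositive, namely that a non-unit $Z$ acts locally (only row $i$ for $R(i)$, only column $j$ for $T(j)$, only position $(i,j)$ for $S_\ell(i,j)$), so setting $Z{=}1$ destroys at most one independent transversal and at least $3$ survive; and the ``robustness statement'' you isolate is exactly the paper's \autoref{cor:same:transversals}, which is indeed where the slack between the threshold $4$ here and $3$ in \autoref{lemma:transparent} is spent. (Your explicit handling of the side condition that $\Phi$ itself must not be transversal-free is a point the paper's one-line forward direction leaves implicit, and your case analysis makes explicit what the paper asserts via \autoref{lemma:monotonicity}.) One caution on your justification of the robustness statement: with units present you cannot ``pin down each $\lin_{k\ell}[\theta]$'' from $\Phi$ alone --- full determination of the residual lineages is transparency (\autoref{lemma:transparent}), which requires $\Hk$-unit-freeness and can fail in exactly the situation this lemma addresses (cf.\ the paper's example $\Phi'=S_1(3,7)$, which remains non-transparent even after adjoining three independent transversals). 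The paper proves only the weaker, sufficient claim that the \emph{set of transversals} is representation-invariant: by \autoref{corollary:depend} all transversal variables remain in $Var(\Phi)$, so no alternative $\theta'$ can set them, plus a separate absorption argument showing $R(i)$ and $T(j)$ cannot swallow the end prime implicants $R(i)S_1(i,j)$ and $S_k(i,j)T(j)$. If you weaken your intermediate claim accordingly, your argument matches the paper's proof in full.
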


Since a transversal $(i,j)$ requires that all elements
of $\Hk$ have 2-prime implicants rather than units on the terms involving
$(i,j)$, \autoref{lemma:unitwrthk} immediately implies the following:

\begin{corollary}
\label{cor:transversals}
  If $\Phi = \Psi[\theta]$ for some partial assignment $\theta$,
  then no $\Hk$-unit of $\Phi$ is in the prime implicants 
  indexed by any transversal of $\theta$.
\end{corollary}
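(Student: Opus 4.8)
The plan is to read the statement off \autoref{lemma:unitwrthk}. That lemma identifies the $\Hk$-units of $\Phi$ with the ordinary units of the constituent lineages, $\hkunits(\Phi)=\bigcup_{\ell}\units(\lin_{k\ell}[\theta])$, so to prove the corollary it suffices to show that no variable occurring in the prime implicants indexed by a transversal $(i,j)$ of $\theta$ is a unit of \emph{any} $\lin_{k\ell}[\theta]$. By \autoref{def:transversal} these variables are exactly $R(i),S_1(i,j),\ldots,S_k(i,j),T(j)$, namely the variables of the $2$-prime implicants $R(i)S_1(i,j)$ in $\lin_{k0}[\theta]$, the $S_\ell(i,j)S_{\ell+1}(i,j)$ in $\lin_{k\ell}[\theta]$, and $S_k(i,j)T(j)$ in $\lin_{kk}[\theta]$.

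First I would record the elementary fact that a variable belonging to a $2$-prime implicant of a formula cannot be a unit of that same formula: a unit is a $1$-prime implicant, so were $Z$ a unit then $Z$ alone would imply the formula, contradicting the primality of the $2$-term containing $Z$ (whose proper subterm $Z$ would then already imply the formula). Then I would track, for each transversal variable, the lineages in which it occurs, the point being that each occurs only in lineages where the transversal forces it into a $2$-prime implicant: $R(i)$ appears only in $\lin_{k0}$, $T(j)$ only in $\lin_{kk}$, and each $S_\ell(i,j)$ only in $\lin_{k,\ell-1}$ and $\lin_{k\ell}$, and in every one of these it sits inside the corresponding transversal $2$-prime implicant. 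Combining these two observations shows that none of the transversal variables is a unit of any $\lin_{k\ell}[\theta]$, and hence, by \autoref{lemma:unitwrthk}, none is an $\Hk$-unit of $\Phi$.

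The step I expect to require the most care is the hypothesis of \autoref{lemma:unitwrthk}, whose conclusion is only guaranteed once $\theta$ has at least four independent transversals, whereas the existence of a single transversal $(i,j)$ provides only one. I would therefore read the corollary in the regime in which the lemma applies—which is the only regime relevant to the later construction of $\F'$—rather than attempting a direct argument for few transversals. The genuine difficulty in any such direct argument is that \emph{transversal-freeness} is defined existentially, over all $\theta'$ with $\Psi[\theta']=\Phi[Z{=}1]$, so merely exhibiting one surviving transversal in $\theta\cup\set{Z{=}1}$ would not by itself contradict it; \autoref{lemma:monotonicity}, which bounds how a single assignment changes the number of independent transversals, is the natural tool here, but it is precisely this existential quantifier that makes routing through \autoref{lemma:unitwrthk} the cleaner path.
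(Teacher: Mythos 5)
Your proof is correct and matches the paper's: the corollary is stated there as an immediate consequence of \autoref{lemma:unitwrthk}, precisely because the transversal condition makes the $(i,j)$-indexed terms $2$-prime implicants of their respective $\lin_{k\ell}[\theta]$, so none of their variables can be a unit of any lineage. Your handling of the four-transversal hypothesis is also the intended reading: the paper implicitly takes the corollary in the regime where \autoref{lemma:unitwrthk} applies (it is only ever invoked at nodes of $V_4$), so no direct few-transversal argument is required.
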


Since the formulas in $\Hk$ are monotone, by \autoref{lemma:unitwrthk}
and \autoref{proposition:monotone:units} if
created by setting a variable to 1.
Hence, if $\Phi$ has at least 4 independent transversals, then
setting all $\Hk$-units in $\Phi$ to 0 in turn yields a formula that
still has at least 4 independent transversals (by \autoref{cor:transversals})
and is $\Hk$-unit-free (by \autoref{lemma:unitwrthk}), and hence transparent
(by~\autoref{lemma:transparent} and \autoref{proposition:monotone:units}).

We now describe the procedure for building a multi-output \FBDD\ $\F'$
computing $\Hk$: Start with the \FBDD\ $\F$ for 
$\Psi$ and let $V$ and $E$ be, respectively, the vertices and edges
of $\F$.   
Let $V_{4}\subseteq V$ be the set of nodes $v\in V$ such that 
$\Phi_v=\Psi[\theta]$ for some assignment $\theta$ that
has at least $4$ independent transversals.
By \autoref{lemma:monotonicity}, $V_4$ is closed under predecessors
(ancestors) in $\F$; let $E_4$ be the set of edges in $\F$ whose endpoints are
both in $V_4$.
The following is immediate from \autoref{lemma:monotonicity} and the
definition of $V_4$.

\begin{proposition}
\label{prop:v4}
If $v\in V_4$ but some child of $v$ is not in $V_4$ then either or both of the
following hold:
(i) there is an assignment
$\theta$ with precisely 4 independent transversals such that $\Phi_v=\Psi[\theta]$,
or (ii) the variable $Z$ tested at $v$ is in $\hkunits(\Phi_v)$ and the
$0$-child of $v$ is in $V_4$.
\end{proposition}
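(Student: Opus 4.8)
The plan is to case-split on $t$, the maximum number of independent transversals taken over all assignments $\theta$ with $\Psi[\theta]=\Phi_v$; since $v\in V_4$ we have $t\ge 4$, and I fix some $\theta^*$ achieving this maximum (note that since some child of $v$ is missing from $V_4$, the two children differ, so $\Phi_v$ genuinely depends on the variable $Z$ tested at $v$ and hence $Z$ is unset by $\theta^*$). When $t=4$, conclusion (i) holds immediately, because $\theta^*$ is an assignment with \emph{precisely} four independent transversals and $\Phi_v=\Psi[\theta^*]$; the hypothesis on the children is not even needed here. So the real work is the case $t\ge 5$, where I aim to establish (ii).

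First I would handle the $0$-child $v_0$. Applying \autoref{lemma:monotonicity} to $\theta^*$ with $Z$ set to $0$, the assignment $\theta^*\cup\set{Z=0}$ has at least $t-1\ge 4$ independent transversals; since $\Phi_{v_0}=\Phi_v[Z=0]=\Psi[\theta^*\cup\set{Z=0}]$, this shows $v_0\in V_4$. Because some child of $v$ is assumed not to be in $V_4$, it must be the $1$-child $v_1$ that is missing, so \emph{no} representation of $\Phi_{v_1}$ has four or more independent transversals. Now applying \autoref{lemma:monotonicity} to $\theta^*$ with $Z$ set to $1$, the transversal count of $\theta^*\cup\set{Z=1}$ lies in $\set{0,t-1,t}$; since $t-1\ge 4$, the only value consistent with $v_1\notin V_4$ is $0$, so $\theta^*\cup\set{Z=1}$ witnesses that $\Phi_v[Z=1]=\Phi_{v_1}$ is transversal-free.

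It then remains to show that $\Phi_v$ itself is \emph{not} transversal-free, which I expect to be the crux, precisely because transversal counts depend on the representing assignment and not on $\Phi_v$ alone (so being in $V_4$ and being transversal-free are not obviously mutually exclusive). Here I would reuse the reduction already set up before the proposition: set every variable of $\hkunits(\Phi_v)$ to $0$ to obtain $\Phi_0$. By \autoref{cor:transversals} the $\Hk$-units avoid the prime implicants indexed by the transversals of $\theta^*$, so $\theta^*$ extended by these zeros still carries $t\ge 4$ independent transversals; by \autoref{lemma:unitwrthk} together with \autoref{proposition:monotone:units}, $\Phi_0$ is $\Hk$-unit-free; and hence by \autoref{lemma:transparent}, $\Phi_0$ is transparent. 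Transparency pins down $\lin_{k0}[\cdot],\ldots,\lin_{kk}[\cdot]$, and therefore the transversals, uniformly across all representations of $\Phi_0$, so every representation of $\Phi_0$ has at least four transversals and $\Phi_0$ cannot be transversal-free. If $\Phi_v$ were transversal-free, however, extending its $0$-transversal representation by setting $\hkunits(\Phi_v)$ to $0$ would, by \autoref{lemma:monotonicity}, yield a $0$-transversal representation of $\Phi_0$, a contradiction. Hence $\Phi_v$ is not transversal-free, so $Z\in\hkunits(\Phi_v)$ by definition, and together with $v_0\in V_4$ this gives (ii). Everything except this last non-transversal-freeness argument is a direct application of \autoref{lemma:monotonicity} and the definition of $V_4$, matching the claim that the proposition is essentially immediate.
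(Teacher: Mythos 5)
Your proof is correct, and its skeleton is exactly the ``immediate'' argument the paper intends (the paper gives no written proof beyond citing \autoref{lemma:monotonicity} and the definition of $V_4$): fix $\theta^*$ maximizing the number $t$ of independent transversals over representations of $\Phi_v$; if $t=4$ conclude (i) outright; if $t\ge 5$ use \autoref{lemma:monotonicity} to show the $0$-child is in $V_4$ (so the missing child is the $1$-child), and to force the transversal count of $\theta^*\cup\set{Z{=}1}$, which lies in $\set{0,t-1,t}$, down to $0$, making $\Phi_v[Z{=}1]$ transversal-free. Where you genuinely diverge is the last sub-step, showing $\Phi_v$ itself is not transversal-free. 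You route this through $\Phi_0=\Phi_v[\hkunits(\Phi_v){=}0]$, invoking \autoref{cor:transversals}, \autoref{lemma:unitwrthk}, \autoref{proposition:monotone:units}, and finally transparency via \autoref{lemma:transparent}. This works, but it is far heavier than needed: \autoref{cor:same:transversals} applies to $\Phi_v$ directly (it requires only $\ge 3$ independent transversals and no $\Hk$-unit-freeness) and says that \emph{every} $\theta'$ with $\Psi[\theta']=\Phi_v$ has the same transversal set as $\theta^*$, which is nonempty; this is precisely the paper's remark following \autoref{cor:same:transversals} that, given at least $3$ independent transversals, having transversals is a property of the subformula rather than of the assignment. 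That yields the needed fact in one line, with no appeal to transparency.

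One subtlety you gloss over in the detour: in the contradiction step you extend a hypothetical transversal-free representation $\theta'$ of $\Phi_v$ by $\set{\hkunits(\Phi_v){=}0}$, which presumes $\theta'$ leaves those variables unset; you never verify this. The step self-repairs, but for a reason worth making explicit: by \autoref{def:transversal} and the paper's definition of $\Hk$-units, a variable is an $\Hk$-unit of $\Phi$ only if $\Phi$ is \emph{not} transversal-free, so under your hypothesis $\hkunits(\Phi_v)=\emptyset$ and the extension is vacuous --- and conversely, if $\hkunits(\Phi_v)\neq\emptyset$ then $\Phi_v$ is not transversal-free by definition alone, short-circuiting the entire transparency argument. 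As written, the case analysis reads as potentially circular; stating this dichotomy (or simply using \autoref{cor:same:transversals}) would close it cleanly.
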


We will apply a similar construction to that of
Section~\ref{subsection:unitrule} to the subgraph of $\F$ on $V_4$.
For $e=(u,v)$, define $\hkunits(e)=\hkunits(\Phi_v)-\hkunits(\Phi_u)$ to be
the set of new $\Hk$-units created along edge $e$.
There are two differences from the argument in
Section~\ref{subsection:unitrule}:
(1) we will only apply the construction to edges in $E_4$ and will build
the rest of $\F'$ independently of $\F$, and (2) unlike setting ordinary units
to 1, in which the
corresponding \FBDD\ edges simply point to the 1-sink, each setting of an
$\Hk$-unit to 1 only guarantees that the resulting formula is
transversal-free; moreover the transversal-free formulas resulting from
different settings may be different.
The details are as follows (see \autoref{fig:HkConversion} in \autoref{subsection:transparent}):
%\autoref{subsection:hk-conversionfig}):
\begin{itemize}
{\setlength{\itemsep}{-0.3ex}
\item For every $e=(u,v)\in E_4$ such that $\hkunits(e)$ is non-empty
(and hence the 0-child of $u$ is also in $V_4$), add new
vertices $(e,1),\ldots,(e,|\hkunits(e)|)$ and replace $e$ with a path from
$u$ to $v$ having the new vertices in order as internal vertices.  
\item Edge $(u,(e,1))$ in $\F'$ will have label 1, which is the label that
$e$ has in $\F$; denote the variable tested at $u$ by $W$.
\item The variable labeling each new vertex $(e,i)$ will be the $i$-th
element of $\hkunits(e)$ under some fixed ordering of variables; we
denote this variable by $Z_{e,i}$.
\item The 0-edge out of each new vertex $(e,i)$ will lead to the next vertex
on the newly created path.  However, unlike the simple situation with ordinary
units, the 1-edge out of each new vertex $(e,i)$ will lead to a distinct
new node $(u,i)$ of $\F'$.   
Since $(u,v)\in E_4$ there is some partial assignment $\theta$ 
such that $\Phi_u=\Psi[\theta]$, $\Phi_v=\Psi[\theta,W{=}1]$, and
$\theta\cup \{W{=}1\}$ has at least 4 transversals; for definiteness we will pick
the lexicographically first such assignment.
Define the partial assignment 
\begin{align*}
\theta(u,i)&=\theta\,\cup\{W{=}1\}\cup\{\hkunits(\Phi_u){=}0\}\\
&\quad \cup\{Z_{e,1}{=}0,\ldots,Z_{e,i-1}{=}0,Z_{e,i}{=}1\},
\end{align*}
to be the assignment that sets all $\Hk$-units in $\Phi_u$ to 0 along with the
first $i-1$ of the $\Hk$-units created by setting $W$ to 1.
The sub-dag of $\F'$ rooted at $(u,i)$ will be the size $O(k2^k n^2)$
FBDD for $\Hk[\theta(u,i)]$ constructed in \autoref{prop:transversals}.
%, which we also view as computing $\Psi[\theta(u,i)]$ since the value of $\Psi$
%is determined by the value of $\Hk$.
\item For any node $w\in V_4$, whose 0-child is in $V_4$,
such that $w$ is labeled by a variable $W$ that
was an $\Hk$-unit of $\Phi_v$ for some ancestor $v$ of $w$, convert
$w$ to a no-op node pointing to its 0-child; that is, remove its variable
label and its 1-outedge and retain its 0-outedge with its labeling removed.
\item For any node $v\in V_4$ with a child that is not in $V_4$ and to which
the previous condition did not apply, let $\theta$ be a partial assignment
such that $\Phi_v=\Psi[\theta]$ and $\theta$ has precisely 4 independent
transversals,
as guaranteed by \autoref{prop:v4}, make $v$ the root of the size
$O(k2^kn^2)$ FBDD for $\Hk[\theta']$ constructed in~\autoref{prop:transversals}
where $\theta'=\theta\cup \{\hkunits(\Phi_v)=0\}$.
\item All other labeled edges of $\F$ between nodes of $E_4$ are included in
$\F'$.
}
\end{itemize}

The fact that this is well-defined follows similarly to~\autoref{prop:well-defined}.

\begin{lemma}\label{lem:hkunit-correctness}
$\F'$ as constructed above is a multi-output FBDD computing $\Hk$
that has size at most $O(k2^k n^3)$ times the size of $\F$.
\end{lemma}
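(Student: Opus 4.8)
The plan is to verify three things about $\F'$: that it is a legal \FBDD\ (every root-to-sink path reads each variable at most once), that it correctly computes $\Hk=(\lin_{k0},\ldots,\lin_{kk})$, and that $|\F'|$ is $O(k2^kn^3)$ times $|\F|$. The whole argument parallels the proof of \autoref{lem:unit-correctness} for the ordinary unit rule; the single new wrinkle is that setting an $\Hk$-unit to $1$ does not fix the output but only collapses the residual to a transversal-free instance, which by \autoref{prop:transversals} still has a small \OBDD. This is why each new path vertex $(e,i)$ sends its $1$-edge to a separately grafted diagram rooted at $(u,i)$ rather than to a sink. For the size bound I would rest everything on one counting observation: setting a single variable to $1$ creates only $O(n)$ new ordinary units across all the $\lin_{k\ell}[\theta]$. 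Indeed, setting $R(i){=}1$ turns each surviving term $R(i)S_1(i,j)$ into the unit $S_1(i,j)$, giving at most $n$ new units and affecting only $\lin_{k0}$; symmetrically for $T(j){=}1$; and setting some $S_\ell(i,j){=}1$ creates at most the two units $S_{\ell-1}(i,j),S_{\ell+1}(i,j)$ (with $R(i)$ or $T(j)$ at the ends). By \autoref{lemma:unitwrthk} the $\Hk$-units of any $\Phi_v$ with $v\in V_4$ are exactly these ordinary units, and by \autoref{proposition:monotone:units} none are created along a $0$-edge, so $\hkunits(e)$ can be nonempty only on $1$-edges and then has size $O(n)$. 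Hence the number of grafted copies of the size-$O(k2^kn^2)$ diagram of \autoref{prop:transversals} is $\sum_{e\in E_4}|\hkunits(e)|+|V_4|=O(nN)$, and multiplying gives the claimed $O(k2^kn^3 N)$; the $O(nN)$ inserted path vertices are dominated by this.

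Next I would check the read-once property exactly as in \autoref{prop:well-defined} and \autoref{lem:unit-correctness}. Two old nodes cannot repeat a variable since $\F$ is an \FBDD; an old node followed by a new vertex $(e,i)$ testing $W$ is impossible because after $W$ is queried no descendant's residual depends on $W$, so $W\notin\hkunits(e)$; the reverse order is prevented by the construction rule that turns any later old node labeled by an already-created $\Hk$-unit into a no-op; and two new vertices $(e_1,i),(e_2,j)$ testing one variable are ruled out by persistence of $\Hk$-units (the \autoref{prop:forever-unit} analogue, via \autoref{lemma:unitwrthk} and monotonicity). Finally, each grafted diagram is an \OBDD\ built for $\Hk[\theta(u,i)]$ (resp.\ $\Hk[\theta']$) that tests only the variables actually occurring in those residual formulas, and such variables are free in the assignment reaching the graft point, hence unread on that path.

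The main obstacle is correctness, which I would reduce to a transparency invariant proved by induction along the ``main'' part of $\F'$ — the nodes of $V_4$, the inserted $\Hk$-unit vertices, and the edges among them, but not the grafted subdags. Writing $\Phi_v^{\circ}=\Phi_v[\hkunits(\Phi_v){=}0]$, the invariant asserts that for every $v\in V_4$ and every main path $\theta'$ from the root to $v$ we have $\theta'=\theta_v\cup\{\hkunits(\Phi_v){=}0\}$ for some \FBDD-path $\theta_v$ to $v$ in $\F$, that $\Psi[\theta']=\Phi_v^{\circ}$ is independent of the path, and that $\theta'$ retains at least $4$ independent transversals. This is the exact analogue of the invariant $\Psi[\theta']=\Phi_v^{-}$ in \autoref{lem:unit-correctness}; the induction step is the same, using that setting an $\Hk$-unit to $0$ neither creates new $\Hk$-units (\autoref{proposition:monotone:units}, \autoref{lemma:unitwrthk}) nor destroys a transversal (\autoref{cor:transversals}), while \autoref{lemma:monotonicity} controls how the transversal count falls along ordinary decision edges so that it stays $\ge 4$ until the construction grafts. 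Because $\Psi[\theta']$ is $\Hk$-unit-free with at least $4$ transversals, \autoref{lemma:transparent} (using that $f$ depends on all inputs) makes it \emph{transparent}, which is precisely the statement that $\lin_{k0}[\theta'],\ldots,\lin_{kk}[\theta']$ depend only on $\Phi_v^{\circ}$ and not on the path reaching $v$.

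Correctness of the grafts then follows. At a transition node $v$ of type (i) in \autoref{prop:v4} we graft the \OBDD\ of \autoref{prop:transversals} for $\Hk[\theta\cup\{\hkunits(\Phi_v){=}0\}]$; transparency forces $\lin_{k\ell}[\theta\cup\{\hkunits(\Phi_v){=}0\}]=\lin_{k\ell}[\theta']$ for the actual path $\theta'$, so the graft emits the correct tuple, and since these residuals are literally equal formulas they share the same free variables, which is also what keeps $\F'$ read-once across the junction. At a branch-off node $(u,i)$, the actual path and the lexicographically-canonical $\theta(u,i)$ differ only in their prefixes reaching $u$, both inducing $\Phi_u^{\circ}$, so transparency at $u$ equates every $\lin_{k\ell}$ residual and the transversal-free \OBDD\ grafted at $(u,i)$ again returns the correct values. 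I expect the genuinely delicate point to be this last identification — verifying that the canonical assignment used to build each grafted diagram induces the same $\Hk$-residuals as every real path into the graft point — since it is exactly there that transparency, the monotonicity facts, and the transversal bookkeeping of \autoref{lemma:monotonicity} and \autoref{cor:transversals} must be combined.
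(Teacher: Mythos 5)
Your proposal is correct and takes essentially the same route as the paper's proof: the same invariant $\Psi[\theta']=\Phi_v[\hkunits(\Phi_v){=}0]$ established by induction along the main ($V_4$) part of $\F'$, the same appeal to \autoref{lemma:transparent} (via \autoref{lemma:unitwrthk} and \autoref{cor:transversals}) to show the grafted diagrams of \autoref{prop:transversals} emit correct values at both kinds of exit points, and the same read-once argument modeled on \autoref{prop:well-defined} and \autoref{lem:unit-correctness}. Your explicit $O(n)$ bound on $|\hkunits(e)|$ through the units characterization merely spells out a detail the paper's size count leaves implicit.
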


\begin{proof}
We first analyze the size of $\F'$: As in the analysis for computing $\lin_k$,
some nodes $u$ have one added unit-setting path of length at most $n$ and 
each node on the path of at the extremities of $V_4$ has a new added FBDD of
size $O(k 2^k n^2)$ yielding only $O(k2^k n^3)$ new nodes per node of $\F$.
Also, the fact that $\F'$ is an FBDD follows similarly to the proof in
\autoref{lem:unit-correctness}.

%For every node $v\in V_4$ if $\Phi_v$ is the function computed in $\F$ at node
%$v$, then we show by induction that for every partial assignment $\theta'$
%For nodes $v\in V_4$ if $\Phi_v$ is the function computed in $\F$ at node
%We show by induction over nodes $v\in V_4$ the the function 
If $\Phi_v$ is the function computed in $\F$ at node $v$ for
all $v\in V_4$,
then we show by induction that for every partial assignment $\theta'$
reaching $v$ in $\F'$, $\Psi[\theta']=\Phi_v[\hkunits(\Phi_v){=}0]$
and $\theta'=\theta\cup\set{\hkunits(\Phi_v){=}0}$ for some partial assignment
$\theta$ such that $\Phi_v=\Psi[\theta]$.  It is trivially true of the root.
The argument is similar to that for \autoref{lem:unit-correctness}.

We now see why this is enough.
Since $v\in V_4$, $\Phi_v[\hkunits(\Phi_v){=}0]$ is $\Hk$-unit-free and
has at least 4 transversals, and so it is transparent by
\autoref{lemma:transparent}.
It remains to observe that (i) each multi-output FBDD attached directly to 
any node $v\in V_4$ used a restriction $\theta$ of $\Psi$ that would lead to
that node in $\F'$, which, because $\Psi[\theta]$ is transparent, implies that
its leaves correctly compute the values of $\Hk$, and (ii) the same holds
for the restriction leading to node $(u,i)$ with parent $(e,i)$, namely, the
restriction used to
build the multi-output FBDD consists of a restriction $\theta$ that in $\F'$
would reach node $u\in V_4$ and for which $\Psi[\theta]$ is transparent,
together with the 
assignment $\set{W{=}1}\cup \set{Z_{e,1}{=}0,\ldots,Z_{e,i-1}{=}0,Z_{e,i}{=}1}$
which follows the unique path from $u$ to $(u,i)$.  Again this implies that its
leaves correctly compute the values of $\Hk$.
\end{proof}

%%%%%%%%%%%%%%%%%%%%% 5.1 IS MOVED TO APPENDIX %%%%%%%%%%%%%%%%%%%%%%%%%%%%%
\cut{

\subsection{Proofs of Lemmas~\ref{lemma:transparent} and~\ref{lemma:unitwrthk}}
\label{subsection:transparent}

All formulas in $\Hk$ are 2-DNF formulas and, for every $(i,j)\in [n]^2$,
each has a unique 2-prime implicant $P_{\ell,i,j}$ indexed by $(i,j)$
where $P_{0,i,j}=R(i)S_1(i,j)$, $P_{k,i,j}=S_k(i,j)T(j)$ and 
$P_{\ell,i,j}=S_\ell(i,j)S_{\ell+1}(i,j)$ for $\ell\in [k-1]$.
We say that two of their 2-prime implicants, one from $\lin_{k\ell}$ and one
from $\lin_{k(\ell+1)}$, are {\em neighbors} if they share a variable
and hence have the same index $(i,j)$.
Observe that each prime implicant in $\lin_{k\ell}$ has two
neighbors if $\ell\in [k-1]$ and one neighbor if $\ell\in \set{0,k}$.

The key technical lemma is the following:

\begin{lemma} \label{prop:odd:even} 
Let $\Psi=f(\Hk)$ for some function $f$ that depends on all its inputs.
Suppose $\theta$ is a partial assignment with two independent
transversals $(i_0,j_0)$ and $(i_1,j_1)$. 
Suppose that for some $(i,j)$ independent of
of both transversals, the neighboring prime implicants of the
%$(i,j)$ 
prime implicant, $P_{\ell,i,j}$, of $\lin_{k\ell}$
are either unassigned or set to 0 by $\theta$.
Then there exists a partial assignment $\mu$ to all the variables of
$Var(\Psi[\theta])$, except those in $P_{\ell,i,j}$, and to
all variables of the transversals, $(i_0,j_0)$ and $(i_1,j_1)$,
such that $\Psi[\theta \cup \mu] = f_\ell(P_{\ell, i, j}[\theta])$.
Moreover, the choice of $\mu$ depends on $\Psi[\theta]$ (as well as 
the indices $\ell,i,i_0,i_1,j,j_0,j_1$) but not on any other aspect of $\theta$.
\end{lemma}

The lemma still holds when we merely assume that the three pairs of
indices are distinct; however we do not allow it here since it would
complicate the proof without any advantage with respect to our applications
of it.

\begin{proof}
Recall that since $f$ depends on all its inputs, for every $\ell$ there exists
an assignment $\mu_\ell : \vecX-\set{X_\ell} \rightarrow \set{0,1}$,
such that $f_\ell(X_\ell)\myeq f[\mu_\ell]$ is a function that depends on
$X_\ell$: i.e., $f_\ell(\vecX) = X_\ell$ or $f_\ell(\vecX)=\neg X_\ell$.  

We define assignment $\mu$ so that it sets the remaining variables to
force $\lin_{km}$ to equal $\mu_\ell(X_m)$ for all $m\ne \ell$ and force
$\lin_{k\ell}$ to equal $P_{\ell,i,j}[\theta]$.
In order to force some $\lin_{km}$ to 1 $\mu$ may need to set two
variables to 1 that may appear in neighboring prime implicants.
In order to avoid incidentally forcing any of those neighboring prime
implicants to 1, when forcing the $\lin_{km}$ to 1 we use the variables in the
two transversals alternately.   We now give the formal details.
%The condition on the neighboring prime implicants to $P_{\ell,i,j}$ ensures
%that ...

Let $Ones_\ell=\set{m\mid \mu_\ell(X_m)=1}$ and order the elements of
$Ones_\ell$ as $m_1<m_2<\cdots$, and define
$Ones^b_\ell=\set{m_r\mid b=r\mod 2}$ for $b\in \set{0,1}$.
For $b\in \set{0,1}$, define $\mu$ to set the variables of the $(i_b,j_b)$ prime
implicant of $\lin_{km}$ to 1 for every $m\in Ones^b_\ell$.
This will force $\lin_{km}[\theta\cup \mu]=1$, for all $m\in Ones_\ell$.
Let $\mu$ set all other variables in the transversals $(i_0,j_0)$ and
$(i_1,k_1)$ as well as all variables of $\Psi[\theta]$, except for those
in $P_{\ell,i,j}$, to 0.   
In particular, the alternation between how the 1's are forced in the definition
of $\mu$ ensures that for $b\in \set{0,1}$, if the $(i_b,j_b)$ prime
implicant of $\lin_{km}$ is set to 1 by $\mu$, then its neighboring prime
implicants are forced to 0 by $\mu$.
In fact,
$\lin_{km}[\theta\cup \mu]=0$ for all $m\notin  Ones_\ell\cup \set{\ell}$ since
each neighboring prime implicant to $P_{\ell,i,j}[\theta]$ will have one
variable set as in $\theta$ and the other set to 0 and all other prime
implicants are either set to 0 by $\theta$ or by $\mu$.
Finally, the same property is true of every prime implicant of $\lin_{k\ell}$
except for $P_{\ell,i,j}[\theta]$.
It remains to observe that $\Psi[\theta\cup\mu]=f(\Hk[\theta\cup\mu])=f[\mu_\ell](P_{\ell,i,j}[\theta])=f_\ell(P_{\ell,i,j}[\theta])$ and $\mu$ only depended
on $\theta$ through the value of $\Psi[\theta]$, as required.
\end{proof}

Notice that the lemma fails if $\theta$ has only 1 transversal:
\begin{example}
For a counterexample, consider $f(X_0, X_1, X_2) = X_0X_2 \vee X_1$.
Suppose that $\theta$ sets all variables in $\mathbf{Z}$ to 0, except for
the variables with indices $(i,j)=(3,7)$, which remain unset:
$$R(3), S_1(3,7), S_2(3,7), T(7).$$
Thus, $\theta$, has the transversal $(3,7)$.  However,
$\Psi[\theta]$ is:
\begin{eqnarray*}
  &  & f(\lin_{30}[\theta], \lin_{31}[\theta], \lin_{32}[\theta],  \lin_{33}[\theta]) \\
  & = &  f(R(3)S_1(3,7), S_1(3,7)S_2(3,7), S_2(3,7)T(7)) \\
  & = & R(3)S_1(3,7)S_2(3,7)T(7) \vee S_1(3,7)S_2(3,7) \\
  & = & S_1(3,7)S_2(3,7)
\end{eqnarray*}
hence it does not depend on $R(3)$ or $T(7)$. 
\end{example}

We immediately obtain the following two corollaries:

\begin{corollary} \label{corollary:depend} If $\theta$ has at
  least 3 distinct transversals 
  then all variables in its transversals are in
  $Var(\Psi[\theta])$.
\end{corollary}

\begin{proof}
This follows immediately since if $(i, j)$ is a transversal, all $P_{\ell, i, j}[\theta]$ are $2$-prime implicants in their respective $\lin_{k\ell} [\theta]$.
\end{proof}

\begin{corollary} \label{cor:same:transversals}
  If $\theta$ has at least 3 independent transversals
  then, every partial assignment $\theta'$ such that
  $\Psi[\theta]=\Psi[\theta']$ has the same set of transversals as
  $\theta$.
\end{corollary}

\begin{proof}
  We prove that every transversal of $\theta$ is a transversal of
  $\theta'$: this implies that $\theta'$ has at least 3 independent
  transversals, and therefore the converse holds too (every transversal
  of $\theta'$ is a transversal of $\theta$).  
  Let $\Phi=\Psi[\theta]=\Psi[\theta']$.
  Let $(i,j)$ be a transversal for $\theta$.  Since $\theta$ has at least 3
  transversals, by \autoref{corollary:depend}, $\Phi$ depends on all variables
  of the transversal $(i,j)$.  It follows that
  $\theta'$ cannot set any of these variables.  For $\ell\in [k-1]$, the
  2-prime implicants within each $\lin_{k\ell}$ are disjoint from each other
  and hence if the variables are unset then each such 2-prime implicant remains.
  Thus, for each $\ell\in [k-1]$ the Boolean function $\lin_{k\ell}[\theta']$
  contains the 2-prime implicant $S_{\ell}(i,j)S_{\ell+1}(i,j)$.

  It remains to prove that $\lin_{k0}[\theta']$ and $\lin_{kk}[\theta']$
  each contain the 2-prime implicants on $(i,j)$,   
  $R(i)S_1(i,j)$ or $S_k(i,j)T(j)$, which are unset by $\theta'$.
  To show this we must rule out $R(i)$ or $T(j)$ absorbing them. 
  We do this for $R(i)$; the case for $T(j)$ is analogous.
  Suppose to the contrary that $\lin_{k0}[\theta'\cup\set{R(i){=}1}]=1$. 
  If this is the case, then $\Psi[\theta'\cup\set{R(i){=}1}]=\Phi[R(i){=}1]$
  does not depend on any of the $R$ variables.
  However, since $\theta$ has $(i,j)$ as a transversal as well as, in
  particular, another (independent) transversal $(i',j')$ with $i'\ne i$,
  $\lin_{k0}[\theta]$ contains $R(i)S_1(i,j)$ and $R(i')S_1(i',j')$ as 
  2-prime implicants.  It follows that $\lin_{k0}[\theta\cup\set{R(i){=}1}]$
  depends on $R(i')$ and hence $\Psi[\theta\cup \set{R(i){=}1}]=\Phi[R(i){=}1]$
  depends on $R(i')$, contradicting our earlier derivation that it did not
  depend on any $R$ variables.
\end{proof}

Thus, for $k \geq 3$, the property of having $k$ independent transversals, 
is a property of the subformula and not of an assignment, and for the 
rest of the section we will say that a restriction $\Phi$ of $\Psi$ has
$k$ independent transversals if there exists an assignment $\theta$ so that
$\Phi = \Psi [\theta]$ and $\theta$ has $k$ independent transversals; by the
above, this
is equivalent to saying that all $\theta$ with $\Phi = \Psi [\theta]$
have $k$ independent transversals.

We use this to prove our lemma that are formulas are transparent if they
are unit-free and have sufficiently-many independent transversals.

\begin{proof}[Proof of \autoref{lemma:transparent}]
  Suppose the contrary: then there exist two partial assignments
  $\theta, \theta'$ such that $\Phi = \Psi[\theta] =
  \Psi[\theta']$ and $\Phi$ has at least 3 independent traversals, but for
  some $\ell$, $\lin_{k\ell}[\theta] \neq
  \lin_{k\ell}[\theta']$.  
  Observe that if any $\lin_{k\ell}[\theta]$ or $\lin_{k\ell}[\theta']$
  were the constant 1, then $\Phi$ would be transversal-free 
  contradicting the fact that it has 3 independent traversals.
  Also observe that if any $\lin_{k\ell}[\theta]$ or $\lin_{k\ell}[\theta']$
  contained a 1-prime-implicant (unit) $Z$ then setting $Z{=}1$ would
  set the corresponding $\lin_{k\ell}$ to 1 which would eliminate
  all of its transversals, contradicting the assumption that $\Phi$ is
  $\Hk$-unit-free.  Therefore all prime implicants of $\lin_{k\ell}[\theta]$
  and $\lin_{k\ell}[\theta']$ are 2-prime implicants.
  Since all prime implicants in $\lin_{k\ell}[\theta]$ and
  $\lin_{k\ell}[\theta']$ are 2-prime implicants,
  \autoref{prop:odd:even} implies that all variables of all prime
  implicants are in $Var(\Phi)$.  

  Assume w.l.o.g.\ that $P_{\ell,i,j}$ is a 2-prime implicant of
  $\lin_{k\ell}[\theta]$ that is not a prime implicant of
  $\lin_{k\ell}[\theta']$; i.e., $P_{\ell,i,j}[\theta]=P_{\ell,i,j}$ but
  $P_{\ell,i,j}[\theta']=0$.
  Let $(i_0, j_0)$ and $(i_1,j_1)$ be two independent transversals for
  $\theta$ (which are also transversals for $\theta'$ by
  \autoref{cor:same:transversals}) that are also independent of $(i,j)$. 
  Since in both $\theta$ and $\theta'$, the neighboring implicants of
  $P_{\ell,i,j}$ either remain as 2-prime implicants or are
  set to 0 in $\theta$ and $\theta'$ (though not necessarily the same in both),
  we can apply \autoref{prop:odd:even} to both $\theta$ and
  $\theta'$ to obtain $\mu$ and $\mu'$.
  By the conclusion of \autoref{prop:odd:even},
  $\Phi[\mu]=\Psi[\theta\cup \mu]=f_\ell(P_{\ell,i,j})$ which is either
  $P_{\ell,i,j}$ or $\neg P_{\ell,i,j}$ but
  $\Phi[\mu']=\Psi[\theta'\cup \mu']=f_\ell(0)$ which is neither of the two.
  However, since the assignments $\mu$ and $\mu'$ depended only on the indices
  involved and $\Phi$, we conclude that $\mu = \mu'$ which is a contradiction.
\end{proof}

The requirement that $\Phi$ be unit-free is necessary for
\autoref{lemma:transparent} to hold: a simple example is given by the formula
$\Phi'=S_1(3,7)$ in \autoref{ex:transparent}, which is not transparent. 
The formula remains non-transparent even if we expand it with three
independent transversals,
e.g. $S_1(3,7) \vee [R(4)S_1(4,4) \vee \ldots \vee S_3(4,4)T(4)]
\vee [R(5)S_1(5,5) \vee \ldots] \vee [R(6)S_1(6,6) \vee\ldots]$, for
the same reasons given in the example.  

Finally, we use the above properties to prove our characterization of
$\Hk$-units in case there are sufficiently many independent transversals.

\begin{proof}[Proof of \autoref{lemma:unitwrthk}]
  If $Z\in \units(\lin_{k\ell}[\theta])$ for some $\ell$. then
  by definition, $\lin_{k\ell}[\theta \cup \set{Z{=}1}] = 1$, and therefore
  $\Psi[\theta \cup \set{Z{=}1}]=\Phi(\set{Z{=}1})$ is transversal-free; it
  follows that $Z\in \hkunits(\Phi)$.  \\
  Conversely, suppose that
  $Z \notin \bigcup_{\ell\in \{0,\ldots,k\}}\units(\lin_{k\ell}[\theta])$.
  In particular, none of the (monotone) formulas
  $\lin_{k\ell}[\theta\cup \set{Z{=}1}]$ is the constant 1.
  The assignment $Z{=}1$ can eliminate at most one of 
  the $\ge 4$ independent transversals in $\theta$, so
  $\theta\cup \set{Z{=1}}$ has at least 3 (independent) transversals.
%  WE CAN INCLUDE THE FOLLOWING EXAMPLE BUT IT ONLY SHOWS WHY independence
%  AMONG THE 4 TRANSVERSALS IS NEEDED.
%  (e.g, setting any $S_1(i,j){=}1$ will make $R(i)$ a unit
%  in $\lin_{k0}[\theta \cup \set{S_1(i,j){=}1}]$, wiping out all transversals
%  of the form $(i,j')$, but all transversals independent of $(i,j)$ remain in
%  $\theta \cup \set{S_1(i,j){=}1}$). 
  Therefore, by \autoref{cor:same:transversals},
  \emph{every} partial assignment $\theta'$ such that
  $\Psi[\theta']=\Phi[Z{=}1]=\Psi[\theta\cup \set{Z{=}1}]$ has at least
  3 independent transversals.   This implies that $\Phi[Z{=}1]$ is not
  transversal-free and hence $Z\notin \hkunits(\Phi)$.
\end{proof}

}

%\section{Proofs of Lemmas~\ref{lemma:transparent} and~\ref{lemma:unitwrthk}}
%\section{Omitted Proofs from Section~5}
\subsection{Proofs of Lemmas~\ref{lemma:transparent} and~\ref{lemma:unitwrthk}}
\label{subsection:transparent}

\begin{figure}[t]
  \centering
		\includegraphics[scale=0.45]{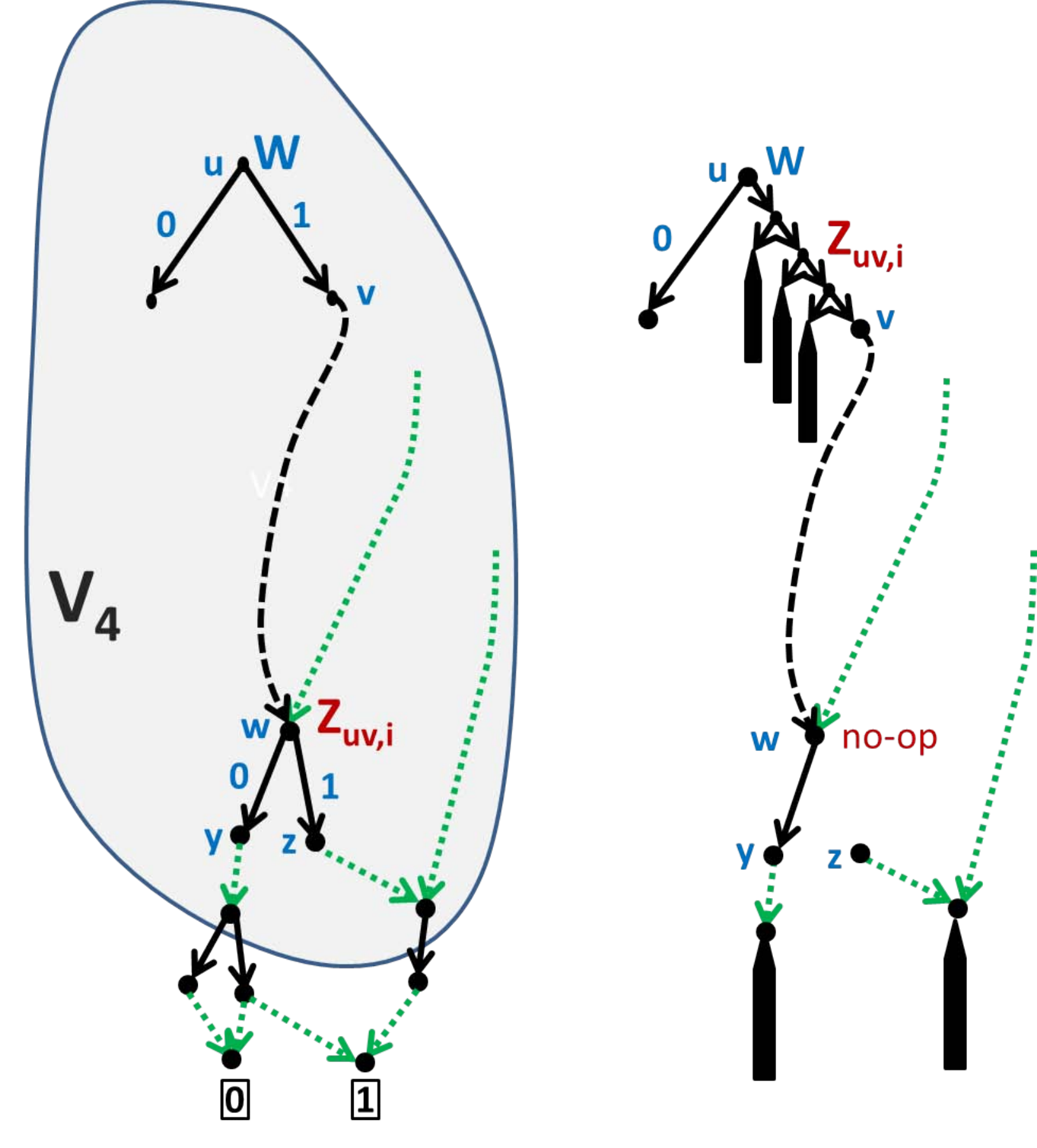}
    \vspace{-1ex}
    \centerline{(a)\hspace{1.5in}(b)}
			\includegraphics[scale=0.33]{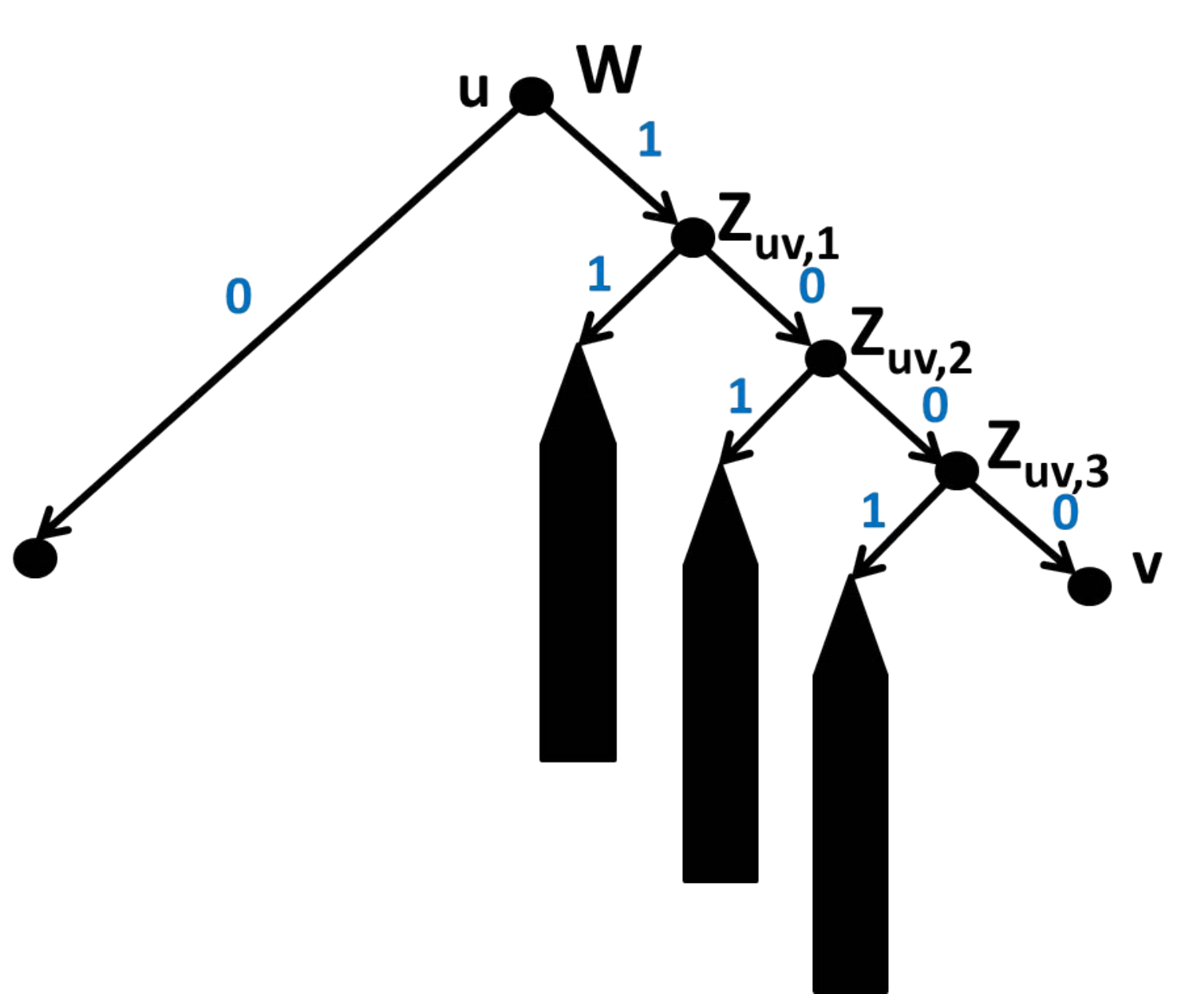}
    \vspace{-1ex}
    \centerline{(c)}
      \caption{Given an FBDD $\F$ for $\Psi=f(\Hk)$ in (a), apply the conversion
      to produce $\F'$ for $\Hk$ as in (b), with detail for unit propagation in (c) in case that setting $W=1$ produces new $\Hk$-units.}
        \label{fig:HkConversion}
	  \end{figure}
All formulas in $\Hk$ are 2-DNF formulas and, for every $(i,j)\in [n]^2$,
each has a unique 2-prime implicant $P_{\ell,i,j}$ indexed by $(i,j)$
where $P_{0,i,j}=R(i)S_1(i,j)$, $P_{k,i,j}=S_k(i,j)T(j)$ and 
$P_{\ell,i,j}=S_\ell(i,j)S_{\ell+1}(i,j)$ for $\ell\in [k-1]$.
We say that two of their 2-prime implicants, one from $\lin_{k\ell}$ and one
from $\lin_{k(\ell+1)}$, are {\em neighbors} if they share a variable
and hence have the same index $(i,j)$.
Observe that each prime implicant in $\lin_{k\ell}$ has two
neighbors if $\ell\in [k-1]$ and one neighbor if $\ell\in \set{0,k}$.

The key technical lemma is the following:

\begin{lemma} \label{prop:odd:even} 
Let $\Psi=f(\Hk)$ for some function $f$ that depends on all its inputs.
Suppose $\theta$ is a partial assignment with two independent
transversals $(i_0,j_0)$ and $(i_1,j_1)$. 
Suppose that for some $(i,j)$ independent of
of both transversals, the neighboring prime implicants of the
%$(i,j)$ 
prime implicant, $P_{\ell,i,j}$, of $\lin_{k\ell}$
are either unassigned or set to 0 by $\theta$.
Then there exists a partial assignment $\mu$ to all the variables of
$Var(\Psi[\theta])$, except those in $P_{\ell,i,j}$, and to
all variables of the transversals, $(i_0,j_0)$ and $(i_1,j_1)$,
such that $\Psi[\theta \cup \mu] = f_\ell(P_{\ell, i, j}[\theta])$.
Moreover, the choice of $\mu$ depends on $\Psi[\theta]$ (as well as 
the indices $\ell,i,i_0,i_1,j,j_0,j_1$) but not on any other aspect of $\theta$.
\end{lemma}

The lemma still holds when we merely assume that the three pairs of
indices are distinct; however we do not allow it here since it would
complicate the proof without any advantage with respect to our applications
of it.

\begin{proof}
Recall that since $f$ depends on all its inputs, for every $\ell$ there exists
an assignment $\mu_\ell : \vecX-\set{X_\ell} \rightarrow \set{0,1}$,
such that $f_\ell(X_\ell)\myeq f[\mu_\ell]$ is a function that depends on
$X_\ell$: i.e., $f_\ell(\vecX) = X_\ell$ or $f_\ell(\vecX)=\neg X_\ell$.  

We define assignment $\mu$ so that it sets the remaining variables to
force $\lin_{km}$ to equal $\mu_\ell(X_m)$ for all $m\ne \ell$ and force
$\lin_{k\ell}$ to equal $P_{\ell,i,j}[\theta]$.
In order to force some $\lin_{km}$ to 1 $\mu$ may need to set two
variables to 1 that may appear in neighboring prime implicants.
In order to avoid incidentally forcing any of those neighboring prime
implicants to 1, when forcing the $\lin_{km}$ to 1 we use the variables in the
two transversals alternately.   We now give the formal details.
%The condition on the neighboring prime implicants to $P_{\ell,i,j}$ ensures
%that ...

Let $Ones_\ell=\set{m\mid \mu_\ell(X_m)=1}$ and order the elements of
$Ones_\ell$ as $m_1<m_2<\cdots$, and define
$Ones^b_\ell=\set{m_r\mid b=r\mod 2}$ for $b\in \set{0,1}$.
For $b\in \set{0,1}$, define $\mu$ to set the variables of the $(i_b,j_b)$ prime
implicant of $\lin_{km}$ to 1 for every $m\in Ones^b_\ell$.
This will force $\lin_{km}[\theta\cup \mu]=1$, for all $m\in Ones_\ell$.
Let $\mu$ set all other variables in the transversals $(i_0,j_0)$ and
$(i_1,k_1)$ as well as all variables of $\Psi[\theta]$, except for those
in $P_{\ell,i,j}$, to 0.   
In particular, the alternation between how the 1's are forced in the definition
of $\mu$ ensures that for $b\in \set{0,1}$, if the $(i_b,j_b)$ prime
implicant of $\lin_{km}$ is set to 1 by $\mu$, then its neighboring prime
implicants are forced to 0 by $\mu$.
In fact,
$\lin_{km}[\theta\cup \mu]=0$ for all $m\notin  Ones_\ell\cup \set{\ell}$ since
each neighboring prime implicant to $P_{\ell,i,j}[\theta]$ will have one
variable set as in $\theta$ and the other set to 0 and all other prime
implicants are either set to 0 by $\theta$ or by $\mu$.
Finally, the same property is true of every prime implicant of $\lin_{k\ell}$
except for $P_{\ell,i,j}[\theta]$.
It remains to observe that $\Psi[\theta\cup\mu]=f(\Hk[\theta\cup\mu])=f[\mu_\ell](P_{\ell,i,j}[\theta])=f_\ell(P_{\ell,i,j}[\theta])$ and $\mu$ only depended
on $\theta$ through the value of $\Psi[\theta]$, as required.
\end{proof}

Notice that the lemma fails if $\theta$ has only 1 transversal:
\begin{example}
For a counterexample, consider $f(X_0, X_1, X_2) = X_0X_2 \vee X_1$.
Suppose that $\theta$ sets all variables in $\mathbf{Z}$ to 0, except for
the variables with indices $(i,j)=(3,7)$, which remain unset:
$$R(3), S_1(3,7), S_2(3,7), T(7).$$
Thus, $\theta$, has the transversal $(3,7)$.  However,
$\Psi[\theta]$ is:
\begin{eqnarray*}
  &  & f(\lin_{30}[\theta], \lin_{31}[\theta], \lin_{32}[\theta],  \lin_{33}[\theta]) \\
  & = &  f(R(3)S_1(3,7), S_1(3,7)S_2(3,7), S_2(3,7)T(7)) \\
  & = & R(3)S_1(3,7)S_2(3,7)T(7) \vee S_1(3,7)S_2(3,7) \\
  & = & S_1(3,7)S_2(3,7)
\end{eqnarray*}
hence it does not depend on $R(3)$ or $T(7)$. 
\end{example}

We immediately obtain the following two corollaries:

\begin{corollary} \label{corollary:depend} If $\theta$ has at
  least 3 distinct transversals 
  then all variables in its transversals are in
  $Var(\Psi[\theta])$.
\end{corollary}

\begin{proof}
This follows immediately since if $(i, j)$ is a transversal, all $P_{\ell, i, j}[\theta]$ are $2$-prime implicants in their respective $\lin_{k\ell} [\theta]$.
\end{proof}

\begin{corollary} \label{cor:same:transversals}
  If $\theta$ has at least 3 independent transversals
  then, every partial assignment $\theta'$ such that
  $\Psi[\theta]=\Psi[\theta']$ has the same set of transversals as
  $\theta$.
\end{corollary}

\begin{proof}
  We prove that every transversal of $\theta$ is a transversal of
  $\theta'$: this implies that $\theta'$ has at least 3 independent
  transversals, and therefore the converse holds too (every transversal
  of $\theta'$ is a transversal of $\theta$).  
  Let $\Phi=\Psi[\theta]=\Psi[\theta']$.
  Let $(i,j)$ be a transversal for $\theta$.  Since $\theta$ has at least 3
  transversals, by \autoref{corollary:depend}, $\Phi$ depends on all variables
  of the transversal $(i,j)$.  It follows that
  $\theta'$ cannot set any of these variables.  For $\ell\in [k-1]$, the
  2-prime implicants within each $\lin_{k\ell}$ are disjoint from each other
  and hence if the variables are unset then each such 2-prime implicant remains.
  Thus, for each $\ell\in [k-1]$ the Boolean function $\lin_{k\ell}[\theta']$
  contains the 2-prime implicant $S_{\ell}(i,j)S_{\ell+1}(i,j)$.

  It remains to prove that $\lin_{k0}[\theta']$ and $\lin_{kk}[\theta']$
  each contain the 2-prime implicants on $(i,j)$,   
  $R(i)S_1(i,j)$ or $S_k(i,j)T(j)$, which are unset by $\theta'$.
  To show this we must rule out $R(i)$ or $T(j)$ absorbing them. 
  We do this for $R(i)$; the case for $T(j)$ is analogous.
  Suppose to the contrary that $\lin_{k0}[\theta'\cup\set{R(i){=}1}]=1$. 
  If this is the case, then $\Psi[\theta'\cup\set{R(i){=}1}]=\Phi[R(i){=}1]$
  does not depend on any of the $R$ variables.
  However, since $\theta$ has $(i,j)$ as a transversal as well as, in
  particular, another (independent) transversal $(i',j')$ with $i'\ne i$,
  $\lin_{k0}[\theta]$ contains $R(i)S_1(i,j)$ and $R(i')S_1(i',j')$ as 
  2-prime implicants.  It follows that $\lin_{k0}[\theta\cup\set{R(i){=}1}]$
  depends on $R(i')$ and hence $\Psi[\theta\cup \set{R(i){=}1}]=\Phi[R(i){=}1]$
  depends on $R(i')$, contradicting our earlier derivation that it did not
  depend on any $R$ variables.
\end{proof}

Thus, for $k \geq 3$, the property of having $k$ independent transversals, 
is a property of the subformula and not of an assignment, and for the 
rest of the section we will say that a restriction $\Phi$ of $\Psi$ has
$k$ independent transversals if there exists an assignment $\theta$ so that
$\Phi = \Psi [\theta]$ and $\theta$ has $k$ independent transversals; by the
above, this
is equivalent to saying that all $\theta$ with $\Phi = \Psi [\theta]$
have $k$ independent transversals.

\subsection*{Proof of \autoref{lemma:transparent}}

We use the above to prove our lemma that are formulas are transparent if they
are unit-free and have sufficiently-many independent transversals.

\begin{proof}[of \autoref{lemma:transparent}]
  Suppose the contrary: then there exist two partial assignments
  $\theta, \theta'$ such that $\Phi = \Psi[\theta] =
  \Psi[\theta']$ and $\Phi$ has at least 3 independent traversals, but for
  some $\ell$, $\lin_{k\ell}[\theta] \neq
  \lin_{k\ell}[\theta']$.  
  Observe that if any $\lin_{k\ell}[\theta]$ or $\lin_{k\ell}[\theta']$
  were the constant 1, then $\Phi$ would be transversal-free 
  contradicting the fact that it has 3 independent traversals.
  Also observe that if any $\lin_{k\ell}[\theta]$ or $\lin_{k\ell}[\theta']$
  contained a 1-prime-implicant (unit) $Z$ then setting $Z{=}1$ would
  set the corresponding $\lin_{k\ell}$ to 1 which would eliminate
  all of its transversals, contradicting the assumption that $\Phi$ is
  $\Hk$-unit-free.  Therefore all prime implicants of $\lin_{k\ell}[\theta]$
  and $\lin_{k\ell}[\theta']$ are 2-prime implicants.
  Since all prime implicants in $\lin_{k\ell}[\theta]$ and
  $\lin_{k\ell}[\theta']$ are 2-prime implicants,
  \autoref{prop:odd:even} implies that all variables of all prime
  implicants are in $Var(\Phi)$.  

  Assume w.l.o.g.\ that $P_{\ell,i,j}$ is a 2-prime implicant of
  $\lin_{k\ell}[\theta]$ that is not a prime implicant of
  $\lin_{k\ell}[\theta']$; i.e., $P_{\ell,i,j}[\theta]=P_{\ell,i,j}$ but
  $P_{\ell,i,j}[\theta']=0$.
  Let $(i_0, j_0)$ and $(i_1,j_1)$ be two independent transversals for
  $\theta$ (which are also transversals for $\theta'$ by
  \autoref{cor:same:transversals}) that are also independent of $(i,j)$. 
  Since in both $\theta$ and $\theta'$, the neighboring implicants of
  $P_{\ell,i,j}$ either remain as 2-prime implicants or are
  set to 0 in $\theta$ and $\theta'$ (though not necessarily the same in both),
  we can apply \autoref{prop:odd:even} to both $\theta$ and
  $\theta'$ to obtain $\mu$ and $\mu'$.
  By the conclusion of \autoref{prop:odd:even},
  $\Phi[\mu]=\Psi[\theta\cup \mu]=f_\ell(P_{\ell,i,j})$ which is either
  $P_{\ell,i,j}$ or $\neg P_{\ell,i,j}$ but
  $\Phi[\mu']=\Psi[\theta'\cup \mu']=f_\ell(0)$ which is neither of the two.
  However, since the assignments $\mu$ and $\mu'$ depended only on the indices
  involved and $\Phi$, we conclude that $\mu = \mu'$ which is a contradiction.
\end{proof}

The requirement that $\Phi$ be unit-free is necessary for
\autoref{lemma:transparent} to hold: a simple example is given by the formula
$\Phi'=S_1(3,7)$ in \autoref{ex:transparent}, which is not transparent. 
The formula remains non-transparent even if we expand it with three
independent transversals,
e.g. $S_1(3,7) \vee [R(4)S_1(4,4) \vee \ldots \vee S_3(4,4)T(4)]
\vee [R(5)S_1(5,5) \vee \ldots] \vee [R(6)S_1(6,6) \vee\ldots]$, for
the same reasons given in the example.

\subsection*{Proof of \autoref{lemma:unitwrthk}}
Finally, we use the above properties to prove our characterization of
$\Hk$-units in case there are sufficiently many independent transversals.

\begin{proof}[of \autoref{lemma:unitwrthk}]
  If $Z\in \units(\lin_{k\ell}[\theta])$ for some $\ell$. then
  by definition, $\lin_{k\ell}[\theta \cup \set{Z{=}1}] = 1$, and therefore
  $\Psi[\theta \cup \set{Z{=}1}]=\Phi(\set{Z{=}1})$ is transversal-free; it
  follows that $Z\in \hkunits(\Phi)$.  \\
  Conversely, suppose that
  $Z \notin \bigcup_{\ell\in \{0,\ldots,k\}}\units(\lin_{k\ell}[\theta])$.
  In particular, none of the (monotone) formulas
  $\lin_{k\ell}[\theta\cup \set{Z{=}1}]$ is the constant 1.
  The assignment $Z{=}1$ can eliminate at most one of 
  the $\ge 4$ independent transversals in $\theta$, so
  $\theta\cup \set{Z{=1}}$ has at least 3 (independent) transversals.
%  WE CAN INCLUDE THE FOLLOWING EXAMPLE BUT IT ONLY SHOWS WHY independence
%  AMONG THE 4 TRANSVERSALS IS NEEDED.
%  (e.g, setting any $S_1(i,j){=}1$ will make $R(i)$ a unit
%  in $\lin_{k0}[\theta \cup \set{S_1(i,j){=}1}]$, wiping out all transversals
%  of the form $(i,j')$, but all transversals independent of $(i,j)$ remain in
%  $\theta \cup \set{S_1(i,j){=}1}$). 
  Therefore, by \autoref{cor:same:transversals},
  \emph{every} partial assignment $\theta'$ such that
  $\Psi[\theta']=\Phi[Z{=}1]=\Psi[\theta\cup \set{Z{=}1}]$ has at least
  3 independent transversals.   This implies that $\Phi[Z{=}1]$ is not
  transversal-free and hence $Z\notin \hkunits(\Phi)$.
\end{proof}

\section{A Dichotomy Theorem for Efficient Propositional Model Counting}
\label{sec:mini:dichotomy}

%\sudeepa{before Grounded vs Lifted?}
In this section we present a characterization for a restricted class of queries
%for the existence of (current) model counting algorithms on the propositional formulas. 
for the existence of efficient (current) model counting algorithms on the propositional formulas. 
%As a 
%We show that %it here for a very restricted class of queries.  
%for a restricted class of queries, either %there exists a \decDNNFn of 
For this class, either all \decDNNFn s require exponential size
(therefore all modern model counting algorithms take exponential time), or
we can construct a poly-size FBDD in polynomial-time (data complexity), leading to a polynomial-time
model counting algorithm.
%\footnote{In \cite{DBLP:journals/mst/JhaS13} a sufficient condition
%was given under which a UCQ is guaranteed to have a polynomial size
%\FBDD.}. 

\par
To define the class of queries, we 
consider the $k+1$ queries in $\hk$ defined in \autoref{sec:main-results}, 
%set of queries $\HH_k$ defined above, 
and the following additional $k+2$ queries $\gk$:
$\newquery_0 = \exists u_0 R(u_0)$, 
	$\newquery_{\ell} =    \exists u_{\ell} \exists v_{\ell} S_{\ell}(u_{\ell},v_{\ell})$, $\forall \ell \in [k]$, 
	$\newquery_{k+1} = \exists v_{k+1} T(v_{k+1})$.
These queries have the following lineages 
on the same domain of size $n$:
		$\newlin_{0} = \bigvee_{i\in [n]} R(i)$,  
			$\newlin_{\ell} = \bigvee_{i,j \in [n]} S_{\ell}(i,j)$, $\forall \ell \in [k]$,
			and $\newlin_{k+1} = \bigvee_{j \in [n]} T(j)$ respectively. 
Note that we consider data complexity, so we assume that the query is fixed (\ie, $k$ a is constant).

		%\begin{align*}
			%&\newquery_0 = \exists u_0 R(u_0) & \newquery_{k+1} = \exists v_{k+1} T(v_{k+1})\\
			%&\newquery_{\ell} =    \exists u_{\ell} \exists v_{\ell} S_{\ell}(u_{\ell},v_{\ell}) \quad \forall \ell \in [k]  
		%\end{align*}

		%having lineages on the same domain of size $n$:
		%\begin{align*}
			%\newlin_{0} = & \bigvee_{i\in [n]} R(i), \qquad & \newlin_{k+1} = \bigvee_{j \in [n]} T(j)\\
			%\newlin_{\ell} = & \bigvee_{i,j \in [n]} S_{\ell}(i,j)~~~~ \forall \ell \in [k] &
			%\end{align*}

Let $g(X_0, \cdots, X_k, Y_0, \cdots, Y_{k+1})$ be a Boolean function
on $2k+3$ variables. Consider the query $Q = g(\query_{k0}, \cdots,
\query_{kk}, \newquery_0, \cdots, \newquery_{k+1}) = g(\hk, \gk)$, and its lineage 
$g(\lin_k, \newlin_k)$. Let $g(\vecX, \mathbf{1}) = g(X_0, \cdots, X_k, 1, \cdots,  1)$. Then the following dichotomy holds
where $n$ is the size of the domain:
 \begin{theorem} \label{TH:4} 
\begin{enumerate}[(a)]
\item \label{item:mini-a} If %$g(X_0, \cdots, X_k, 1, \cdots,  1)$ 
$g(\vecX, \mathbf{1})$ depends on all $k+1$ variables $X_0, \cdots, X_k$, then any \decDNNFn\
  for the lineage of $Q$ has size $2^{\Omega(n)}$.  
	\item \label{item:mini-b} Otherwise, there exists an
  \FBDD\ for the lineage of $Q$ of size $n^{O(1)}$, and the \FBDD\ can be constructed in $n^{O(1)}$ time.
	\end{enumerate}
\end{theorem}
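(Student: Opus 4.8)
The plan is to treat the two directions separately: reduce part~(\ref{item:mini-a}) to the machinery already in place, and build an explicit ordered diagram for part~(\ref{item:mini-b}).

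For part~(\ref{item:mini-a}) I would argue by restriction, reducing to \autoref{thm:main}. Set $f \myeq g(\vecX,\mathbf 1)$, which by hypothesis depends on all of $X_0,\dots,X_k$. Work over a domain of size $n+O(k)$, reserving the core block $[n]\times[n]$ for the usual $\Hk$ structure and $O(k)$ fresh ``activation'' indices, and choose a partial assignment $\rho$ that forces every $\newlin_\ell$ to $1$ while leaving the core copy of $\Hk$ untouched and creating no spurious prime implicants: fire $\newlin_0$ by setting $R(i_0){=}1$ on a fresh row and zeroing $S_1(i_0,\cdot)$; fire $\newlin_\ell$ ($\ell\in[k]$) by setting a single $S_\ell(i_\ell,j_\ell){=}1$ on fresh indices and zeroing the neighbors $S_{\ell-1}(i_\ell,j_\ell),S_{\ell+1}(i_\ell,j_\ell)$ (and the incident $R$/$T$ variable at the boundary) so the $\lin_{k(\ell\pm1)}$ terms on that index vanish; fire $\newlin_{k+1}$ symmetrically through $T$; and zero all remaining off-core variables. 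Then $g(\lin_k,\newlin_k)[\rho]=g(\Hk^{\mathrm{core}},\mathbf 1)=f(\Hk^{\mathrm{core}})$. Since restricting a \decDNNFn\ (resp.\ \FBDD) by $\rho$ yields a \decDNNFn\ (resp.\ \FBDD) for this function of no larger size, any \decDNNFn\ for $g(\lin_k,\newlin_k)$ inherits the exponential lower bound of \autoref{thm:main} (and, through \autoref{TH:3} and \autoref{TH:1}, the $2^{\Omega(n)}$ \FBDD\ bound); as $k$ is constant this is $2^{\Omega(n)}$ in the domain size.

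For part~(\ref{item:mini-b}), let $\ell^*$ be an index on which $g(\vecX,\mathbf 1)$ does not depend. I aim for a polynomial-size \FBDD\ built from a mixed variable order, using three facts. First, each $\newlin_\ell$ is just the disjunction of all variables of one relation, hence is computed in width $2$ under \emph{any} order, so the $\newquery_\ell$ impose no ordering constraint. Second, reading $R,S_1,\dots,S_{\ell^*}$ row-major and $S_{\ell^*+1},\dots,S_k,T$ column-major computes every $\lin_{k\ell}$ with $\ell\neq\ell^*$ in width $2^{O(k)}$, exactly as in the \OBDD\ construction sketched after \autoref{TH:3}; the sole uncomputed ``seam'' term is $\lin_{k\ell^*}$. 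Third, the hypothesis says that when all relations are nonempty (all $\newlin_\ell=1$) the value of $g$ is independent of $\lin_{k\ell^*}$, so the seam term is simply not needed there. Thus in the all-nonempty regime the split order already suffices.

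The main obstacle is the interaction of the $\newquery_\ell$ with the seam, and this is where the construction must become adaptive. If some relation is empty then its $\newlin$ is $0$ and, since every prime implicant of the two incident $\lin_{k\cdot}$ terms uses a variable of that relation, those two terms collapse to $0$: an empty relation \emph{breaks the chain}, so an \FBDD\ reading row-major until it meets the first empty relation and then switching to column-major computes every surviving $\lin_{k\cdot}$, now placing the seam at the empty relation rather than at $\ell^*$. The difficulty is read-once-ness: the seam relations $S_{\ell^*},S_{\ell^*+1}$ must be read in a \emph{common} order to evaluate $\lin_{k\ell^*}$ in the regimes where it is needed, yet which order is correct depends on whether the breaking empty relation lies left or right of the seam, and that information is only exposed by reading relations that are themselves coupled to the seam through intermediate $\lin_{k\cdot}$ terms. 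I would resolve this by branching the \FBDD\ on the emptiness of those relations whose content feeds $g$ \emph{only} through some $\newlin_\ell$ (exploiting that an empty relation needs no stored content), and by attaching the small multi-output \FBDD s of \autoref{prop:transversals} on the branches where the chain is already broken; arranging this case analysis so that every branch reads each variable once with only a $\mathrm{poly}(n)\cdot 2^{O(k)}$ blow-up is the crux of the argument.
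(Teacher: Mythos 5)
Your part~(a) is correct and is essentially the paper's own argument: the paper likewise sets $f=g(\vecX,\mathbf 1)$, enlarges the domain, and hardwires a restriction that forces every $\newquery_\ell$ to true while leaving an untouched copy of $\Hk$ on the $[n]\times[n]$ core, then invokes \autoref{TH:3} and \autoref{TH:1}. The only difference is the gadget: the paper uses just two extra elements $n+1,n+2$ with a parity pattern on the $S_\ell$'s ($S_\ell(n+2,n+2)=1$ for odd $\ell$, $S_\ell(n+1,n+1)=1$ for even $\ell$, etc.), whereas you use $O(k)$ fresh indices with explicit neighbor-zeroing; both avoid spurious implicants, and since $k$ is constant the bounds are the same. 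Your quantitative reading --- $2^{\Omega(n)}$ at the \FBDD\ level, hence $2^{\Omega(\sqrt n)}$ for \decDNNFn s after the quasipolynomial conversion of \autoref{TH:2} --- is exactly what the argument delivers (the paper's proof of (a) also only establishes the bound at the \FBDD\ level).

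For part~(b), however, you stop precisely where the proof begins: you assemble the correct ingredients (an empty relation breaks the chain and relocates the seam; the all-nonempty regime never needs $\lin_{k\ell^*}$; the split-order \OBDD s of \autoref{prop:transversals}), but you explicitly defer ``arranging this case analysis so that every branch reads each variable once,'' and that arrangement is the actual content of the proof. The paper closes it with a simple two-phase, layered construction. Phase 1 (layers $0$ through $k+1$) tests the relations \emph{one relation at a time} in a fixed per-relation order: within a layer the 0-edge advances to the next variable, and every 1-edge (plus the final 0-edge) jumps to a \emph{fresh} subtree for the next layer. Thus each branch at the end of phase 1 records the emptiness status of all $k+2$ relations together with exactly which prefix of each relation was read and with what values, and there are only $O(n)\cdot(n^2+1)^k\cdot O(n)$ branches --- polynomial for fixed $k$. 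Phase 2 attaches, per branch, the shared split-order \OBDD\ with seam at the least $\ell$ with $\newquery_\ell=0$ (or at $\ell^*$ if no relation is empty), \emph{with every variable already read in phase 1 replaced by a no-op node}. This substitution step, which your sketch never states, is what dissolves the circularity you worry about: the seam order constrains only the variables not yet read, while the prefixes consumed by the emptiness tests enter the attachment as constants, so read-once-ness and the $2^{O(k)}$ width are automatic --- no adaptive interleaving of discovery and evaluation is needed (indeed, reading relations relation-by-relation cannot evaluate the $\lin_{k\ell}$'s on the fly without exponential width, so the two phases must be separated exactly as above). Finally, your proposal to branch only on ``relations whose content feeds $g$ only through some $\newlin_\ell$'' is off as stated --- every relation also feeds the two incident $\lin_{k\cdot}$'s --- but it is harmless: one simply tests all $k+2$ relations, which the polynomial budget easily affords.
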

\noindent
%The proof is in \autoref{sec:mini:dichotomy}.
The first part of Theorem~\ref{TH:4} extends \autoref{TH:3}, where $f(\vecX) = g(\vecX, \mathbf{1})$.
%, where we had a Boolean function
%$F$ that does not use the variables $Y_0, Y_1, \cdots, Y_{k+1}$:
%therefore $F(X_0, \cdots, X_k, 1, \cdots, 1)$ is the same as $F$.
\begin{example}
We illustrate with three examples
%; we drop all existential quantifiers
%to avoid clutter.
\begin{itemize}
\item $g = (X_0 \vee B_{2}) \wedge (B_0 \vee X_1)$ where $k = 1$.
Then $g(\vecX, \mathbf{1}) = 1$: it does not depend on $X_0, X_1$, and therefore,
the lineage has a poly-size \FBDD. %an FBDD of polynomial size.
\item $g = X_0 \wedge (X_1 \vee B_3)
  \wedge (X_1 \vee B_5)
  \wedge (X_2 \vee X_3\vee X_4 \vee X_5)$ where $k = 5$.
Then $g(\vecX, \mathbf{1}) =  X_0 \wedge (X_2 \vee X_3\vee X_4 \vee X_5)$: it does not depend on $X_1$, and 
the lineage has a poly-size \FBDD\footnote{This is $Q_V$ in \cite{DBLP:journals/mst/JhaS13} for which a poly-size FBDD was shown.}. %n FBDD of polynomial size.
\item $g = (X_0 \vee X_1) \wedge (X_1 \vee B_3) \wedge (X_2 \vee X_3)$ where $k = 3$.
Then $g(\vecX, \mathbf{1}) =  (X_0 \vee X_1) \wedge (X_2 \vee X_3)$: it depends on all of $X_0, \cdots, X_3$, and 
therefore every \decDNNFn\ for the lineage has exponential size.  
\end{itemize}
\end{example}

\cut{	%%%%%%!!!!!!!!!!!!!TO CROSS CHECK !!!!!!!!!!!!!!!!!
				%\begin{align*}
					%Q_V = & [R(x_0),S(x_0,y_0) \vee T(v_2)] \wedge [R(u_0) \vee  S(x_1,y_1),T(y_1)] \\
					%Q_1 = & [R(x_0),S_1(x_0,y_0)] \\
					%\wedge & [S_1(x_1,y_1),S_2(x_1,y_1) \vee S_3(u_3,v_3)] \\
					%\wedge & [S_1(x_1,y_1),S_2(x_1,y_1) \vee S_5(u_5,v_5)] \\
					%\wedge & [S_2(x_2,y_2),S_3(x_2,y_2) \vee S_3(x_3,y_3),S_4(x_3,y_3)  \\
							 %& \ \ \ \vee S_4(x_4,y_4),S_5(x_4,y_4) \vee S_5(x_5,y_5),T(y_5)]\\
				%Q_2 = & [R(x_0),S_1(x_0,y_0) \vee S_1(x_1,y_1),S_2(x_1,y_1)] \\
					%\wedge & [S_1(x_1,y_1),S_2(x_1,y_1) \vee S_3(u_3,v_3)] \\
					%\wedge & [S_2(x_2,y_2),S_3(x_2,y_2) \vee S_3(x_3,y_3),T(y_3)]
				%\end{align*}
								%After substituting the terms $R(u_0)$ and $T(v_2)$ with 1, the query
				%$Q_V$ becomes $\equiv 1$, and does not depend on $\query_{10}$ and
				%$\query_{11}$: by the theorem, it has an FBDD of polynomial size.
				%Similarly, substituting both $S_3(u_3,v_3)$ and $S_5(u_5,v_5)$ with 1,
				%the query $Q_1$ becomes:
				%\begin{align*}
					%& [R(x_0),S_1(x_0,y_0)] \\
					%\wedge & [S_2(x_2,y_2),S_3(x_2,y_2) \vee S_3(x_3,y_3),S_4(x_3,y_3)  \\
							 %& \ \ \ \vee S_4(x_4,y_4),S_5(x_4,y_4) \vee S_5(x_5,y_5),T(y_5)]
				%\end{align*}
				%Since the term $S_1(x_1,y_1),S_2(x_1,y_1)$ is missing, the theorem
				%says that this query, too, has a polynomial size FBDD.  On the other
				%hand, after substituting $S_3(u_3,v_3)$ with 1 in $Q_2$, the resulting
				%query contains all four expressions $\query_{30}, \cdots, \query_{33}$, and
				%therefore every \decDNNFn\ has exponential size.  
}
In prior work~\cite{DBLP:journals/mst/JhaS13} a sufficient condition
was given under which a UCQ is guaranteed to have a polynomial size
\FBDD. 
	%; in particular it was shown that the query $Q_V$ constructed by the first example above 
	%has a polynomial size \FBDD.  
	Our result here is novel in that it represents a necessary and
sufficient condition, albeit for a very restricted fragment of UCQ.

%\dan{The proof is very simple, but the write-up can be improved.  I'm
  %open to alternative suggestions about this result.  We could: (1)
  %remove it complete from the paper, or (2) move the proof (i.e. this
  %section) to the appendix, or (3) leave it as is.}
%For one direction, 

Next we prove \autoref{TH:4}. 

\begin{proof}

\subsection*{Proof of (\ref{item:mini-a})}
Suppose $f(X_0, \ldots, X_k)$ $= g(\vecX, \mathbf{1})$ $= g(X_0,
\ldots, X_k, 1, \ldots, 1)$ depends on all variables $X_0, \ldots,
X_k$.  Let $\F$ be an \FBDD\ for the query $Q = g(\query_{k0},\ldots,
\query_{kk}, \newquery_0, \ldots, \newquery_{k+1})$, over the domain of size $n' = n+2$.  
We will convert $\F$ to a an \FBDD\ $\F'$ for query $Q' = f(\query_{k0},\ldots,
\query_{kk})$ over the domain of size $n$; further $\F$ and $\F'$ will have the same size.
By \autoref{TH:3}, the size of $\F'$ is $2^{\Omega(n)}$; therefore the size of $\F'$
is $2^{\Omega(n)}$ = $2^{\Omega(n')}$.
\par
To convert $\F$ to $\F'$, modify
$\F$ by setting the following values\footnote{This means: replace a
  node testing one of the variables $Z$ mentioned above by a no-op
  node, whose unique child is either the 0-child, or the 1-child of
  $Z$, according to the assignment.}, where $n_1 = n+1$ and $n_2 =
n+2$:
\begin{eqnarray*}
R(n_1) = 1, R(n_2) = 0 & \\
S_\ell(n_2,n_2) = 1 & \qquad \mbox{if $\ell$ is odd}\\
S_\ell(n_1,n_1) = 1 & \qquad  \mbox{if $\ell$ is even}\\
S_\ell(i,j) = 0 & \qquad \forall \ell \in [k]\\
\qquad\qquad & \mbox{$\forall$ other $(i, j) \in \set{n_1,n_2}\times \set{n_1,n_2}$}\\
T(n_1) = 1, T(n_2) = 0  & \qquad  \mbox{if $k$ is odd}\\
T(n_2) = 1, T(n_1) = 0  & \qquad \mbox{if $k$ is even}
%R(n_1) = S_1(n_2,n_2) = S_2(n_1,n_1) = S_3(n_2,n_2) = \ldots = 1 \\
%%T(n_2) = S_{k}(n_1,n_1) = S_2(n_2,n_2) = S_3(n_1,n_1) = \ldots = 1 \\
%S_\ell(i,j) = 0 \ \ \ \mbox{for all other $i \in \set{n_1,n_2}\vee j \in \set{n_1,n_2}$}
\end{eqnarray*}
%$T(n_1) = 1$ (resp. 0) and if $k$ is odd (resp. even).
% Further set $S_\ell(i,j)=0$ for all $i,j$ where $i\in\set{n_1,n_2}$
% or $j \in \set{n_1,n_2}$ except for the values above.
Note that he modified \FBDD\ $\F'$ computes the query $Q' = f(\query_{k0},
\ldots, \query_{kk})$ over a domain of size $n$:  all queries $\newquery_0,
\ldots, \newquery_{k+1}$ become \texttt{true} under the partial assignment
above, while the lineages for $\query_{k0}, \ldots, \query_{kk}$ over domain $n' = n+2$
becomes their lineage over the domain of size $n$.  
%It follows that
%the size of $Q$ is bounded from below by \autoref{TH:3}.

\subsection*{Proof of (\ref{item:mini-b})}
For the converse, assume that $f(X_0, \ldots, X_k) = g(X_0, \ldots, X_k, 1, \ldots, 1)$ does
not depend on $X_s$.  Denote $Q' = f(\query_{k0}, \ldots, \query_{kk})$: its lineage is \emph{transversal-free}
(\autoref{def:transversal}) and therefore it has a shared \OBDD\ $\F_s$ of
size $O(n)$ for formulas $\query_{k0}, \ldots, \query_{kk}$
(see  the proof of \autoref{prop:transversals} which constructs the \FBDD\ in poly-time for a fixed $k$).  
\par
Further, if for any $\ell$, $\newquery_\ell=0$,
then both $\query_{\ell-1} = \query_\ell = 0$, hence the lineage of the residual
formula $f(X_0, \ldots, X_k)$ is transversal free.  Therefore, the residual formula has a
shared \OBDD $\F_{0, \ell}$ of size $O(n)$ for $\query_{k0}, \ldots, \query_{kk}$ obtained by traversing the variables
$R(i),S_1(i,j), \ldots, S_{\ell-1}(i,j)$ in row-major order, and
traversing the variables $S_{\ell+1}(i,j), \ldots, T(j)$ in
column-major order. \\

We now describe the \FBDD\ $\F$
for $Q = g(\query_{k0}, \ldots, \query_{kk}, \newquery_0, \ldots, \newquery_{k+1})$
which will have $k+3$ layers: 0 to $k+2$.
(1) Layers 0 to $k+1$ of $\F$ will have a tree structure; 
%starts by first testing 
the $k+2$ queries $\newquery_0, \ldots, \newquery_{k+1}$ are tested in these layers one after one.  (As
an optimization, the \FBDD\ only needs to test those queries on which
the function $F$ depends.)  
%This part of the \FBDD\ is a tree,
%organized into $k+2$ levels for $\newquery_0, \ldots, \newquery_{k+1}$:
(2) In the $k+2$-th layer, there will be copies of \FBDD s $\F_s$ or $\F_{0, \ell}$, $\ell \in [k]$
described above.

The layers of $\F$ are described below:
\begin{description}
\item[Layer 0: Test $\newquery_0$] Test the variables $R(1), R(2), \ldots, R(n)$ in an arbitrary order: for
  each node $R(i)$, its 0-child is $R(i+1)$ and its 1-child is a root
  of a unique subtree at the next layer.  The 0-child of the last node $R(n)$
  is also leads to a unique subtree at the next layer.  The total number of
  edges to the next layer is $n+1$.
\item[Layer $\ell$, $1 \leq \ell \leq k$: Test for $\newquery_{\ell}$] 
Each sub-tree in the layer $\ell$ tests the variables $S_\ell(1,1), \ldots,
  S_\ell(n,n)$ in an arbitrary order (\eg, row-major): for each node
  $S_\ell(i,j)$ its 0-child is the next variable in this order, and
  each 1-child is a root of a unique subtree at the next layer.  The 0-child
  of the last node in this order is also a unique subtree at the next layer.
  The total number of edges from each of these subtrees to the next layer is $n^2+1$.
\item[Layer $k+1$: Test for $\newquery_{k+1}$] Test the variables $T(1), \ldots, T(n)$  in an arbitrary order: the
  0-child of $T(i)$ is $T(i+1)$.  All 1-children plus the 0-child of
  the last node %form the $n+1$ outputs.
	points to a unique \FBDD\ ($\F_s$ or $\F_{0, \ell}$ for some $\ell in [k]$) in the last layer.
\end{description}

Before we describe the last $k+2$-th layer,  we note that each of the outputs from the $k+1$-th layer encode two pieces of
information: (i) the values of all queries $\newquery_0, \ldots, \newquery_k$,
i.e. for each of them we know if it is 0 or 1, and (ii) we know which variables %$S_\ell(i,j)$ 
have been tested. In the \FBDD s $\F_s$ or $\F_{0, \ell}$s in the last layer, we set the values of these variables 
according to the test earlier (replace the variable by a no-op node having a unique child based on its 0- or 1-value)
to ensure that every variable is tested at most once in $\F$ (as is the case with the first $k+2$ layers). 

\begin{description}
\item[Layer $k+2$: Shared \FBDD s for $\query_{k0}, \ldots, \query_{kk}$] 
Consider any edge to layer $k+2$ from layer $k+1$.
(1) If for any $\ell$, $\newquery_\ell=0$, then both $query_{\ell-1} = \query_\ell=0$.
Consider the least $\ell$ such that $\query_{\ell} = 0$ and connect this edge to the shared \FBDD\ $\F_{0, \ell}$
substituting the values of the variables that have already been tested.
\par
(2) If for all $\ell$, $\newquery_\ell=1$, then connect the edge 
to the shared \FBDD\ $\F_{s}$ substituting the values of the variables that have already been tested.
\end{description}

\textbf{Computing the function $g(\query_0, \cdots, \query_k, \newquery_0, \cdots, \newquery_{k+1})$.~~} 
Now at the sinks of the \FBDD\ $\F$ (sinks of the \FBDD s in $k+2$-th layer), we know the values of the lineages for all queries
$\newquery_0, \cdots, \newquery_{k+1}$ (from 0 to $k+1$-th layers)  as well as for the queries $\query_0, \cdots, \query_k$ (from the $k+2$-th layer).
Therefore, the value of the lineage of the query $Q = g(\query_0, \cdots, \query_k, \newquery_0, \cdots, \newquery_{k+1}$ can be easily computed.\\

The total number of nodes in the %first $k+2$ layers is 
\FBDD\ $\F$ is
%$(n+1)^2(n^2+1)^{k-1} = n^{O(1)}$.
$O(n) \times [O(n^2)]^k \times O(n) \times O(n)$ = $n^{O(1)}$.
The completes the proof of part (\ref{item:mini-b}) of the theorem.

%Before we describe the last $k+2$-th layer,  we note that each of the outputs from the $k+1$-th layer encode three pieces of
%information: (a) the values of all queries $\newquery_0, \ldots, \newquery_k$,
%i.e. for each of them we know if it is 0 or 1, and (b) if $\newquery_\ell=0$,
%then we know that $S_\ell(1,1) = \ldots = S_\ell(n,n)=0$. (c) if
%$\newquery_\ell=1$ then we know which variables $S_\ell(i,j)$ have been tested
%(namely a certain prefix of the order in which we tested them): all
%these variables are 0, except for exactly one which is
%$S_\ell(i,j)=1$; all other variables are untested.
%
%Consider now one of these outputs.  If for any $\ell$, $\newquery_\ell=0$,
%then both $query_{\ell-1} = \query_\ell=0$, hence the lineage of the residual
%formula is transversal free.  Therefore, the residual formula has an
%\OBDD\ of size $O(n)$ obtained by traversing the variables
%$R(i),S_1(i,j), \ldots, S_{\ell-1}(i,j)$ in row-major order, and
%traversing the variables $S_{\ell+1}(i,j), \ldots, T(j)$ in
%column-major order.  If for all $\ell$, $\newquery_\ell=1$, then the residual
%formula is also transversal-free, because of our assumption that
%$F(X_0, \ldots, X_k, 1, \ldots, 1)$ does not depend on some $X_\ell$.\\
\end{proof}
%\section{Proof of Theorem~\ref{TH:2}}\label{app:proof-thm-decdnnfn}
\section{A Quasi-polynomial Conversion from DLDDs to FBDDs}\label{sec:proof-thm-decdnnfn}
Here we prove Theorem~\ref{TH:2}, that any \decDNNFn\ $\D$ with $N$ nodes
  can be converted into an equivalent \FBDD\ $\F$ computing the same formula as
  $\D$, where $\F$ has at most $N 2^{\log^2 N}$ nodes.
A \decDNNFn\ can have these types of nodes: (i) decision-nodes and (ii) 0- and 1-sinks as in \FBDD s;
(iii) NOT-nodes having a single child;  decomposable (iv) AND-, (v) OR-, (vi) \XOR-, and (vii) \EQUIV-nodes 
$u$ with two children $u_1, u_2$ such that the sub-DAGS $\D_{u_1}$ and $\D_{u_2}$ do not
share any common Boolean variables. Recall that
$\mathbf{\mathbf{\Phi_u}}$ denotes the (vector) subformula of the sub-DAG of $\D$ at a node $u \in \D$; $\D$ finally computes 
the formula $\mathbf{\Phi} = \mathbf{\Phi_r}$ where $r$ is the root of $\D$.
%The same construction holds for all $\Phi \in \mathbf{\Phi}$; below we will discuss the correctness for
 %any arbitrary $\Phi \in \mathbf{\Phi}$; combining all such $\mathbf{\Phi}$s we get the correctness for the ve
%$\mathbf{\Phi}$.

\vspace{0.2in}
\textbf{Step 1: \decDNNFn\ $\D$ for $\mathbf{\Phi}$ to \decDNNFn\ $\OO$ for $\neg \mathbf{\Phi}$.~~} 
As a first step, we convert the  \decDNNFn\ $\D$ into another DAG $\OO$
such that $\OO$ has the same DAG structure as $\D$ but the types of the nodes are changed as follows:

\begin{enumerate}
	\item A 1-sink in $\D$ becomes a 0-sink in $\OO$ and vice versa.
	\item An AND-node in $\D$ becomes an OR-node in $\OO$ and vice versa.
	\item An \XOR-node in $\D$ becomes an \EQUIV-node in $\OO$ and vice versa.
	\item Decision nodes and NOT-nodes remain unchanged.
\end{enumerate}
For any node $u \in \D$, we will denote the corresponding node in $\OO$ by $u'$, 
the subformula at $u$ by $\mathbf{\Phi_u}$ (as before), 
and the subformula at $u' \in \OO$ by $\mathbf{\Psi_u}$.
The following proposition shows that $\OO$ computes $\neg \mathbf{\Phi}$.

					%decision-nodes, NOT-nodes, 
					%\noop\ nodes (nodes with single child that do not perform any operation), 
					%and 
					%decomposable OR-nodes (instead of decomposable AND-nodes as in $\D$).
					%Let $\Psi_u$ denote the subformula at a node $u \in \OO$.
					%For a decomposable OR-node $u$ with two children $u_1, u_2$, $\mathbf{\Psi_u} = \mathbf{\Psi_{u_1}} \vee \mathbf{\Psi_{u_2}}$;
					%further the sub-DAGs $\OO_{u_1}$ and $\OO_{u_2}$  (and therefore $\mathbf{\Psi_{u_1}}$ and $\mathbf{\Psi_{u_2}}$)
					%do not share any common variables. 
			%\par
			%$\D$ and $\OO$ have the same DAG structure (same set of nodes and edges between them). However,
			%(i) for a 1-sink (resp. 0-sink) $u \in \D$, $u$ is a 0-sink (resp. 1-sink) in $\OO$,
			%(ii) if $u$ is a decomposable-AND node in $\D$, $u$ is a decomposable-OR node in $\OO$, 
			%(iii) if $u$ is a NOT-node in $\D$, $u$ is a %\noop\ node 
			%NOT-node in $\OO$, and
			%(iv) if $u$ is a decision-node testing variable $x$ in $\D$ with children $u_0$ and $u_1$ for 0-branch and 1-branch respectively, 
			%then $u$ is also a decision-node testing variable $x$  in $\OO$ with children $u_0$ and $u_1$ for 0-branch and 1-branch respectively.

\begin{proposition}
For all nodes $u \in \D$ and the corresponding $u' \in \OO$, 
$\mathbf{\Psi_u} = \neg \mathbf{\Phi_u}$. 
\end{proposition}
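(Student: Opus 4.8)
The plan is to prove the proposition by structural induction on the DAG, proceeding bottom-up from the sinks toward the root; concretely I would induct on the length of the longest path from $u$ to a sink, so that when I reach a node $u$ the inductive hypothesis $\mathbf{\Psi_{u_i}}=\neg\mathbf{\Phi_{u_i}}$ is already available for each child $u_i$. Since the node operations and negation all act coordinatewise on the vector outputs, it suffices to verify the identity in each coordinate, so throughout I treat $\mathbf{\Phi_u}$ and $\mathbf{\Psi_u}$ as single Boolean functions and read $\neg$ as bitwise complement. For the base case, the construction sends a $1$-sink of $\D$ to a $0$-sink of $\OO$ and vice versa (more generally, a multi-output sink labeled $L\in\set{0,1}^m$ to one labeled its bitwise complement), so $\mathbf{\Psi_u}=\neg\mathbf{\Phi_u}$ holds immediately at every sink.

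For the inductive step I would argue case by case on the type of $u$, in each case using the definition of the corresponding node function from Section~\ref{sec:fbdds}. For a NOT-node (type unchanged in $\OO$), $\mathbf{\Phi_u}=\neg\mathbf{\Phi_{u_1}}$ gives $\mathbf{\Psi_u}=\neg\mathbf{\Psi_{u_1}}=\neg\neg\mathbf{\Phi_{u_1}}=\mathbf{\Phi_{u_1}}=\neg\mathbf{\Phi_u}$. For an AND-node of $\D$, which becomes an OR-node of $\OO$, De Morgan gives $\mathbf{\Psi_u}=\mathbf{\Psi_{u_1}}\vee\mathbf{\Psi_{u_2}}=\neg\mathbf{\Phi_{u_1}}\vee\neg\mathbf{\Phi_{u_2}}=\neg(\mathbf{\Phi_{u_1}}\wedge\mathbf{\Phi_{u_2}})=\neg\mathbf{\Phi_u}$, and the OR-node case is symmetric. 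For the \XOR- and \EQUIV-nodes, which are swapped by the construction, I would use that these two functions are Boolean complements: if $\mathbf{\Phi_u}=\mathbf{\Phi_{u_1}}\oplus\mathbf{\Phi_{u_2}}$ then the corresponding \EQUIV-node computes $\mathbf{\Psi_u}=\mathbf{\Psi_{u_1}}\mathbf{\Psi_{u_2}}\vee\neg\mathbf{\Psi_{u_1}}\neg\mathbf{\Psi_{u_2}}=\mathbf{\Phi_{u_1}}\mathbf{\Phi_{u_2}}\vee\neg\mathbf{\Phi_{u_1}}\neg\mathbf{\Phi_{u_2}}=\neg(\mathbf{\Phi_{u_1}}\oplus\mathbf{\Phi_{u_2}})=\neg\mathbf{\Phi_u}$, and the \EQUIV-to-\XOR\ direction is checked the same way.

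The one case that needs a little care, and which I expect to be the main (though still elementary) obstacle, is the decision-node case, since a decision node is left unchanged by the construction yet we need its output to flip. Here I would verify the multiplexer identity
\[
\neg\big((\neg X)\,\mathbf{\Phi_{u_0}}\vee X\,\mathbf{\Phi_{u_1}}\big)=(\neg X)\,\neg\mathbf{\Phi_{u_0}}\vee X\,\neg\mathbf{\Phi_{u_1}},
\]
which follows by case-splitting on $X$ (or by one application of De Morgan and absorption): negating a selection between $\mathbf{\Phi_{u_0}}$ and $\mathbf{\Phi_{u_1}}$ negates each branch. Combined with the inductive hypothesis $\mathbf{\Psi_{u_0}}=\neg\mathbf{\Phi_{u_0}}$ and $\mathbf{\Psi_{u_1}}=\neg\mathbf{\Phi_{u_1}}$, this gives $\mathbf{\Psi_u}=(\neg X)\mathbf{\Psi_{u_0}}\vee X\mathbf{\Psi_{u_1}}=\neg\mathbf{\Phi_u}$, completing the induction. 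The only subtlety beyond these Boolean identities is to note that the coordinatewise reduction is legitimate, which is immediate because each node type in a \decDNNFn\ combines its children's outputs by the same Boolean operation in every coordinate, and the sink relabeling is bitwise.
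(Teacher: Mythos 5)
Your proof is correct and follows essentially the same route as the paper's: an induction in reverse topological order with the identical case analysis (swapped sinks, NOT unchanged, AND/OR via De Morgan, \XOR/\EQUIV\ as complements, and the multiplexer identity for decision nodes). Your extra remarks on the coordinatewise reduction and the explicit verification of the decision-node identity are fine but do not change the argument.
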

\begin{proof}
 We prove the claim by an induction on 
a \emph{reverse topological order} of the nodes $u$ in $\D$  (a node is processed only after all its descendants are processed)
that $\mathbf{\Psi_u} = \neg \mathbf{\Phi_u}$. 
The hypothesis holds for the 0-sinks and 1-sinks by construction. Suppose the hypothesis holds up to some step
and consider the next node $u$ in the reverse topological order.
\begin{enumerate} [(a)]
\item If $u'$ is a NOT-node in $\OO$ ($u$ is a NOT-node in $\D$), $\mathbf{\Psi_u} = \neg \mathbf{\Psi_{u_1}}$ = $\neg (\neg \mathbf{\Phi_{u_1}})$ (by induction hypothesis) = 
$\neg (\neg (\neg \mathbf{\Phi_{u}})) = \neg \mathbf{\Phi_u}$.

\item If $u'$ is a decomposable OR-node in $\OO$ ($u$ is a decomposable AND-node in $\D$) with children $u_1, u_2$,
then $\Psi_{u} = \mathbf{\Psi_{u_1}} \vee \mathbf{\Psi_{u_2}}$ = $(\neg \mathbf{\Phi_{u_1}}) \vee (\neg \mathbf{\Phi_{u_2}})$ (by induction hypothesis)
= $\neg (\mathbf{\Phi_{u_1}} \wedge \mathbf{\Phi_{u_2}})$ = $\neg \mathbf{\Phi_{u}}$.

\item If $u'$ is a decomposable AND-node in $\OO$ and $u$ is a decomposable OR-node in $\D$ is similar to the above case.

\item If $u'$ is a decomposable \XOR-node in $\OO$ ($u$ is a decomposable \EQUIV-node in $\D$) with children $u_1, u_2$,
then $\mathbf{\Psi_{u}} = \mathbf{\Psi_{u_1}} \cdot \neg \mathbf{\Psi_{u_2}}$ $\vee$ $\neg \mathbf{\Psi_{u_1}} \cdot \mathbf{\Psi_{u_2}}$ = 
$(\neg \mathbf{\Phi_{u_1}}) \cdot \neg (\neg \mathbf{\Phi_{u_2}})$ $\vee$ $\neg (\neg \mathbf{\Phi_{u_1}}) \cdot (\neg \mathbf{\Phi_{u_2}})$ (by induction hypothesis) = 
$(\neg \mathbf{\Phi_{u_1}}) \cdot \mathbf{\Phi_{u_2}}$ $\vee$ $\mathbf{\Phi_{u_1}} \cdot \neg \mathbf{\Phi_{u_2}}$ = 
$\neg (\mathbf{\Phi_{u_1}} \cdot \mathbf{\Phi_{u_2}}$ $\vee$ $\neg \mathbf{\Phi_{u_1}} \cdot \neg \mathbf{\Phi_{u_2}})$
$= \neg \mathbf{\Phi_u}$.

\item If $u'$ is a decomposable \EQUIV-node in $\OO$ and $u$ is a decomposable \XOR-node in $\D$ is similar to the above case.

\item If $u'$ is a decision-node in $\OO$ ($u$ is also a decision-node in $\D$) with two children $u_0, u_1$ on 0- and 1-branch respectively, 
then $\mathbf{\Psi_u} = (\neg x) \mathbf{\Psi_{u_0}} \vee x \mathbf{\Psi_{u_1}}$
= $(\neg x) (\neg \mathbf{\Phi_{u_0}}) \vee x (\neg \mathbf{\Phi_{u_1}})$ (by induction hypothesis)
= $(\neg x) (\neg \mathbf{\Phi}[x = 0]) \vee x (\neg \mathbf{\Phi}[x = 1])$ 
$= \neg \mathbf{\Phi_u}$.
\end{enumerate}
\hfill
\end{proof}

\vspace{0.2in}
\textbf{Step 2: Combining \decDNNFn s $\D$ and $\OO$ into a \decDNNFn\ $\PP$ that does not have any NOT-node~~} 
$\PP$ has the union of nodes of $\D$ and $\OO$. 
If $u \in \D$ is not a NOT-node (\ie, $u' \in \OO$ is also not a NOT-node),
%a decision-node, a 0- or 1-sink (also in $\OO$), or a decomposable-AND node (decomposable-OR node in $\OO$),
then the connections to the child or children of $u$ and $u'$ are retained in $\PP$. 
Let $u \in \D$ be a NOT-node with child $u_1$,
and $u' \in \OO$ the corresponding NODE-node with child $u_1'$.
% (a) find the corresponding $u \in \OO$ and its child $u_1 \in \OO$ (say $u'$ and $u_1'$ respectively), 
Then 
(a) add edge from $u$ to $u_1'$, and make $u$ a \noop\ node that computes the same functions as its unique child $\mathbf{\Phi_u} = \mathbf{\Psi_{u_1}}$,
and (b) add edge from $u'$ to $u_1$, and make $u'$ a \noop\ node.
%Do the same connection for NOT-nodes $u \in \OO$ (add connection to the unique child of $u \in \D$). 
After this construction, $\PP$
will not have any NOT-node, any \noop\ node can be removed by simply connecting all its incoming edges directly to its unique child.
If $r$ is the root of $\D$, the corresponding $r \in \PP$ (and the DAG reachable from $r$ in $\PP$) computes the same function as $\D$.
This conversion increases the number of nodes by a factor of 2.
For any node $u \in \PP$, let $\mathbf{\Theta_u}$ be the subformula at $u$.

\begin{proposition}
%For all nodes $u \in \D$, there is a node $u \in \PP$ computing the formula $\Theta_u$ in $\PP$ at $u$ is 
$\mathbf{\Theta_u}= \mathbf{\Phi_u}$ if $u \in \D$, and $\mathbf{\Theta_u} = \mathbf{\Psi_u} = \neg \mathbf{\Phi_u}$ for the corresponding $u' \in \OO$.
\end{proposition}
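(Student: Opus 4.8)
The plan is to prove the proposition by induction on a reverse topological order of the nodes of $\PP$, exactly paralleling the induction already used for the previous proposition about $\OO$. Before invoking such an induction I would first argue that it is well-founded, i.e., that $\PP$ is acyclic. The only edges of $\PP$ that differ from the edges inherited from $\D$ and $\OO$ are the two redirected edges created at each NOT-node: a NOT-node $u\in\D$ becomes a \noop\ pointing to $u_1'\in\OO$, and the corresponding NOT-node $u'\in\OO$ becomes a \noop\ pointing to $u_1\in\D$, where $u_1$ is the (unique) child of $u$ in $\D$. Since $u_1$ is a strict descendant of $u$ in the original DAG, and $\OO$ shares the DAG structure of $\D$, the depth function measuring distance to the sinks strictly decreases along every edge of $\PP$, including the two crossing edges. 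Hence $\PP$ is a DAG and the reverse topological induction is legitimate.

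For the base case, a sink $u\in\D$ is labeled by a constant, so $\mathbf{\Theta_u}=\mathbf{\Phi_u}$ trivially; by the construction of Step~1 its image $u'$ is the opposite sink, so $\mathbf{\Theta_{u'}}=\neg\mathbf{\Phi_u}=\mathbf{\Psi_u}$, where the last equality is the previous proposition. For the inductive step I would split into two cases according to whether $u$ is a NOT-node. If $u$ is \emph{not} a NOT-node, then neither is $u'$, and both retain their original children (in $\D$ and $\OO$ respectively) together with their node types. The recursive definition of $\mathbf{\Theta}$ at $u$ and at $u'$ then coincides with the defining equation of $\D$ at $u$ and of $\OO$ at $u'$; applying the induction hypothesis to the children yields both identities directly. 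For instance, if $u$ is a decomposable AND-node with children $u_1,u_2$, then $u'$ is a decomposable OR-node with children $u_1',u_2'$, and $\mathbf{\Theta_u}=\mathbf{\Theta_{u_1}}\wedge\mathbf{\Theta_{u_2}}=\mathbf{\Phi_{u_1}}\wedge\mathbf{\Phi_{u_2}}=\mathbf{\Phi_u}$, while $\mathbf{\Theta_{u'}}=\mathbf{\Theta_{u_1'}}\vee\mathbf{\Theta_{u_2'}}=\mathbf{\Psi_{u_1}}\vee\mathbf{\Psi_{u_2}}=\mathbf{\Psi_u}$; the decision-, OR-, \XOR-, and \EQUIV-cases are entirely analogous (the AND$\leftrightarrow$OR and \XOR$\leftrightarrow$\EQUIV\ swaps are already baked into $\OO$ by Step~1).

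If $u$ \emph{is} a NOT-node with child $u_1$, then after Step~2 the node $u$ is a \noop\ whose unique child is $u_1'\in\OO$, so $\mathbf{\Theta_u}=\mathbf{\Theta_{u_1'}}=\mathbf{\Psi_{u_1}}=\neg\mathbf{\Phi_{u_1}}=\mathbf{\Phi_u}$, using the induction hypothesis, the previous proposition, and the fact that $\mathbf{\Phi_u}=\neg\mathbf{\Phi_{u_1}}$ since $u$ was a NOT-node. Symmetrically, $u'$ becomes a \noop\ whose unique child is $u_1\in\D$, so $\mathbf{\Theta_{u'}}=\mathbf{\Theta_{u_1}}=\mathbf{\Phi_{u_1}}=\neg\neg\mathbf{\Phi_{u_1}}=\neg\mathbf{\Psi_{u_1}}=\mathbf{\Psi_u}$. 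This completes the induction.

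The main obstacle is not any individual computation—each case reduces to a one-line identity—but rather setting up the induction so that the two redirected crossing edges at each NOT-node are handled coherently. One must verify simultaneously that these edges preserve acyclicity (so that the reverse topological induction is well-founded) and that each NOT-node recovers its intended value by ``jumping'' into the complementary DAG; the latter is precisely the place where the previously established identity $\mathbf{\Psi_u}=\neg\mathbf{\Phi_u}$ is consumed, since a NOT applied to $\mathbf{\Phi_{u_1}}$ inside $\D$ is realized by reading the negated subformula already available at the image node $u_1'$ in $\OO$. Once these crossing edges are understood, the remaining cases are routine applications of the node-defining equations.
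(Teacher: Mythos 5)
Your proof is correct and follows essentially the same route as the paper's: reverse topological induction with the sinks as base case, non-NOT nodes handled directly by the node-defining equations plus the induction hypothesis, and NOT-nodes resolved through the crossing edges using the identity $\mathbf{\Psi_u}=\neg\mathbf{\Phi_u}$ from the previous proposition. Your explicit check that the crossing edges preserve acyclicity (so the induction on $\PP$ is well-founded) is a detail the paper leaves implicit, but it does not change the approach.
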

\begin{proof}
We prove this by induction on the nodes in $\D$ in reverse topological order. 
The base case holds for the 0- and 1-sinks. The hypothesis holds by construction for any decision-node, or decomposable AND-, OR-\XOR-, or \EQUIV-node $u\in \PP$.
Let $u \in \D$ be a NOT node which is a \noop\ node in $\PP$.  Then $\Theta_u = \mathbf{\Psi_{u_1}} = \neg \mathbf{\Phi_{u_1}} = \mathbf{\Phi_{u}}$.
Similarly $\mathbf{\Theta_{u}} = \mathbf{\Psi_{u}}$ if $u \in \OO$ is a NOT-node.\\
\end{proof}

\vspace{0.2in}
\textbf{Step 3: DAG $\PP$ with decomposable-AND and -OR nodes to \FBDD\ $\F$ with quasipolynomial blow-up.~~}
 This conversion is similar to the one in \cite{Beame+13-uai} where we showed that a \decDNNF\ can be converted to 
an equivalent \FBDD\ with at most quasipolynomial increase in size. The DAG $\PP$ can have additional decomposable-OR nodes
in addition to decision-nodes and decomposable-AND nodes in a \decDNNF. The same construction works here except the 
correctness proof that both the DAGs $\PP$ and $\F$ compute the same formula. Evaluating the function on a given assignment in $\PP$
is different than a \decDNNF. We need to check that 
\begin{itemize}
\item both branches of a decomposable AND-node evaluate to 1; 
\item at least one branch of a decomposable OR-node
evaluates to 1; 
\item exactly one branch of a decomposable \XOR-node evaluates to 1; and 
\item either both or none of the branches of a decomposable \EQUIV-node
evaluates to 1.
\end{itemize}
Here we give a simple construction and proof of correctness.
\par
For any decomposable (AND-, OR-, \XOR-, or \EQUIV-) node   $u \in \PP$ with two children $u_1$ and $u_2$, 
wlog. let the sub-DAG $\PP_{u_1}$ has fewer number of nodes (lighter sub-DAG)
than $\PP_{u_2}$ (heavier sub-DAG). 
For any such node $u$, recursively (top-down) create a \emph{private} copy of the lighter sub-DAG $L_{u} = \PP_{u_1}$
such that any path from the root of $\PP$ to any node $v \in L_u$ goes through $u$. 
It is not hard to see that this replication leads to at most a quasi-polynomial increase in size (see \cite{Beame+13-uai} for a detailed analysis).
Let this intermediate DAG be $\PP'$.
\par
Then we reconnect some of the edges in $\PP'$ to replace the decomposable-AND and -OR nodes by \noop\ nodes, so that they can be removed altogether later.
We process the nodes in reverse topological order. When we process $u$, we maintain the invariant that the sub-DAG at $u$
is an FBDD, \ie, does not have any decomposable nodes but they still compute the same function as in $\PP$ or $\PP'$.
The sinks and their parents (which must be decision-nodes) are unchanged and they ensure that the base case holds.
\par
Suppose we are at a decomposable node $u$, its sub-DAGs $\PP'_{u_1}$ and $\PP'_{u_2}$ have been already changed to \FBDD s $\F_{u_1}$ and $\F_{u_2}$ respectively.
Further, $\F'_{u_1}$ is a private copy for $u$.
\begin{itemize}
\item If $u$ is an AND-node (in $\D$ or $\OO$), %its sub-DAGs $\PP'_{u_1}$ and $\PP'_{u_2}$ have been already changed to \FBDD s $\F_{u_1}$ and $\F_{u_2}$ respectively.
 Then we add all 1-sinks of $\F_{u_1}$ to the root of $\F_{u_2}$ (note that the root of $F_{u_2}$ is a copy of the original node
$u_2 \in \PP$) to simulate the AND-function. 
\item If $u$ is an OR node (in $\D$ or $\OO$), we connect the 0-sinks in of $\F_{u_1}$ to the root of $\F_{u_2}$
to simulate the OR-function.
\item Suppose $u$ is an \XOR-node in $\D$ with light child $u_1$ and heavy child $u_2$. Then there is an \EQUIV-node $u' \in \OO$   
with light  child $u_1'$ and heavy child $u_2'$. Therefore, the sub-DAGs, $\PP'_{u_1}$ and $\PP'_{u_1'}$ are private sub-DAGs of $u$ and $u'$.
\par
To simulate the \XOR-function $\mathbf{\Phi_u}$ = $\mathbf{\Phi_{u_1}} \cdot \neg \mathbf{\Phi_{u_2}}$  $\vee$ $(\neg \mathbf{\Phi_{u_1}}) \cdot \mathbf{\Phi_{u_2}}$,
we (i) connect all the 1-sinks of $\F'_{u_1}$ (private) to  $\F'_{u_2'}$ (shared child of $u'$), (ii) delete the edge $(u', u_2')$,  
(iii) connect all the 0-sinks of $\PP'_{u_1'}$ (private) to  $\PP'_{u_2}$ (shared), and (iv) delete the edge $(u, u_2)$.
Since $\F'_{u_2'}$ computes $\mathbf{\Psi_{u_2}} = \neg \mathbf{\Phi_{u_2}}$ and 
$\F'_{u_1'}$ computes $\mathbf{\Psi_{u_1}} = \neg \mathbf{\Phi_{u_1}}$, 
these connections correctly simulate \XOR-function at $u$.
\par
Similarly, to simulate the \EQUIV-function at $u'$, $\mathbf{\Psi_u}$ = $\mathbf{\Psi_{u_1}} \cdot \mathbf{\Psi_{u_2}}$  $\vee$ $(\neg \mathbf{\Psi_{u_1}}) \cdot (\neg \mathbf{\Psi_{u_2}})$,
we (i) connect all the 1-sinks of $\F'_{u_1'}$ (private) to  $\F'_{u_2'}$ (shared), and
(ii) connect all the 0-sinks of $\F'_{u_1'}$ (private) to  $\F'_{u_2}$ (shared child of $u$).

\item  The case when $u$ is an \EQUIV-node and $u'$ is an \XOR-node is similar to the above case.
\end{itemize}

\begin{proposition}
The sub-DAG $\F_u$ for every node $u$ (and in particular the full DAG at the root) 
created by the above construction is an \FBDD.
\end{proposition}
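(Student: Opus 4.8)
The plan is to prove the proposition by induction on the reverse topological order in which the nodes of $\PP'$ are processed, maintaining the invariant that once a node $u$ has been processed the sub-DAG $\F_u$ is read-once, and moreover that the set of variables $\F_u$ can test is contained in $\Var(\mathbf{\Phi_u})$ (which equals $\Var(\mathbf{\Psi_u})$, since $\mathbf{\Psi_u}=\neg\mathbf{\Phi_u}$ forces the two to mention the same variables). The variable-set part of the invariant is auxiliary but essential for driving the induction. The base case consists of the sinks, which test no variable, together with their parent decision-nodes whose children are sinks; these are trivially FBDDs.

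For the inductive step I would argue case by case on the type of $u$. If $u$ is a decision-node testing a variable $x$, then the read-once property of the original \decDNNFn\ $\D$ (inherited by $\PP$ and $\PP'$) guarantees that $x$ occurs in neither $\mathbf{\Phi_{u_0}}$ nor $\mathbf{\Phi_{u_1}}$; since by induction $\F_{u_0}$ and $\F_{u_1}$ test only variables of $\Var(\mathbf{\Phi_{u_0}})$ and $\Var(\mathbf{\Phi_{u_1}})$, no path out of either child re-tests $x$, so $\F_u$ is read-once. If $u$ is a decomposable AND- or OR-node with light child $u_1$ and heavy child $u_2$, the construction turns $u$ into (the root of) the private copy $\F'_{u_1}$ and redirects its sinks (the $1$-sinks for AND, the $0$-sinks for OR) into the root of $\F_{u_2}$. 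Every root-to-sink path of $\F_u$ then factors as a path through $\F'_{u_1}$ followed, possibly, by a path through $\F_{u_2}$; each piece is read-once by induction, and by decomposability $\Var(\mathbf{\Phi_{u_1}})\cap\Var(\mathbf{\Phi_{u_2}})=\emptyset$, so the concatenated path tests every variable at most once.

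The \XOR- and \EQUIV-nodes are handled analogously, but require tracking the two mirrored copies of the diagram in $\D$ and $\OO$ at once. Here the light children $\F'_{u_1}$ (private to the \XOR-node $u\in\D$) and $\F'_{u_1'}$ (private to the mirror \EQUIV-node $u'\in\OO$) have their sinks redirected into the shared heavy children $\F_{u_2}$ and $\F_{u_2'}$, which compute $\mathbf{\Phi_{u_2}}$ and $\mathbf{\Psi_{u_2}}=\neg\mathbf{\Phi_{u_2}}$. The two crucial observations are that $\Var(\mathbf{\Psi_{u_i}})=\Var(\mathbf{\Phi_{u_i}})$, so the $\OO$-side copies test the same variables as their $\D$-side counterparts, and that decomposability still gives $\Var(\mathbf{\Phi_{u_1}})\cap\Var(\mathbf{\Phi_{u_2}})=\emptyset$. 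Hence, exactly as in the AND/OR case, any root-to-sink path of $\F_u$ decomposes into a read-once prefix through a light copy on variables of $\Var(\mathbf{\Phi_{u_1}})$ and a read-once suffix through a heavy copy on the disjoint variables of $\Var(\mathbf{\Phi_{u_2}})$, and is therefore read-once.

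I expect the main obstacle to be justifying this path factorization in the \XOR/\EQUIV case, where edges are rerouted back and forth between the $\D$-side and the $\OO$-side. This step relies essentially on the privatization performed earlier: because $\F'_{u_1}$ and $\F'_{u_1'}$ are private, i.e.\ reachable in $\PP'$ only through $u$ and $u'$ respectively, redirecting their sinks cannot create a spurious path that enters a heavy child from any context other than the intended light copy, so no variable of a heavy child is ever tested before its light copy has been fully traversed. I would make this precise by checking that, after the rerouting, the sinks of each light copy form an internal cut separating the light-child variables from the heavy-child variables along every path through $u$; combined with the disjointness of those variable sets, this closes the induction.
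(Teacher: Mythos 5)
Your treatment of the read-once property is essentially a more detailed version of what the paper disposes of in a single sentence (the light and heavy sub-DAGs mention disjoint variables, so a path factoring through a light copy and then a heavy one tests each variable at most once), and that part of your argument is fine. One small repair: your auxiliary invariant should be stated \emph{syntactically} -- the variables mentioned in the sub-DAG of $\PP'$ at $u$ -- not as containment in $\Var(\mathbf{\Phi_u})$ of the computed function, since a diagram may test variables its function does not depend on; and the fact that the $\OO$-side copy mentions the same variables as its $\D$-side mirror follows from $\OO$ having the identical DAG structure as $\D$ (only node types are changed), not from the semantic identity $\mathbf{\Psi_u}=\neg\mathbf{\Phi_u}$. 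This matters because decomposability in the paper is defined via variables \emph{mentioned} in the sub-DAGs.

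The genuine gap is acyclicity: an \FBDD\ must be a DAG, and your proof never shows that the sink-redirections create no directed cycles. This is exactly the part the paper treats as the nontrivial half of the proposition. Its argument is global: fix a layout in which every decomposable node has its light sub-DAG as left child and heavy sub-DAG as right child, and place $u$, $u'$ and their private light copies to the left of $u_2$, $u_2'$; then every surviving edge of $\PP'$ goes top-down and every newly added edge goes left-to-right, so no directed cycle can close. Your ``internal cut'' observation addresses path factorization (i.e., read-once-ness), not the absence of cycles, and privatization by itself is not sufficient as stated: redirections are performed at many \XOR/\EQUIV\ pairs, each adding edges that cross between the $\D$-side and the $\OO$-side into the \emph{shared} heavy sub-DAGs, so edges added while processing other nodes alter reachability in the region your local step reasons about. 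A correct completion needs either the paper's orientation/layout argument, or an explicit global ordering (potential) on the nodes of $\PP'$ that every old and new edge strictly decreases -- for instance, using the fact that every path in $\PP'$ projects onto a path in the underlying DAG of $\D$, so a cycle through a new edge would force a path from a descendant $u_2$ back to its ancestor $u$, contradicting acyclicity of $\D$. Without some such argument the claim that $\F_u$ is an \FBDD, rather than merely a read-once cyclic structure, is unproven.
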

\begin{proof}
Since the sub-DAGs at $u_1$ and $u_2$ (in $\D, \OO, \PP$ and $\PP'$) did not share any variable, every variable is still tested at most once
along any path in the resulting DAG $\F_u$.
To see that the DAG structure is retained and no cycles are created, 
orient the edges from $u$ to lighter and heavier sub-DAGs such that they are the left and right 
child respectively. 
Also place $u' \in \OO$ and its children $u_1, u_1'$ in $\F$ such that
both $u, u'$ appear at left of both $u_2, u_2'$.
Then the original connections in $\PP'$ are top-down, and the new connections are from left to right, which cannot create a cycle.
\end{proof}
Therefore, after the conversion, we get an \FBDD\ $\F$ with at most quasi-polynomial increase in size and computing the same formula as the original \decDNNFn\ $\D$. %OLD "appendix.tex"
\section{Discussion}
\label{sec:conclusions}
In this paper we proved exponential separations between 
lifted model counting using extensional query evaluation
and state-of-the-art propositional methods for exact model counting.
Our results were obtained by proving exponential lower bounds on the 
sizes of the \decDNNF\ representations implied by those proposition methods
even for queries that can be evaluated in polynomial time.
We also introduced \decDNNFn s, which generalize \decDNNF s while
retaining their good algorithmic properties for model counting.
Though our query lower bounds apply equally to their \decDNNFn\ representations,
\decDNNFn s may prove to be better than \decDNNF s in other scenarios.

%In this paper we 
%showed several lower bounds related to propositional methods for weighted model counting in query evaluation.
%We proved exponential lower bounds on the sizes of \FBDD s for a class of queries that includes
%both easy and \#P-hard queries, which imply exponential lower bounds on the sizes of \decDNNF s and \decDNNFn s
%for these queries,  which in turn imply exponential lower bounds on the running times of all the 
%state-of-the-art \decDNNF-based (DPLL-style) propositional
%model counting algorithms. For easy queries in this class weighted model counting can be done in time polynomial in the size of the database
%using extensional query evaluation; therefore the lower bounds in this paper proves a formal separation between 
%grounded inference and lifted inference techniques. 

%There are several future directions. 
%We introduced \decDNNFn s which generalize \decDNNF s but still
%amenable to a linear-time model counting in the size of the representation. While \decDNNF s are used by all existing
%model counting algorithms implcitly or explicitly to the best of our knowledge, it would be interesting 
%to explore whether the decomposable \XOR- and \EQUIV-nodes can be used by these model counting algorithms
%for some applications.

%On the other hand, our lower bound results
%raises an interesting question:
%
%\begin{itemize}
%\item For which 
  %queries can one perform weighted model counting in polynomial time on
  %the propositional grounding? \label{item:oq1}
%\end{itemize}
%

In light of our lower bounds, it would be interesting to prove a dichotomy,
classifying queries into
those for which any \decDNNF-based model counting algorithm takes
exponential time and those for which such algorithms run in polynomial time. 
In this paper we showed such a dichotomy for a very restricted class of queries.
A dichotomy for general model counting is known for the broader query class UCQ
\cite{DBLP:journals/jacm/DalviS12} which classifies queries as either
\#P-hard or solvable in polynomial time.
Our separation results show that this same dichotomy does not extend to
\decDNNF-based algorithms; is there some other general dichotomy that can be
shown for this class of algorithms?
%as a tiny step towards answering the question; %(\ref{item:oq1}) 
%Does a characterization for efficient propositional methods exist for general queries? 

%Also all the lower bounds in this paper hold for 
%existing \decDNNF-based model counting algorithms. What can be inferred about about the model counting problem
%on propositional grounding  in general?
%NOT LIKELY SOLVABLE UNLESS WE CAN SOLVE MAJOR COMPLEXITY QUESTIONS

% The following two commands are all you need in the
% initial runs of your .tex file to
% produce the bibliography for the citations in your paper.

\cut{
{\footnotesize
\bibliographystyle{abbrv}
\bibliography{bib}  

\begin{thebibliography}{10}

\bibitem{bacchusdalmaopitassi}
F.~Bacchus, S.~Dalmao, and T.~Pitassi.
\newblock Algorithms and complexity results for \#sat and bayesian inference.
\newblock In {\em FOCS}, pages 340--351, 2003.

\bibitem{Bayardo+00-modelcounter}
R.~J. Bayardo, Jr., and J.~D. Pehoushek.
\newblock Counting models using connected components.
\newblock In {\em AAAI}, pages 157--162, 2000.

\bibitem{Beame+13-uai}
P.~Beame, J.~Li, S.~Roy, and D.~Suciu.
\newblock Lower bounds for exact model counting and applications in
  probabilistic databases.
\newblock In {\em UAI}, 2013.

\bibitem{bollig98}
B.~Bollig and I.~Wegener.
\newblock A very simple function that requires exponential size read-once
  branching programs.
\newblock {\em Inf. Process. Lett.}, 66(2):53--57, Apr. 1998.

\bibitem{DBLP:conf/pods/DalviSS10}
N.~N. Dalvi, K.~Schnaitter, and D.~Suciu.
\newblock Computing query probability with incidence algebras.
\newblock In {\em PODS}, pages 203--214, 2010.

\bibitem{DBLP:journals/jacm/DalviS12}
N.~N. Dalvi and D.~Suciu.
\newblock The dichotomy of probabilistic inference for unions of conjunctive
  queries.
\newblock {\em J. ACM}, 59(6):30, 2012.

\bibitem{Darwiche01JACM}
A.~Darwiche.
\newblock Decomposable negation normal form.
\newblock {\em J. ACM}, 48(4):608--647, 2001.

\bibitem{Darwiche01}
A.~Darwiche.
\newblock On the tractable counting of theory models and its application to
  truth maintenance and belief revision.
\newblock {\em Journal of Applied Non-Classical Logics}, 11(1-2):11--34, 2001.

\bibitem{DarwicheM2002}
A.~Darwiche and P.~Marquis.
\newblock A knowledge compilation map.
\newblock {\em J. Artif. Int. Res.}, 17(1):229--264, Sept. 2002.

\bibitem{DPLL62}
M.~Davis, G.~Logemann, and D.~Loveland.
\newblock A machine program for theorem-proving.
\newblock {\em Commun. ACM}, 5(7):394--397, 1962.

\bibitem{DavisPutnam60}
M.~Davis and H.~Putnam.
\newblock A computing procedure for quantification theory.
\newblock {\em J. ACM}, 7(3):201--215, 1960.

\bibitem{DBLP:series/faia/GomesSS09}
C.~P. Gomes, A.~Sabharwal, and B.~Selman.
\newblock Model counting.
\newblock In {\em Handbook of Satisfiability}, pages 633--654. IOS Press, 2009.

\bibitem{HuangD-IJCAI05}
J.~Huang and A.~Darwiche.
\newblock Dpll with a trace: From sat to knowledge compilation.
\newblock In {\em IJCAI}, pages 156--162, 2005.

\bibitem{HuangDJAIR07}
J.~Huang and A.~Darwiche.
\newblock The language of search.
\newblock {\em JAIR}, 29:191--219, 2007.

\bibitem{jaeger-broeck-2012}
M.~Jaeger and G.~V. den Broeck.
\newblock Liftability of probabilistic inference: Upper and lower bounds.
\newblock In {\em Proceedings of the 2nd International Workshop on Statistical
  Relational AI}, 2012.

\bibitem{DBLP:journals/mst/JhaS13}
A.~K. Jha and D.~Suciu.
\newblock Knowledge compilation meets database theory: Compiling queries to
  decision diagrams.
\newblock {\em Theory Comput. Syst.}, 52(3):403--440, 2013.

\bibitem{MajercikLittman98}
S.~M. Majercik and M.~L. Littman.
\newblock Using caching to solve larger probabilistic planning problems.
\newblock In {\em AAAI}, pages 954--959, 1998.

\bibitem{Dsharp}
C.~Muise, S.~A. McIlraith, J.~C. Beck, and E.~I. Hsu.
\newblock Dsharp: fast d-dnnf compilation with sharpsat.
\newblock In {\em Canadian AI}, pages 356--361, 2012.

\bibitem{sabharwal:symchaff-journal}
A.~Sabharwal.
\newblock Symchaff: exploiting symmetry in a structure-aware satisfiability
  solver.
\newblock {\em Constraints}, 14(4):478--505, 2009.

\bibitem{cachet}
T.~Sang, F.~Bacchus, P.~Beame, H.~A. Kautz, and T.~Pitassi.
\newblock Combining component caching and clause learning for effective model
  counting.
\newblock In {\em SAT}, 2004.

\bibitem{stanley-combinatorics-1997}
R.~P. Stanley.
\newblock {\em Enumerative Combinatorics}.
\newblock Cambridge University Press, 1997.

\bibitem{DBLP:series/synthesis/2011Suciu}
D.~Suciu, D.~Olteanu, C.~R{\'e}, and C.~Koch.
\newblock {\em Probabilistic Databases}.
\newblock Synthesis Lectures on Data Management. Morgan {\&} Claypool
  Publishers, 2011.

\bibitem{sharpSAT}
M.~Thurley.
\newblock sharpsat: counting models with advanced component caching and
  implicit bcp.
\newblock In {\em SAT}, pages 424--429, 2006.

\bibitem{DBLP:journals/siamcomp/Valiant79}
L.~G. Valiant.
\newblock The complexity of enumeration and reliability problems.
\newblock {\em SIAM J. Comput.}, 1979.

\bibitem{wegener2000book}
I.~Wegener.
\newblock {\em Branching programs and binary decision diagrams: theory and
  applications}.
\newblock SIAM, Philadelphia, PA, USA, 2000.

\end{thebibliography}
}
}
% You must have a proper ".bib" file
%  and remember to run:
% latex bibtex latex latex
% to resolve all references

%\subsection{References}

% ****************** APPENDIX **************************************

% Example of an appendix; typically would start on a new page
%pagebreak

%\newpage
\begin{appendix}
%\appendix
%\input{appendix}
\section*{Appendix}
\section*{Omitted Proofs from Section~4}
\label{sec:app-h0h1}
%\section*{Appendix}
%{\textbf{\Huge{Appendix}}}\\

Here we present proofs for Lemma \ref{lem:theta1} in the case where $k$ is even, including when $k = 0$. For any partial path $P$ through an FBDD for $\lin_k$, let $Row(P), Col(P)$, and $\mathcal{P}$ be the same entities as before. The techniques required for the case when $k = 0$ are different, so we will present it separately from the case when $k$ is even and nonzero. In either case, it will suffice to come up with a definition of admissible paths in $\mathcal{P}$ so that 
\begin{enumerate}[(1)]
\item Lemma \ref{lem:uniquepaths} holds for all admissible paths, and 
\item for all $i, j$, the first decision node $u$ at which a admissible path tests a variable of the form $R(i), S_\ell (i, j)$ or $T(j)$ is unforced, using terminology introduced in the proof of Lemma \ref{lem:theta1}. 
\end{enumerate}
Then it is readily checked that using the same proof structure as given in \autoref{sec:h0h1} will yield a complete proof.

We will first prove hardness for $\lin_0$. Our definition of a admissible path is as follows:

\begin{definition}
A path $P \in \mathcal{P}$ in a FBDD computing $\lin_0$ is \emph{forbidden} if there exist $i, j$ so that $R(i) = S(i, j) = 0$ or $S(i, j) = T(j) = 0$ along the path. It is {\em admissible} otherwise.
\end{definition}

That this definition follows property (2) is easily verified. We now demonstrate property (1). This will complete the proof for $k  = 0$.

\begin{lemma}
Lemma \ref{lem:uniquepaths} holds when $k = 0$ with this definition of admissible paths.
\end{lemma}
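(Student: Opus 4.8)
The plan is to mirror the structure of the proof of Lemma~\ref{lem:uniquepaths} in the odd case, assuming for contradiction that there are distinct $P_1,P_2\in\mathcal{A}$ with $\lin_0[P_1]=\lin_0[P_2]=F$. First I would locate the first node $u$ at which $P_1$ and $P_2$ diverge and argue that $u$ must be a decision node: at a unit node the $1$-edge sets a $1$-prime implicant and hence leads to the constant function $1$, but both $P_1,P_2$ end at non-leaf nodes of $\mathcal{P}$ whose residual formulas are non-constant, so whichever path takes the $1$-edge would already be finished---a contradiction. Since $F=\lin_0[P_1]=\lin_0[P_2]$, Lemma~\ref{lem:RTdetermined} tells me the two paths test exactly the same $R$- and $T$-variables and agree on their values; consequently $u$ cannot test an $R$- or $T$-variable (they would not diverge there), so $u$ tests some $S(i,j)$, with (WLOG) $P_1$ taking $S(i,j)=0$ and $P_2$ taking $S(i,j)=1$.

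Next I would extract the structural consequences. Because $P_1$ sets $S(i,j)=0$ and is admissible, neither $R(i)=0$ nor $T(j)=0$ can occur on $P_1$ (either would create a forbidden pattern $R(i)=S(i,j)=0$ or $S(i,j)=T(j)=0$); by Lemma~\ref{lem:RTdetermined} the same holds on $P_2$, so on both paths $R(i),T(j)\in\set{1,\text{unset}}$ with identical status. The two facts I will lean on throughout are that the endpoint of each path is a non-leaf decision node, so $F$ is non-constant and, by Definition~\ref{def:unit:rule}, $F$ has no unit. Now split on the final values of $R(i)$ and $T(j)$. If both equal $1$, then on $P_2$ the term $R(i)S(i,j)T(j)$ restricts to $1$, forcing $F\equiv 1$ and contradicting non-constancy. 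If exactly one is $1$ (say $R(i)=1$, $T(j)$ unset), then on $P_2$ that term restricts to $T(j)$, making $T(j)$ a unit of $F$ and contradicting unit-freeness; the case $T(j)=1$ is symmetric.

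The remaining and hardest case is when both $R(i)$ and $T(j)$ are unset on both paths: here the distinguishing term $R(i)T(j)$ survives in $\lin_0[P_2]$ while the analogous term is killed in $\lin_0[P_1]$, yet I must rule out that some unrelated term set up by $P_1$ secretly recreates the same function. To do this I would evaluate both residuals on the assignment $\rho$ that sets $R(i)=T(j)=1$ and all other still-unset variables to $0$. Then $\lin_0[P_2][\rho]=1$ via the term $R(i)S(i,j)T(j)=1\cdot 1\cdot 1$. For $\lin_0[P_1][\rho]$ I would show every term must vanish: a surviving term $R(i')S(i',j')T(j')$ forces $P_1$ to have set $S(i',j')=1$, and then (i) $i'=i,j'=j$ is impossible since $S(i,j)=0$ on $P_1$; (ii) $i'=i,j'\neq j$ forces $T(j')$ set by $P_1$, making the term restrict to $R(i)$, a unit of $F$; (iii) $i'\neq i,j'=j$ symmetrically makes $T(j)$ a unit; and (iv) $i'\neq i,j'\neq j$ forces $R(i'),S(i',j'),T(j')$ all set to $1$ by $P_1$, so $F\equiv 1$. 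Each case contradicts the endpoint properties, so $\lin_0[P_1][\rho]=0\neq 1=\lin_0[P_2][\rho]$ and hence $\lin_0[P_1]\neq\lin_0[P_2]$, the desired contradiction. The main obstacle is precisely this last case: it closes only once the non-constancy and unit-freeness of the endpoint residual $F$ are invoked to eliminate every spurious term, which is where the definition of $\mathcal{P}$ (endpoints are non-leaf decision nodes) does the real work.
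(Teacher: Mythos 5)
Your proof is correct and follows essentially the same route as the paper's: locate the first divergence node, use Lemma~\ref{lem:RTdetermined} to force it to test some $S(i,j)$, and then combine admissibility of $P_1$ (ruling out $R(i)=0$ or $T(j)=0$) with unit-freeness and non-constancy of the endpoint residual to eliminate every remaining status of $R(i)$ and $T(j)$. Your explicit separating assignment $\rho$ in the both-unset case is just a more detailed verification of the paper's one-line observation that the implicant $R(i)T(j)$ cannot appear in $\lin_0[P_1]$ once $S(i,j)=0$ there; note only that your step ``non-leaf decision node, so $F$ is non-constant'' tacitly assumes the FBDD is reduced, the same harmless normalization the paper relies on implicitly.
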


\begin{proof}	
Suppose $P_1, P_2$ are two distinct admissible paths in $\mathcal{P}$ with $\lin_0 [P_1] = \lin_0 [P_2] = F$. Let $u$ be the first node (it must be a decision node) where $P_1$ and $P_2$ diverge, then by Lemma \ref{lem:RTdetermined} it must be of the form $S_1(i, j)$. WLOG assume $P_1$ sets $S(i, j) = 0$ and $P_2$ sets $S(i, j) = 1$. Then the 3-prime implicant $R(i) S(i, j) T(j)$ cannot appear in $F$, nor can $R(i) T(j)$ and there are no units at all, so in particular $P_2$ must set either $R(i) = 0$ or $T(j) = 0$ which implies that $P_1$ is forbidden.
\end{proof}

We will now prove hardness for $\lin_k$ for $k = 2m$ even and nonzero. Here our definition of a admissible path is as follows:

\begin{definition}
A partial path $P \in \mathcal{P}$ in a FBDD computing $\lin_k$ for $k$ even and nonzero is \emph{admissible} if for all $i, j$ it is consistent with one of the four following assignments:
\begin{enumerate}
\item
$R(i) = 1$, $T(j) = 0$ and $S_\ell (i, j) = 0$ for all $\ell$ odd and $S_\ell (i, j) = 1$ for all $\ell$ even
\item
$R(i) = 0$, $T(j) = 1$ and $S_\ell (i, j) = 1$ for all $\ell$ odd and $S_\ell (i, j) = 0$ for all $\ell$ even
\item
$R(i) = T(j) = 0$ and $S_\ell (i, j) = 1$ for all $\ell$ odd and $S_\ell (i, j) = 0$ for all $l$ even
\item
$R(i) = T(j) = 1$ and $S_1(i, j) = S_\ell (i, j) = 0$ for $l$ even and $S_\ell (i, j) = 1$ for $\ell > 1$ odd. %%%%%JERRY, PLZ CHECKeven.
\end{enumerate}
$P$ is \emph{forbidden} if it is not admissible.
\end{definition}

Again that this definition follows property (2) is easily verified, and it suffices to demonstrate property (1).

\begin{lemma}
Lemma \ref{lem:uniquepaths} holds when $k > 0$ and even with this definition of admissible paths.
\end{lemma}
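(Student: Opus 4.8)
The plan is to mirror the proof of \autoref{lem:uniquepaths} given above for odd $k$, adapting the chain-propagation argument to the four admissible patterns of the even case. I would suppose for contradiction that $P_1,P_2\in\mathcal{A}$ are distinct with $\lin_k[P_1]=\lin_k[P_2]=F$, let $u$ be the first node at which they diverge, and assume $P_1$ takes the $0$-edge and $P_2$ the $1$-edge. As in the odd case, $u$ must be a decision node, and by \autoref{lem:RTdetermined} it cannot test an $R$- or $T$-variable, so it tests some $S_\ell(i,j)$. Since paths in $\mathcal{P}$ end at decision (non-unit) nodes and $\F$ follows the unit rule, $F$ contains no units; I would use this repeatedly.

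The first step is to record which of the four patterns each path is pinned to. Because $S_\ell(i,j)$ is set oppositely on the two paths, when $\ell\neq 1$ exactly one of them (call it $P_A$) is consistent only with pattern~1 ($R=1$, $T=0$, $S_m(i,j)=0$ for odd $m$, $S_m(i,j)=1$ for even $m$), while the other ($P_B$) is consistent with patterns~2--4, which share $S_m(i,j)=0$ for even $m$ and $S_m(i,j)=1$ for odd $m>1$. (For $\ell$ even, $P_A=P_2$; for $\ell$ odd and $>1$, $P_A=P_1$.) The case $\ell=1$ I would dispatch separately and more easily: here $P_2$ (with $S_1(i,j)=1$) is consistent with patterns~2,3 and $P_1$ (with $S_1(i,j)=0$) with patterns~1,4; since $S_1(i,j)=1$ on $P_2$ the term $R(i)S_1(i,j)$ collapses to $R(i)$, so the no-units property forces $P_2$ to set $R(i)=0$, and by \autoref{lem:RTdetermined} so does $P_1$, contradicting the fact that patterns~1 and~4 both require $R(i)=1$.

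For the main case $\ell\neq 1$, I would run the same alternating-zero induction as in the odd proof, propagating from $S_\ell(i,j)$ toward both ends of the chain: this forces the odd-indexed $S_m(i,j)$ to $0$ along $P_A$ and the even-indexed ones to $0$ along $P_B$, each step killing a two-variable prime implicant of $F$ and then, since $F$ has no units and the opposite path would otherwise be forbidden, forcing the next variable to $0$. At the $T$-end, $S_k(i,j)=0$ on $P_B$ (as $k$ is even), while pattern~1 forbids $S_k(i,j)=0$ on $P_A$; since $S_k(i,j)T(j)$ is thus not a prime implicant of $F$ and $F$ has no units, $T(j)=0$ must hold on $P_A$, and by \autoref{lem:RTdetermined} on $P_B$ as well. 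Among patterns~2--4 only pattern~3 has $T=0$, so $P_B$ is now pinned to pattern~3, which has $S_1(i,j)=1$ and $R(i)=0$. Repeating the boundary argument at the $R$-end with the term $R(i)S_1(i,j)$ (using $S_1(i,j)=0$ on $P_A$ and $S_1(i,j)=1$ on $P_B$) then forces $R(i)=0$ on $P_B$ and hence, via \autoref{lem:RTdetermined}, on $P_A$ --- contradicting pattern~1's requirement that $R(i)=1$.

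The step I expect to be the main obstacle is precisely this boundary/pinning analysis, because the even case lacks the clean $S_1\leftrightarrow S_k$ symmetry that let the odd proof assume ``$\ell$ even'' without loss of generality: here $S_1$ sits at an odd index while $S_k$ sits at an even index, and the extra pattern~4 (the $R=T=1$ pattern, agreeing with patterns~2,3 except that $S_1(i,j)=0$) must be shown never to interfere. The two places this matters are (i) pinning $P_B$ to pattern~3 after deriving $T(j)=0$, which rules out pattern~4 because pattern~4 has $T=1$, and (ii) the $\ell=1$ divergence, where pattern~4 is a genuine possibility for the $S_1(i,j)=0$ path but is harmless because it shares $R=1$ with pattern~1. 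Once these are checked, verifying property~(2) --- that the first test in each active block is unforced --- is routine from inspection of the four patterns, and the counting argument from the proof of \autoref{lem:theta1} then applies verbatim to conclude $|\mathcal{A}|\ge 2^{n-1}$.
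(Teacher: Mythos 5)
Your proof is correct, and it is built from the same ingredients as the paper's own argument: the $\ell=1$ case dispatched via the unit rule and \autoref{lem:RTdetermined}, the alternating propagation of forced zeros (each step killing a $2$-prime implicant of $F$ and invoking unit-freeness of $F$ at a decision node), and the transfer of $R$- and $T$-values between the two paths via \autoref{lem:RTdetermined}. Where you genuinely differ is in the organization of the endgame, and your version is the cleaner of the two. The paper runs the downward induction first, uses the $R(i)S_1(i,j)$ boundary to pin one path (after a WLOG relabeling) to pattern~4 via $S_1(i,j)=S_2(i,j)=0$, and then works upward to force $T(j)=0$, contradicting pattern~4's requirement $T(j)=1$; this costs it separate sub-cases for $\ell$ odd versus $\ell$ even at the $T$-end and an extra special case for $k=2$. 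You instead note that for $\ell>1$ the divergence itself already pins one path to pattern~1 and the other to the class $\{2,3,4\}$, propagate toward both ends, settle the $T$-boundary first (forcing $T(j)=0$ on both paths, which eliminates patterns~2 and~4 and pins $P_B$ to pattern~3 --- exactly the step that blocks pattern~4 from setting $S_1(i,j)=0$ and escaping the $R$-boundary argument), and only then derive $R(i)=0$ to contradict pattern~1's requirement $R(i)=1$. This ordering eliminates the paper's confusing WLOG relabeling, removes the parity-of-$\ell$ case split, and handles $k=2$ uniformly (there $\ell=k=2$ and $S_k(i,j)=0$ holds on $P_B$ directly, with no separate argument needed); the paper's order buys only that its contradiction is localized at the one new pattern, pattern~4. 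Both arguments are sound, and your flagged worry about pattern~4 interfering is resolved exactly where you said it would be.
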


\begin{proof}
Suppose $P_1, P_2$ are two distinct admissible paths in $\mathcal{P}$ with $\lin_k [P_1] = \lin_k [P_2] = F$. Let $u$ be the first (decision) node where $P_1$ and $P_2$ diverge. Then by Lemma \ref{lem:RTdetermined} it must be of the form $S_\ell (i, j)$ for some $l$. WLOG assume $P_1$ sets $S_\ell (i, j) = 0$ and $P_2$ sets $S_\ell (i, j) = 1$. First consider the case when $\ell = 1$. Then since the FBDD follows the unit clause rule we conclude that $P_2$ and hence $P_1$ sets $R(i) = 0$ which implies $P_1$ is forbidden, which is a contradiction.  Hence $l \neq 1$.

Now let $l > 1$. Then the prime implicant $S_{\ell - 1}(i, j) S_\ell (i, j)$ is missing from $\lin_k [P_1] = F$; as no $1$-prime implicants can exist, this implies that $S_{\ell - 1} (i, j) = 0$ in $P_2$. As long as $\ell - 1 > 1$, by the definition of admissible paths we have $P_1$ cannot set $S_{\ell - 1} (i, j) = 0$ so reversing roles and inducting downwards we conclude that either $P_1$ sets $S_2 (i, j) = 0$ and $P_2$ cannot set $S_2(i, j)$ to $0$, or vice versa. WLOG assume that the former case occurs. This implies that $P_2$ must set $S_1 (i, j) = 0$, and by the same logic as above, this implies $P_1$ cannot set $R(i)$ to 0 and 
sets $S_1 (i, j) = S_2(i, j) = 0$. Therefore $P_1$ can potentially be an admissible path only by case 4 above.
%%%%%JERRY, PLEASE CHECK FROM HERE
Then going upward from $S_\ell(i, j)$ to $T(j)$, assuming $\ell > 1$ is odd (and $k$ is even), $S_k(i, j) = 0$ on $P_2$ and $S_{k-1}(i, j) = 0$ on $P_1$.
Therefore, $P_1$ cannot set $S_k(i, j)$ to 0 and must set $T(j)$ to 0, which violates the condition in case 4. 
Now consider an even $\ell > 1$.
If $k = 2$, the only scenario is $\ell= k = 2$. 
Then $S_k(i, j) = 0$ on $P_1$, and $S_k(i, j) = 1$ on $P_2$ by assumption, therefore $T(j)$ must be 0 on $P_2$
and also on $P_1$, which violates case 4. 
If $k > 2$ and $\ell$ is even,  $S_k(i, j) = 0$ on $P_1$, and $S_{k-1}(i, j) = 0$ on $P_2$, and since $k > 2$,  $S_k(i, j)$ cannot be set to 0 on $P_2$, so $T(j) = 0$ on $P_2$, and therefore on $P_1$, 
which violates case 4. 
\end{proof}

\end{appendix}

\end{sloppypar}
\end{document}